\def\bs{\boldsymbol}
\numberwithin{equation}{section}
\newtheorem{Theorem}{Theorem}[section]
 { \theoremstyle{definition}
\newtheorem{Notation}[Theorem]{Notation}
\newtheorem{Example}[Theorem]{Example}
\newtheorem{Remark}[Theorem]{Remark}
\newtheorem{Remarks}[Theorem]{Remarks}
 }
\begin{document}


\renewcommand{\thefootnote}{$\star$}

\newcommand{\arXivNumber}{1512.07833}

\renewcommand{\PaperNumber}{013}

\FirstPageHeading

\ShortArticleName{Separability in Riemannian Manifolds}

\ArticleName{Separability in Riemannian Manifolds\footnote{This paper is a~contribution to the Special Issue
on Analytical Mechanics and Dif\/ferential Geometry in honour of Sergio Benenti.
The full collection is available at \href{http://www.emis.de/journals/SIGMA/Benenti.html}{http://www.emis.de/journals/SIGMA/Benenti.html}}}

\Author{Sergio BENENTI}

\AuthorNameForHeading{S.~Benenti}

\Address{Dipartimento di Matematica, Universit\`a di Torino,  via Carlo Alberto 10, 10123 Torino, Italy}
\Email{\href{mailto:sergio.benenti@unito.it}{sergio.benenti@unito.it}}
\URLaddress{\url{http://www.sergiobenenti.it/}}

\ArticleDates{Received December 27, 2015, in f\/inal form January 19, 2016; Published online February 01, 2016}

\Abstract{An outline of the basic Riemannian structures underlying the separation of variables in the Hamilton--Jacobi equation of natural Hamiltonian systems.}

\Keywords{Riemannian geometry; Hamilton--Jacobi equation; separation of variables}
\Classification{70H20; 35F21; 58B21}

\begin{flushright}
\begin{minipage}{150mm}
{\it This paper was submitted to the Royal Society and accepted for publication in a~special volume dedicated to the State of the Art of the Separation of Variables $($2004$)$. However, this volume was never published due to the sudden death of the Editor, Vadim Kutznetsov, to whom I dedicate this work.}
\end{minipage}
\end{flushright}

\renewcommand{\thefootnote}{\arabic{footnote}}
\setcounter{footnote}{0}

\section{Introduction}\label{section1}

In the past years the study of {\it separable systems} (whose Hamilton--Jacobi equations can be integrated by separation of variables) has shown a remarkable development, also in relation with other kinds of integrable systems (bi-Hamiltonian systems, Lax systems). I~think that an outline of the Riemannian background of this theory may be useful for specialists as well as beginners. With the exception of the last section, we will conf\/ine our discussion to the {\it orthogonal separable systems} (called {\it St\"ackel systems}) and to a special class of St\"ackel systems, referred to as {\it L-systems}. Some of the theorems presented here are new. For some of the recalled theorems a new shorter proof is provided.
Let $Q$ be an $n$-dimensional Riemannian manifold with generic local coordinates $\underline q=(q^i)$ and (contravariant) metric tensor $\bs G=(g^{ij})$, which we assume to be positive-def\/inite, and let $T^*Q$ be the cotangent bundle of $Q$, with canonical coordinates $(\underline q,\underline p)=(q^i,p_i)$. We will deal with the additive separation of the Hamilton--Jacobi equations
\begin{gather*}
G(\underline q,\underline p)=E, \qquad
H(\underline q,\underline p)=E, \qquad  p_i=\partial_iW,
\end{gather*}
where $G=\tfrac 12 g^{ij} p_ip_j$ is the {\it geodesic Hamiltonian} on $T^*Q$, and   $H=G+V=\tfrac 12 g^{ij} p_ip_j+V$ is a~{\it natural Hamiltonian}, $V(\underline q)$  being the {\it potential energy}, a smooth function on $Q$ canonically lifted to a function on~$T^*Q$. A coordinate system $\underline q$ is called {\it separable} if the geodesic Hamilton--Jacobi equation $G=E$ admits a {\it complete solution} of the form
\begin{subequations}\label{(1.1)}
\begin{gather}
W(\underline q,\underline c)=\sum _{i=1}^n W_i\big(q^i,\underline c\big),\qquad \underline c=(c^i), \label{(1.1)1}\\
\det\left[\dfrac{\partial^2W}{\partial q^i\partial c^j}\right]\neq 0.\label{(1.1)2}
\end{gather}
\end{subequations}

Such a solution is called {\it separated solution}. Also note, that in these def\/initions the presence of a set of $n$ constants $\underline c$ satisfying the {\it completeness condition} \eqref{(1.1)2} is fundamental. Note that here we consider only {\it natural canonical coordinates}, where $\underline q$ are coordinates on the conf\/iguration manifold. Levi-Civita~\cite{LC-1904} proved that {\it a Hamilton--Jacobi equation $H(\underline q,\underline p)=E$ admits a~separated solution \eqref{(1.1)} if and only if the differential equations\footnote{Notation: $\partial_i=\partial/\partial q^i$, $\partial^i=\partial/\partial p_i$.}
\begin{gather*}
L_{ij}(H)\doteq \partial _i H\partial_jH \partial ^i\partial^jH+
\partial^iH \partial ^j \partial _i\partial_jH-
\partial_iH \partial ^jH \partial ^i\partial _jH-
\partial^iH \partial _jH \partial _i\partial ^jH=0
\end{gather*}
are identically satisfied}. These are known as the {\it separability conditions} or {\it separability equations of Levi-Civita}. They provide not only a~simple method for testing whether a~coordinate system is separable or not, but also the basis for the geometrical (i.e., intrinsic) characterisation of the separation. A f\/irst (and well known) example of application is the following: the Levi-Civita equations for a natural Hamiltonian, $L_{ij}(G+V)=0$, are polynomial equations of fourth degree in the momenta $\underline p$, which must be identically satisf\/ied for all admissible values of these variables. It is easy to note that the fourth-degree homogeneous part of these equations is~$L_{ij}(G)=0$. This means that: (i)~the separation of the geodesic equation is a~necessary condition for the separation of equation $G+V=E$; (ii)~the study of the geodesic separation plays a prominent role, (iii) the above-given def\/inition of {\it separable coordinates} makes sense.

A special but fundamental case in this theory is the {\it orthogonal separation}, where the coordinates are assumed to be orthogonal, $g^{ij}=0$ for $i\neq j$. In this case, examined f\/irstly by St\"ackel~\cite{Stackel-1893,Stackel-1897}, later on by Levi-Civita~\cite{LC-1904}, Eisenhart~\cite{Eisenhart-1934,Eisenhart-1949}, and more recently by many authors, the Levi-Civita equations $L_{ij}(G)=0$ are equivalent to equations
\begin{gather}\label{(1.2)}
S_{ij}\big(g^{kk}\big)=0, \qquad i\neq j,
\end{gather}
where $S_{ij}(\cdot)$ denote the {\it St\"ackel operators} associated with an orthogonal metric $(g^{ii})$.
For any smooth function $V$ on $Q$, it is def\/ined by
\begin{gather*}
S_{ij}(V)\doteq \partial_i\partial_j V-\partial_i\ln g^{jj}\partial_j V-\partial_j\ln g^{ii}\partial_i V,  \qquad i\neq j.
\end{gather*}
The Levi-Civita equations $L_{ij}(G+V)=0$ are equivalent to
\begin{gather}
S_{ij}\big(g^{kk}\big)=0, \qquad S_{ij}(V)=0.
\label{(1.3)}
\end{gather}

\section{Killing tensors}\label{section2}

As shown by Eisenhart \cite{Eisenhart-1934,Eisenhart-1949} (for the orthogonal case) and by Kalnins and Miller \cite{KM-1980,KM-1981}, the geodesic separation is related to the existence of Killing vectors and Killing tensors of order two. In this section we recall the basic properties of these objects.
The contravariant symmetric tensors $\bs K=(K^{i\ldots j})$ on $Q$ are in one-to-one correspondence with homogenous polynomials on~$T^*Q$,
\begin{gather*}
{\bs K}=\big(K^{i\ldots j}\big)\ \longleftrightarrow \  P_{{\bs K}}=P(\bs K)=K^{i\ldots j} p_i\cdots p_j.
\end{gather*}
For a tensor of order zero, i.e., a function $f$ on $Q$, we def\/ine $P_f\doteq f$, where $f$ is canonically lifted to $T^*Q$ (by constant values on the f\/ibers). The space of these polynomial functions is closed with respect to  the canonical Poisson bracket
\begin{gather*}
\{A,B\}\doteq \partial^iA \partial_iB-\partial^iB \partial_iA.
\end{gather*}
Hence, on the space of the symmetric contravariant tensors we def\/ine a Lie-algebra structu\-re~$[\cdot,\cdot]$ by setting
\begin{gather*}
P([{\bs K}_1,{\bs K}_2])=\{P({\bs K}_1),P({\bs K}_2)\},
\end{gather*}
and the {\it symmetric product} $\odot$ by setting
\begin{gather*}
P({\bs K}_1\odot{\bs K}_2)=P({\bs K}_1)\cdot P({\bs K}_2).
\end{gather*}
Note that all the above-given def\/initions do not depend on a metric tensor.
If a metric tensor~$\bs G$ is present, then we say that $\bs K$ is a {\it Killing tensor} (KT) if $P(\bs K)$ is in involution with $P(\bs G)=2G$,
\begin{gather}
\{P(\bs K),P(\bs G)\}=0\ \Longleftrightarrow\  [\bs K,\bs G]=0.
\label{(2.1)}
\end{gather}
This means that $P(\bs K)$ is a f\/irst integral of the geodesic f\/low.
In the special case of a function~$f$, this def\/inition is equivalent to $\nabla f=0$ (by $\nabla f$ we denote the {\it gradient} of a function~$f$). A~vector f\/ield~$\bs X$ is a {\it Killing vector}, $[\bs X,\bs G]=0$, if and only if its f\/low preserves the metric.

Let us consider the case of a symmetric 2-tensor $\bs K$. Since a metric tensor is present, the boldface object~$\bs K$ can be represented in components as a tensor of type $(2,0)$, $(1,1)$ and $(0,2)$, respectively, $\bs K= (K^{ij})=(K^i_j)=(K_{ij})$.

As a symmetric tensor of type $(1,1)$, $\bs K$ def\/ines an endomorphism on the space $\cal X(Q)$ of the (smooth) vector f\/ields on~$Q$ and an endomorphism on the space $\Phi^1(Q)$ of the (smooth) 1-forms on $Q$. We will denote by
$\bs K\bs X$ the vector f\/ield image of $\bs X\in\cal X(Q)$ by $\bs K$, and by $\bs K\phi$ the 1-form
image of $\phi\in\Phi^1(Q)$ by $\bs K$. This means that $\bs K\bs X=K^i_jX^j \partial_i$, $\bs K\phi= K^i_j\phi_i\,dq^j$.
Note that the metric tensor $\bs G$ coincide with the identity operator $\bs I$,
whose $(1,1)$ components are given by the Kronecker symbol $\delta^i_j$.
Then a 2-tensor $\bs K$ gives rise to eigenvalues, eigenvectors or eigenforms, according to equations
$\bs K\bs X=\rho\bs X$, $\bs K\phi=\rho \phi$. We recall that, in a positive-def\/inite metric, (i)~all symmetric tensors have real eigenvalues; (ii)~the {\it algebraic multiplicity} of an eigenvalue $\rho$ (i.e., its order as a root of the characteristic equation $\det(\bs K-\rho\bs G)=0$) is equal to its {\it geometrical multiplicity}
(i.e., the dimension of the space of the corresponding eigenvectors, or eigenforms); the eigenspaces corresponding to distinct eigenvalues are orthogonal. We will denote by $\bs K_1\bs K_2$ the product of the two endomorphisms $\bs K_1$ and $\bs K_2$; in components $(\bs K_1\bs K_2)^{ij}=K_1^{ih}K_{2h}^{\;j}$. The {\it algebraic commutator} of the two tensors will be denoted by
\begin{gather*}
[ \! [\bs K_1,\bs K_2] \! ]\doteq \bs K_1\bs K_2-\bs K_2\bs K_1.
\end{gather*}
If a symmetric 2-tensor $\bs K$ can be diagonalised in orthogonal coordinates, $K^{ij}=0$ for $i\neq j$, then $K^{ii}=\rho^i g^{ii}$, where $(\rho^i)$ are the eigenvalues of $\bs K$. By writing the {\it Killing equation}~\eqref{(2.1)} in these coordinates, we see that {\it $\bs K$ is a KT if and only if
equations
\begin{gather}
\partial_i\rho^j=\big(\rho^i-\rho^j\big) \partial_i \ln g^{jj}.
\label{(2.2)}
\end{gather}
are satisf\/ied by the eigenvalues}. These equations have been called  {\it Killing--Eisenhart equations} in \cite{BCR-2002a} since they have been extensively used by Eisenhart~\cite{Eisenhart-1949}. However, they appear in an earlier paper by Levi-Civita \cite[p.~285]{LC-1896}.

\section {Killing--St\"ackel spaces}\label{section3}

Equations \eqref{(2.2)} can be interpreted as a linear system of~$n$ f\/irst-order partial dif\/ferential equations in normal form, in the $n$ unknown functions $\rho^i(\underline q)$. It is a remarkable fact that the integrability conditions assume of the form
\begin{gather*}
\big(\rho^i-\rho^j\big) S_{ij}\big(g^{kk}\big)=0.
\end{gather*}
Then their link with the orthogonal separation is at once clear. A second, and even more remarkable property, is that the unknown functions~$\rho^i$ appear in the integrability conditions through their dif\/ferences $\rho^i-\rho^j$. This means that
{\it if the system~\eqref{(2.2)} admits a solution such that $\rho^i\neq\rho^j$, then it is completely integrable}. Note that for a linear system the converse is always true. Going back to the orthogonal separability conditions~\eqref{(1.2)} we can immediately conclude that: (I)~{\it a system of orthogonal coordinates is separable if and only if there exists a KT which is diagonalised in these coordinates and which has pointwise simple eigenvalues}. Furthermore, since the system~\eqref{(2.2)} is linear, if it is completely integrable then it admits a~$n$-dimensional space of solutions (and its converse). As a~consequence: (II)~{\it a~Killing tensor~$\bs K$ which has simple eigenvalues and is diagonalised in orthogonal coordinates generates a~$n$-dimensional spa\-ce~$\cal K$ of Killing tensors which are all diagonalised in the same coordinates}. Such a space will be called {\it Killing--St\"ackel space} (KS-space). In the space of direct sums of Killing tensors, ${\bs K}=c \oplus \bs K_1\oplus \bs K_2\oplus \cdots \oplus \bs K_n\oplus\cdots$ endowed with the Lie bracket $[\cdot,\cdot]$ def\/ined above, a~KS-space, which is made of elements $0\oplus 0\oplus \bs K_2\oplus 0\oplus\cdots$, is an involutive subalgebra. For this reason it has also been called {\it Killing--St\"ackel algebra} in~\cite{BCR-2002a,BCR-2002b}.

Three remarks are in order: (i)~{\it the metric tensor belongs to any KS-space}~-- indeed, $\rho^i=1$ is a trivial solution of the system \eqref{(2.2)}; (ii)~{\it if two KS-spaces have an element with simple eigenvalues in common,
then they coincide}; (iii)~{\it all elements of a KS-space are in involution}
(if equations~\eqref{(2.2)} are satisf\/ied for two tensors $\bs K_1$ and $\bs K_2$, then $\{P(\bs K_1),P(\bs K_2)\}=0$).

All the above properties have a local character and are related to a coordinate system. We remark, however, that they are more precisely related to an equivalence class of orthogonal systems, being equivalent to two systems of coordinates $\underline q$ and $\underline q'$ simply related by a {\it separated transformation} or a {\it rescaling}: $q^i=q^i(q^{i'})$.

We look for a coordinate-free description of all this matter. To this end we recall some basic concepts.

A {\it frame} on a dif\/ferentiable manifold $Q$ (not necessarily Riemannian) is a set of vector f\/ields~$(\bs X_i)$ which form a basis of the tangent space $T_qQ$ at each point $q$ of their domain of def\/inition. In general, frames exist only locally. Global frames are def\/ined if and only if the manifold is parallelisable, i.e., when $TQ\simeq Q\times \mathbb R^n$. Two frames $(\bs X_i)$ and $(\bs X'_i)$ are said to be {\it equivalent} if there are nowhere vanishing functions $f_i$ such that $\bs X_i=f_i\bs X'_i$. A frame is called {\it holonomic} or {\it integrable} if it is equivalent to a {\it natural frame} $(\partial_i)$ associated with coordinates~$(q^i)$.
A basic property is  (cf.\ \cite{Schouten-1954} and \cite{BCR-2002a})

\begin{Theorem}\label{t:3.1}
The three following conditions are equivalent:
$(i)$~the frame $(\bs X_i)$ is holonomic, $(ii)$~for each pair of indices $(i,j)$ the distribution spanned by the vectors $\bs X_i$ and $\bs X_j$ is completely integrable, $(iii)$~for each index $i$ the distribution spanned by the $n-1$ vectors $\bs X_j$ for $j\neq i$ is completely integrable.
\end{Theorem}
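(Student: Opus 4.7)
The plan is to establish the circular implications $(i)\Rightarrow(ii)\Rightarrow(iii)\Rightarrow(i)$, with Frobenius' theorem (in its codimension-one form) doing the real work only in the last step.

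For $(i)\Rightarrow(ii)$: after reducing to a coordinate system in which $\bs X_i = f_i \partial_i$, the two-plane $\mathrm{span}\{\bs X_i,\bs X_j\}$ agrees with $\mathrm{span}\{\partial_i,\partial_j\}$, which is tangent to the coordinate slices $\{q^k=\mathrm{const}:k\neq i,j\}$ and hence completely integrable.

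For $(ii)\Rightarrow(iii)$, let $D_i=\mathrm{span}\{\bs X_j:j\neq i\}$. I would check involutivity by computing brackets of generators: for $j,k\neq i$, hypothesis $(ii)$ gives $[\bs X_j,\bs X_k]\in\mathrm{span}\{\bs X_j,\bs X_k\}\subset D_i$. No new integration is needed. (The reverse implication $(iii)\Rightarrow(ii)$ is equally easy, from $\mathrm{span}\{\bs X_j,\bs X_k\}=\bigcap_{i\neq j,k}D_i$, but the circular route makes it unnecessary.)

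The substantive step is $(iii)\Rightarrow(i)$. Each $D_i$ is an involutive distribution of codimension one, so Frobenius yields, locally, a smooth function $q^i$ with $\ker dq^i=D_i$ and $dq^i\neq 0$. This means $dq^i(\bs X_j)=0$ for $j\neq i$ and $f_i:=dq^i(\bs X_i)\neq 0$, so the matrix $[dq^i(\bs X_j)]$ is diagonal with non-vanishing diagonal entries. Consequently the $n$ differentials $dq^i$ are pointwise linearly independent, so $(q^i)$ is a local chart. Expanding $\bs X_i=a_i^k \,\partial/\partial q^k$ and pairing with $dq^l$ gives $a_i^l=f_i \delta^l_i$, that is $\bs X_i=f_i\,\partial/\partial q^i$, the required equivalence with a natural frame.

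I expect the only real obstacle to be this last step: Frobenius produces a defining function $q^i$ for each codimension-one foliation \emph{separately}, and one must then certify that the $n$ functions so obtained are jointly functionally independent. The rescue is precisely the dual relation $dq^i(\bs X_j)\propto\delta^i_j$ forced by $\ker dq^i=D_i$, which makes joint independence automatic and simultaneously identifies the coordinate basis $\partial/\partial q^i$ as a rescaling of $\bs X_i$. This is also why it is more economical to route the circle through $(iii)$ rather than attempting $(ii)\Rightarrow(i)$ directly from the $\binom{n}{2}$ two-plane foliations, whose defining submersions would be into $\mathbb{R}^{n-2}$ rather than $\mathbb{R}$.
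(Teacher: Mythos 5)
Your proof is correct and complete: the cyclic scheme $(i)\Rightarrow(ii)\Rightarrow(iii)\Rightarrow(i)$, with the codimension-one Frobenius theorem supplying the local functions $q^i$ and the dual relation $dq^i(\bs X_j)=f_i\delta^i_j$ (forced by $\ker dq^i=D_i$ and the frame condition $\bs X_i\notin D_i$) certifying both the joint independence of the $q^i$ and the identification $\bs X_i=f_i\,\partial/\partial q^i$, is exactly the standard argument. The paper itself gives no proof of this theorem, referring instead to Schouten and to Benenti--Chanu--Rastelli, so there is nothing in the text to compare against; your route coincides with the classical one in those references, and your closing remark correctly identifies why passing through the codimension-one distributions $(iii)$, rather than the two-plane distributions $(ii)$, is the efficient way to close the cycle.
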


On a Riemannian manifold a vector f\/ield $\bs X$ is called {\it normal} if it is {\it orthogonally integrable} or {\it surface forming}, i.e., if it is orthogonal to a family of hypersurfaces. In a positive-def\/inite metric a symmetric tensor $\bs K$ with simple eigenvalues and normal eigenvectors gives rise to and equivalence class of holonomic orthogonal frames hence, to an equivalence class of orthogonal coordinates. Then we get the following simple intrinsic characterisation of the orthogonal geodesic separation \cite[Corollary~4, Section~3]{KM-1980} (see also~\cite{Benenti-1993}):

\begin{Theorem} \label{t:3.2}
The geodesic Hamilton--Jacobi equation is separable in orthogonal coordinates if and only if there exists a Killing $2$-tensor with simple eigenvalues and normal eigenvectors.
\end{Theorem}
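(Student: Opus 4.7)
My plan is to deduce the theorem from Statement (I) of Section~3 together with Theorem~\ref{t:3.1}, with \emph{normality of eigenvectors} providing the bridge between the purely algebraic/tensorial data and the existence of separable coordinates.

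For the ``only if'' direction I would start from orthogonal separable coordinates $(q^i)$. Statement~(I) already supplies a Killing 2-tensor $\bs K$ diagonalised in these coordinates and having pointwise simple eigenvalues. In orthogonal coordinates the natural frame $(\partial_i)$ is orthogonal and each $\partial_i$ is orthogonal to the coordinate hypersurface $q^i=\text{const}$, so each $\partial_i$ is normal. Since the $\partial_i$ are eigenvectors of $\bs K$ (diagonal form), $\bs K$ has normal eigenvectors, as required.

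For the ``if'' direction, suppose $\bs K$ is a KT with simple eigenvalues and normal eigenvectors. Because the metric is positive definite and the eigenvalues are pointwise distinct, the eigenspaces are one-dimensional and mutually orthogonal, so they determine (up to sign and nowhere vanishing rescaling) a unique orthogonal frame $(\bs X_i)$ of eigenvectors. The key step is to prove that this frame is holonomic. I would invoke characterisation~(iii) of Theorem~\ref{t:3.1}: for each $i$, the $(n-1)$-plane distribution spanned by $\{\bs X_j:\ j\neq i\}$ is exactly the orthogonal complement of $\bs X_i$, and by hypothesis $\bs X_i$ is normal, i.e., its orthogonal complement is a completely integrable distribution tangent to a family of hypersurfaces. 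Hence the criterion of Theorem~\ref{t:3.1} applies and $(\bs X_i)$ is equivalent to a natural frame $(\partial_i)$ of coordinates $(q^i)$. These coordinates are orthogonal (the frame is orthogonal) and, since the $\partial_i$ are proportional to the eigenvectors $\bs X_i$, they diagonalise $\bs K$. Statement~(I) then yields that $(q^i)$ is a separable coordinate system.

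The main obstacle is the holonomy step in the converse: one must translate ``normal eigenvector'' (a differential, Frobenius-type integrability condition on one 1-form at a time) into simultaneous integrability of the $n$ codimension-one distributions required by Theorem~\ref{t:3.1}. The simplicity of the eigenvalues is essential here, both to get canonically defined one-dimensional eigenspaces (so that the frame $(\bs X_i)$ is well defined up to rescaling) and to ensure that the orthogonal complement of $\bs X_i$ is exactly $\mathrm{span}\{\bs X_j:j\neq i\}$; once that identification is made, the passage from the hypothesised normality to complete integrability of each codimension-one distribution, and hence to holonomy of the full frame via Theorem~\ref{t:3.1}, is immediate.
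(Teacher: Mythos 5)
Your proposal is correct and follows essentially the same route as the paper: the paper derives Theorem~\ref{t:3.2} precisely by combining statement~(I) of Section~\ref{section3} with the observation (resting on Theorem~\ref{t:3.1}) that a symmetric tensor with simple eigenvalues and normal eigenvectors determines an equivalence class of holonomic orthogonal frames, hence orthogonal coordinates diagonalising it. Your identification of $\mathrm{span}\{\bs X_j : j\neq i\}$ with the orthogonal complement of $\bs X_i$, so that normality gives exactly condition~(iii) of Theorem~\ref{t:3.1}, is the intended bridge.
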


A KT having these properties will be called a {\it characteristic Killing tensor} (ChKT). The eigenvectors generates a family of~$n$ orthogonal foliations  of manifolds of codimension~1, which we call {\it St\"ackel web}. Any coordinate system~$(q^i)$ such that the web is locally described by equations $q^i={\rm const}$ (this is equivalent to say that~$dq^i$ are eigenforms of the ChKT),
is separable.

In accordance with the remarks above, a ChKT generates a KS-space. A suitable coordinate-independent def\/inition of this concept is the following: a {\it St\"ackel space} on a Riemannian mani\-fold~$Q_n$ is a $n$-dimensional linear space ${\cal K}_n$ of Killing 2-tensors whose elements (a) {\it commute as linear operators}, $[ \! [\bs K_1,\bs K_2] \! ]=0$, and (b) {\it are in involution}, $[\bs K_1,\bs K_2]=0$. Indeed, in the algebraic realm it can be proved that in such a space there exists an element with pointwise distinct eigenvalues (in the neighborhood of any given point of the domain of def\/inition of ${\cal K}_n$); as a consequence, the commutation relation (a), applied to such a tensor $\bs K_1$, shows that all elements have common eigenvectors. Furthermore, from (b) it follows that

\begin{Theorem}\label{t:3.3}
If $n$ independent KT's in involution have the same eigenvectors, then these eigenvectors are normal.
\end{Theorem}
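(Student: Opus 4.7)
The plan is to fix a common orthonormal eigenframe $(\bs E_i)$ of the $\bs K_\alpha$ -- the normalisation is harmless, since normality of an eigenvector depends only on the line it spans -- and to show that, for every $i$, the orthogonal hyperplane distribution $\bs E_i^\perp = \mathrm{span}\{\bs E_j:j\neq i\}$ is Frobenius-integrable, which is precisely the content of normality. If $[\bs E_j,\bs E_k]=c^l_{jk}\bs E_l$ are the structure functions of the (generally anholonomic) frame, this reduces to proving $c^i_{jk}=0$ whenever $i,j,k$ are pairwise distinct.

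To bring the involution hypothesis to bear I would work on $T^*Q$ with the ``frame momenta'' $\hat p_i := p(\bs E_i)$, which are fibrewise linearly independent and satisfy the identities $\{\hat p_i,\hat p_j\}=c^k_{ij}\hat p_k$ and $\{\hat p_i,f\}=\bs E_i(f)$ for $f\in C^\infty(Q)$. In the orthonormal eigenframe, $P(\bs K_\alpha)=\sum_i\rho^i_\alpha\,\hat p_i^{\,2}$, so expanding $\{P(\bs K_\alpha),P(\bs K_\beta)\}=0$ by Leibniz produces a homogeneous cubic polynomial in the $\hat p_i$ that must vanish identically, forcing every totally symmetrised coefficient to vanish.

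The key observation is that the ``mixed'' contributions coming from $\{\hat p_i^{\,2},\rho^j_\beta\}$ and $\{\rho^i_\alpha,\hat p_j^{\,2}\}$ generate only monomials of the form $\hat p_i\hat p_j^{\,2}$, i.e.\ with a repeated index; the pure monomial $\hat p_a\hat p_b\hat p_c$ with $a,b,c$ pairwise distinct is therefore carried entirely by $\{\hat p_i^{\,2},\hat p_j^{\,2}\}=4\hat p_i\hat p_j c^k_{ij}\hat p_k$. A short permutation count reduces the vanishing of that coefficient to
\begin{gather*}
c^c_{ab}\big(\rho^a_\alpha\rho^b_\beta-\rho^b_\alpha\rho^a_\beta\big)
+c^b_{ac}\big(\rho^a_\alpha\rho^c_\beta-\rho^c_\alpha\rho^a_\beta\big)
+c^a_{bc}\big(\rho^b_\alpha\rho^c_\beta-\rho^c_\alpha\rho^b_\beta\big)=0
\end{gather*}
for every pair $(\alpha,\beta)$. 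Introducing the columns $R_i:=(\rho^i_\alpha)_{\alpha=1}^n\in\mathbb R^n$, this says precisely that the bivector $c^c_{ab}\,R_a\wedge R_b+c^b_{ac}\,R_a\wedge R_c+c^a_{bc}\,R_b\wedge R_c$ vanishes in $\wedge^2\mathbb R^n$.

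Finally, linear independence of $\bs K_1,\ldots,\bs K_n$ as tensors is equivalent to the invertibility of the $n\times n$ eigenvalue matrix $(\rho^i_\alpha)$, so the three columns $R_a,R_b,R_c$ are linearly independent in $\mathbb R^n$ and the three bivectors $R_a\wedge R_b$, $R_a\wedge R_c$, $R_b\wedge R_c$ are linearly independent in $\wedge^2\mathbb R^n$. This forces $c^c_{ab}=c^b_{ac}=c^a_{bc}=0$ for every distinct triple, which is the required condition; equivalently, by Theorem \ref{t:3.1}, the eigenframe is holonomic and each $\bs E_i$ is normal. The main obstacle is the clean expansion of $\{P(\bs K_\alpha),P(\bs K_\beta)\}=0$ in the anholonomic frame $(\bs E_i)$ and the isolation of the ``three distinct indices'' coefficient; once that is in hand, the Plücker-independence step at the end is essentially a one-liner.
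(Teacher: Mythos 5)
Your argument is correct, and it is essentially the standard proof: the paper itself gives no proof of Theorem~\ref{t:3.3} (it defers to \cite{BCR-2002a} and \cite{KM-1980}, where Theorem~\ref{t:3.4} is proved by exactly this computation in the anholonomic eigenframe), so you have reconstructed the intended argument rather than found a new one. The expansion of $\{P(\bs K_\alpha),P(\bs K_\beta)\}$ in the frame momenta, the observation that only $\{\hat p_i^{\,2},\hat p_j^{\,2}\}=4c^k_{ij}\hat p_i\hat p_j\hat p_k$ feeds the monomials with three distinct indices, and the conclusion via independence of the bivectors $R_a\wedge R_b$, $R_a\wedge R_c$, $R_b\wedge R_c$ followed by Theorem~\ref{t:3.1}(iii) are all sound. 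The only point worth making explicit is that ``independent'' must be read as \emph{pointwise} linear independence (so that the eigenvalue matrix $(\rho^i_\alpha)$ is invertible at each point of the domain, as in the condition $\det[\rho^i_a]\neq 0$ of equation~\eqref{(8.9)}); mere linear independence over $\mathbb R$ of the tensor fields would not give invertibility at every point, and the theorem would then only follow on the open set where the matrix has full rank.
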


This remarkable property was f\/irstly discovered by Kalnins and Miller \cite{KM-1980}. However, it is also a remarkable fact that in this last theorem the assumption that the independent tensors are KTs is redundant. In fact, it can be proved that

\begin{Theorem}\label{t:3.4}
An orthogonal frame made of common eigenvectors of $n$ independent symmetric $2$-tensors in involution is holonomic $($the eigenvectors are normal$)$.
\end{Theorem}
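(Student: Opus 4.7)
Since the statement is invariant under pointwise rescaling $\bs X_i\mapsto f_i\bs X_i$ of the frame vectors, one may assume at the outset that $(\bs X_i)$ is orthonormal. In this frame every $\bs K_a$ takes the diagonal form $\bs K_a=\sum_i\rho_a^i\,\bs X_i\otimes\bs X_i$, and the linear independence of the tensors is equivalent to the invertibility of the $n\times n$ eigenvalue matrix $[\rho_a^i]$ at the generic point. By Theorem~\ref{t:3.1}(iii) it suffices to prove that the anholonomy coefficients $c^\ell_{ij}$ defined by $[\bs X_i,\bs X_j]=c^\ell_{ij}\bs X_\ell$ satisfy $c^i_{jk}=0$ whenever $i,j,k$ are three distinct indices, for then each distribution spanned by $\{\bs X_j:j\ne i\}$ is integrable and each $\bs X_i$ is normal.

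The plan is to extract this vanishing from the $\binom{n}{2}$ involution conditions $\{P(\bs K_a),P(\bs K_b)\}=0$ by a bracket computation carried out in the anholonomic frame. On $T^*Q$ I introduce the fibre-linear functions $p_i(\alpha)\doteq\alpha(\bs X_i)$, so that $P(\bs K_a)=\tfrac12\sum_i\rho_a^i\,p_i^2$. The two elementary identities required are $\{p_i,f\}=\bs X_i(f)$ for a base function $f$ and $\{p_i,p_j\}=c^\ell_{ij}p_\ell$, the latter being the only source of anholonomy in what follows. A direct expansion then yields a cubic polynomial in the $p_i$ with two distinguishable pieces: one supported on monomials of type $p_ip_j^2$, whose coefficients involve the derivatives $\bs X_i(\rho_b^j)$, and one supported on monomials $p_ip_jp_k$ with three distinct indices, whose coefficients are built solely from the $\rho_a^i$ and the $c^\ell_{ij}$. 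Since these monomial types are disjoint, each piece must vanish separately; collecting the second and exploiting the antisymmetry of $c^\ell_{ij}$ in its lower indices yields, for every triple of distinct $i,j,k$ and every pair $(a,b)$,
\begin{equation*}
D^{ab}_{ij}\,c^k_{ij}+D^{ab}_{ik}\,c^j_{ik}+D^{ab}_{jk}\,c^i_{jk}=0,\qquad D^{ab}_{pq}\doteq\rho_a^p\rho_b^q-\rho_a^q\rho_b^p.
\end{equation*}

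What is left is linear algebra. Invertibility of $[\rho_a^i]$ implies in particular that any three of its columns are linearly independent in $\mathbb R^n$, equivalently that the $n$ row-triples $(\rho_a^i,\rho_a^j,\rho_a^k)\in\mathbb R^3$ span $\mathbb R^3$. The coefficient vectors $(D^{ab}_{ij},D^{ab}_{ik},D^{ab}_{jk})$ of the linear system are, up to sign, the Pl\"ucker coordinates of the wedge products $\vec\rho_a\wedge\vec\rho_b$, so as $(a,b)$ ranges these coefficient vectors span $\Lambda^2\mathbb R^3\simeq\mathbb R^3$. Hence the three unknowns $c^k_{ij},c^j_{ik},c^i_{jk}$ vanish, which is the desired conclusion.

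The delicate step is the bracket computation in an anholonomic frame: the cubic piece $\rho_a^i\rho_b^jc^k_{ij}p_ip_jp_k$ is produced exclusively by $\{p_i,p_j\}\ne0$ and would be entirely absent in a coordinate frame, so it must be tracked with some care. Observe finally that the Killing hypothesis is never invoked; the vanishing of the $p_ip_j^2$ piece would yield additional relations on the $\rho_a^i$ (essentially Killing--Eisenhart-type constraints) that the present argument does not need, which is precisely the sense in which the Killing assumption of Theorem~\ref{t:3.3} is redundant.
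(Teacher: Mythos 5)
The paper itself gives no proof of Theorem~\ref{t:3.4}; it defers entirely to \cite{BCR-2002a}, so there is nothing in the text to compare you against line by line. Your argument is correct and is, in substance, the anholonomic-frame Poisson-bracket computation that the cited reference carries out, though your packaging of the final step via Pl\"ucker coordinates is cleaner than what one usually sees. The key identity checks out: writing $P(\bs K_a)=\sum_i\rho_a^i p_i^2$ with $p_i=\alpha(\bs X_i)$ and using $\{p_i,p_j\}=c^\ell_{ij}p_\ell$, the coefficient of the monomial $p_ip_jp_k$ with three distinct indices in $\{P(\bs K_a),P(\bs K_b)\}$ is exactly $4\bigl(D^{ab}_{ij}c^k_{ij}+D^{ab}_{ik}c^j_{ik}+D^{ab}_{jk}c^i_{jk}\bigr)$, these monomials cannot cancel against the $p_ip_j^2$ and $p_i^3$ terms, and the spanning argument for $\Lambda^2\mathbb R^3$ (any three columns of an invertible $[\rho_a^i]$ are independent, hence the row-triples span $\mathbb R^3$, hence their wedges span $\Lambda^2\mathbb R^3$) legitimately kills $c^k_{ij}$ for all distinct triples; Theorem~\ref{t:3.1}(iii) then gives holonomy and normality. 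Two small points worth making explicit: ``independent'' must be read as pointwise linear independence of the tensors (this is what the KS-space context intends, and it is what makes $[\rho_a^i]$ invertible at each point, or at least on a dense open set from which $c^k_{ij}=0$ extends by continuity); and your closing observation that the Killing hypothesis is never used is exactly the content of the paper's remark that the KT assumption in Theorem~\ref{t:3.3} is redundant, so your proof does establish the theorem in the generality stated.
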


For a detailed discussion and proof see \cite{BCR-2002a}. As a consequence, we have a second intrinsic characterisation of the orthogonal geodesic separation (compare with \cite[Theorem~6, Section~3]{KM-1980} and \cite{Shapovalov-1981}; note that in~\cite[Theorem~6(4)]{KM-1980}  turns out to be redundant):

\begin{Theorem}\label{t:3.5}
The geodesic Hamilton--Jacobi equation is separable in orthogonal coordinates if and only if the Riemannian manifold admits a KS-space, i.e., a~$n$-dimensional linear space ${\cal K}$ of Killing tensors commuting as linear operators and in involution.
\end{Theorem}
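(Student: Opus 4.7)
The plan is to prove the two directions by stitching together the intrinsic characterisation of Theorem~\ref{t:3.2} with the algebraic properties of KS-spaces summarised in Section~\ref{section3}. Throughout, a~``KS-space'' will mean exactly an $n$-dimensional linear space of Killing $2$-tensors whose elements pairwise commute as endomorphisms and pairwise Poisson-commute.

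For the forward implication, assume the geodesic Hamilton--Jacobi equation is orthogonally separable. By Theorem~\ref{t:3.2} there exists a characteristic Killing tensor $\bs K$, i.e., a~KT with simple eigenvalues and normal eigenvectors. In the orthogonal coordinates $(q^i)$ adapted to the corresponding St\"ackel web, the eigenvalues $\rho^i$ of $\bs K$ satisfy the Killing--Eisenhart system~\eqref{(2.2)}. Because $\bs K$ has pointwise simple eigenvalues, the observation following~\eqref{(2.2)} implies that the integrability conditions $(\rho^i-\rho^j)S_{ij}(g^{kk})=0$ force $S_{ij}(g^{kk})=0$, and so the linear system~\eqref{(2.2)} is completely integrable. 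Its solution space is $n$-dimensional and yields an $n$-dimensional family of Killing tensors, all diagonalised in the same orthogonal frame. These tensors automatically commute as endomorphisms (they are simultaneously diagonal) and, by remark~(iii) of Section~\ref{section3}, are pairwise in involution. This is the required KS-space.

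For the converse, suppose $\mathcal K$ is a KS-space on $Q_n$. The key algebraic input is the statement recalled after the definition of St\"ackel space: in any such $\mathcal K$ one can find an element $\bs K_1\in\mathcal K$ whose eigenvalues are pointwise distinct on a neighbourhood of any given point. Granting this, the commutation relation $[\![\bs K_1,\bs K_j]\!]=0$ for all $\bs K_j\in\mathcal K$ forces every other element of $\mathcal K$ to be diagonalised in the (unique) orthogonal eigenframe of $\bs K_1$. We thus obtain $n$ independent symmetric $2$-tensors with a~common orthogonal eigenframe; since they are additionally in involution, Theorem~\ref{t:3.4} (or directly Theorem~\ref{t:3.3}, as they are KTs) shows that this eigenframe is holonomic, i.e., the eigenvectors are normal. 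Consequently $\bs K_1$ is a~KT with simple eigenvalues and normal eigenvectors, i.e., a characteristic Killing tensor, and Theorem~\ref{t:3.2} delivers orthogonal separability.

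The only step that is not a routine assembly of earlier results is the algebraic fact that an abstract KS-space always contains an element with pointwise simple eigenvalues; this is the genuine obstacle and is exactly where one uses both the dimension count ($\dim\mathcal K=n$) and the commutativity of the endomorphisms (to diagonalise the whole space simultaneously and then argue that, if every element had a repeated eigenvalue, linear combinations could not produce $n$ independent tensors). Once this lemma is granted, the proof reduces, as above, to an application of Theorems~\ref{t:3.2} and~\ref{t:3.4}.
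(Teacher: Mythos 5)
Your proposal is correct and follows essentially the same route as the paper: the forward direction via Theorem~\ref{t:3.2} together with the complete integrability of the linear Killing--Eisenhart system~\eqref{(2.2)} (yielding the $n$-dimensional space of simultaneously diagonalised, involutive Killing tensors), and the converse via the algebraic existence of an element with pointwise simple eigenvalues followed by Theorems~\ref{t:3.3}/\ref{t:3.4} and~\ref{t:3.2}. You also correctly identify the one non-routine ingredient --- that an abstract KS-space contains an element with simple eigenvalues --- which the paper likewise states without proof.
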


In the applications, one of these last two theorems can be used according to the convenience. By applying Theorem \ref{t:3.2} we have the advantage of dealing with a single KT, but dif\/f\/iculties may arise in testing if it has simple eigenvalues and normal eigenvectors. Nevertheless, for solving this problem we can use the following two ef\/fective criteria:

\begin{Theorem}\label{t:3.6}
A $(1,1)$ tensor $\bs K$ has distinct eigenvalues if and only if
\begin{gather*}
D\doteq \vmatrix
n & S_1 & \ldots & S_{n-1} \cr
S_1 & S_2 & \ldots & S_n \cr
\vdots & \vdots & \cdots & \vdots\cr
S_{n-1} & S_n & \ldots & S_{2n-2}
\endvmatrix\neq 0, \qquad S_p\doteq \operatorname{tr}\big(\bs K^p\big).
\end{gather*}
 \end{Theorem}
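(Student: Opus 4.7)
The plan is to reduce the statement to the classical identity that expresses the discriminant of a family of scalars as a Hankel determinant of their power sums. Because the metric is positive-definite, at any point $q$ the symmetric $(1,1)$-tensor $\bs K$ admits $n$ real eigenvalues $\rho_1,\dots,\rho_n$ (counted with algebraic = geometric multiplicity, as recalled in Section~\ref{section2}). Distinctness of the eigenvalues of $\bs K$ at $q$ is equivalent to
\begin{gather*}
\Delta \doteq \prod_{i<j}\big(\rho_i-\rho_j\big)^{2}\neq 0,
\end{gather*}
so the theorem amounts to showing $D=\Delta$.

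The key algebraic step is the Vandermonde factorisation. Introduce the $n\times n$ Vandermonde matrix
\begin{gather*}
V = \big(V_{ij}\big), \qquad V_{ij}\doteq \rho_j^{\,i-1},\qquad i,j=1,\dots,n.
\end{gather*}
A direct multiplication gives
\begin{gather*}
\big(VV^{\top}\big)_{ij}=\sum_{k=1}^{n}\rho_k^{\,i-1}\rho_k^{\,j-1}=\sum_{k=1}^{n}\rho_k^{\,i+j-2}=S_{i+j-2},
\end{gather*}
with the convention $S_0=n$. Hence the Hankel matrix defining $D$ coincides with $VV^{\top}$, and taking determinants yields
\begin{gather*}
D=\det\big(VV^{\top}\big)=\big(\det V\big)^{2}=\prod_{i<j}\big(\rho_i-\rho_j\big)^{2}=\Delta,
\end{gather*}
by the standard evaluation of the Vandermonde determinant.

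Having established $D=\Delta$, both directions of the theorem follow at once: $D\neq 0$ iff no two eigenvalues coincide, i.e., iff $\bs K$ has pointwise simple eigenvalues. The only point that requires a word of care is the reliance on the real spectral theorem for symmetric tensors in a positive-definite metric, so that the scalars $\rho_i$ entering the Vandermonde computation are genuinely well-defined (with the correct multiplicities) and the expressions $S_p=\operatorname{tr}(\bs K^{p})$ admit the power-sum interpretation used above; this is exactly the content of property~(ii) of the symmetric tensors quoted in Section~\ref{section2}, and it is the only non-purely-algebraic ingredient in the argument.
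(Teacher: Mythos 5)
Your proof is correct and takes essentially the same route as the paper: the paper gives no detailed argument, merely invoking Sylvester's classical theorem identifying the Hankel determinant of power sums with the discriminant of the characteristic equation, and your Vandermonde factorisation $D=\det(VV^{\top})=(\det V)^2=\prod_{i<j}(\rho_i-\rho_j)^2$ is precisely the standard proof of that cited fact. (One minor remark: the identity $D=\Delta$ is purely algebraic and holds with eigenvalues counted by algebraic multiplicity over $\mathbb C$, so even the appeal to the real spectral theorem is not strictly needed, though it is harmless in the symmetric positive-definite setting of the paper.)
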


Here $\bs K^p$ is the power $p=0,1,2,\ldots$ of the linear mapping $\bs K$. This theorem is a consequence of a classical theorem of Sylvester about the {\it discriminant} $D$ of an algebraic equation, here applied to the characteristic equation of $\bs K$.

\begin{Theorem}\label{t:3.7}
A symmetric tensor $\bs K$ with simple eigenvalues has normal eigenvectors if and only if
\begin{gather*}
H_{ab}^h   K^a_i K^b_j+2  H_{a[i}^b   K^a_{j]}   K^h_b+ H^a_{ij}  K^h_b K^b_a=0,
\end{gather*}
where $H$ is the Nijenhuis torsion of~$\bs K$,
\begin{gather*}
H_{ij}^h(\bs K)\doteq 2  K^a_{[i} \partial_{|a|}K_{j]}^h-2  K^h_a \partial_{[i}K^a_{j]}.
\end{gather*}
\end{Theorem}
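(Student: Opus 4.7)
The plan is to recognize the left-hand side of the displayed equation as the coordinate form of the \emph{Haantjes tensor} of $\bs K$, and then, working in the (local, orthogonal) eigenframe of $\bs K$, to show that its vanishing is equivalent to $[\bs X_i,\bs X_j]\in\mathrm{span}(\bs X_i,\bs X_j)$ for every pair. By Theorem~\ref{t:3.1}(ii) this is exactly the condition that the orthogonal eigenframe be holonomic, which on a Riemannian manifold means precisely that the eigenvectors are normal.

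First I would verify, by a direct index manipulation that uses only the antisymmetry $H^h_{ij}=-H^h_{ji}$ of the Nijenhuis torsion, that the expression in the statement is the component form of the $C^\infty$-bilinear tensor
\begin{gather*}
\mathcal H(\bs K)(\bs X,\bs Y)\doteq H(\bs K\bs X,\bs K\bs Y)+\bs K^2\,H(\bs X,\bs Y)-\bs K\,H(\bs K\bs X,\bs Y)-\bs K\,H(\bs X,\bs K\bs Y).
\end{gather*}
This purely algebraic step turns the statement into an intrinsic assertion about $\bs K$.

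Next I would pick a local orthogonal frame $(\bs X_i)$ of eigenvectors of $\bs K$, with $\bs K\bs X_i=\rho^i\bs X_i$ and all $\rho^i$ pairwise distinct. Expanding $H(\bs X_i,\bs X_j)$ with the derivation rule $[f\bs X,g\bs Y]=fg[\bs X,\bs Y]+f(\bs X g)\bs Y-g(\bs Y f)\bs X$ gives
\begin{gather*}
H(\bs X_i,\bs X_j)=(\bs K-\rho^i\bs I)(\bs K-\rho^j\bs I)[\bs X_i,\bs X_j]+(\rho^i-\rho^j)\big((\bs X_i\rho^j)\bs X_j+(\bs X_j\rho^i)\bs X_i\big),
\end{gather*}
while $C^\infty$-bilinearity of $H$ and $\bs K\bs X_i=\rho^i\bs X_i$ yield at once
\begin{gather*}
\mathcal H(\bs X_i,\bs X_j)=(\bs K-\rho^i\bs I)(\bs K-\rho^j\bs I)\,H(\bs X_i,\bs X_j).
\end{gather*}
The outer factor annihilates $\bs X_i$ and $\bs X_j$, so the ``parallel'' piece of $H(\bs X_i,\bs X_j)$ drops out; expanding $[\bs X_i,\bs X_j]=\sum_k c^k_{ij}\bs X_k$ in the eigenframe then produces
\begin{gather*}
\mathcal H(\bs X_i,\bs X_j)=\sum_{k\neq i,j} c^k_{ij}\,(\rho^k-\rho^i)^2(\rho^k-\rho^j)^2\,\bs X_k.
\end{gather*}

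Since the eigenvalues are simple, the scalar factors are nowhere zero, so $\mathcal H(\bs K)=0$ is equivalent to $c^k_{ij}=0$ whenever $k\neq i,j$, i.e., to $[\bs X_i,\bs X_j]\in\mathrm{span}(\bs X_i,\bs X_j)$ for every pair $(i,j)$; by Theorem~\ref{t:3.1}(ii) this is equivalent to holonomy of the orthogonal eigenframe, hence to normality of the eigenvectors. The hard part will be the opening computational step -- matching the four-term index expression with the intrinsic Haantjes tensor -- but once this identification is in place, all the geometric content of the theorem is concentrated in the identity $\mathcal H(\bs X_i,\bs X_j)=(\bs K-\rho^i\bs I)(\bs K-\rho^j\bs I)H(\bs X_i,\bs X_j)$, which automatically separates the integrability obstruction for the $2$-planes $\mathrm{span}(\bs X_i,\bs X_j)$ from the irrelevant pieces of $H$.
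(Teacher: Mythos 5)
Your argument is correct, but note that the paper does not actually prove Theorem~\ref{t:3.7}: it is quoted as a special case of a more general result of Haantjes, with the proof deferred to \cite{Haantjes-1955} and \cite{Schouten-1954}. Your proposal therefore supplies a self-contained proof where the paper gives none, and it is essentially the classical Haantjes argument specialized to the simple-eigenvalue case. The three key steps all check out: the displayed four-term expression is indeed the component form of $\mathcal H(\bs K)(\bs X,\bs Y)=H(\bs K\bs X,\bs K\bs Y)+\bs K^2H(\bs X,\bs Y)-\bs K H(\bs K\bs X,\bs Y)-\bs K H(\bs X,\bs K\bs Y)$ (the middle term of the statement unpacks, via $H^b_{ai}=-H^b_{ia}$, into exactly the two cross terms $-K^h_bH^b_{ia}K^a_j-K^h_bH^b_{aj}K^a_i$); the eigenframe expansion of $H(\bs X_i,\bs X_j)$ is right; and since the operator $(\bs K-\rho^i\bs I)(\bs K-\rho^j\bs I)$ annihilates precisely the components along $\bs X_i$ and $\bs X_j$, one is left with $\sum_{k\neq i,j}c^k_{ij}(\rho^k-\rho^i)^2(\rho^k-\rho^j)^2\bs X_k$, whose vanishing is equivalent (the eigenvalues being pairwise distinct) to involutivity of every $2$-plane $\mathrm{span}(\bs X_i,\bs X_j)$, hence by Theorem~\ref{t:3.1} to holonomy of the orthogonal eigenframe, i.e., to normality of the eigenvectors. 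Two small points are worth making explicit in a final write-up: simplicity of the eigenvalues is what guarantees the existence of a smooth local orthogonal eigenframe in the first place, and the reduction of the tensorial condition $\mathcal H(\bs K)=0$ to its values on that frame rests on the $C^\infty$-bilinearity you establish in the opening step.
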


This is a special case of a more general theorem due to Haantjes \cite{Haantjes-1955} (see also \cite[p.~248]{Schouten-1954}). As will be seen in Section~\ref{section7}, it is interesting the case of a {\it torsionless} tensor: $\bs H(\bs K)=0$. We will apply the following

\begin{Theorem}[\cite{Nijenhuis-1951}]\label{t:3.8}
A symmetric tensor $\bs K$ with simple eigenvalues $\rho^i$ is torsionless if and only if it has normal eigenvectors
$\bs X_i$ such that
\begin{gather*}
\bs X_i \rho^j=0, \qquad i\neq j.
\end{gather*}
\end{Theorem}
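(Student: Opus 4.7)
The plan is to read off both implications from a single clean computation of the tensorial Nijenhuis torsion
\[
\bs H(\bs K)(\bs X,\bs Y)=[\bs K\bs X,\bs K\bs Y]-\bs K[\bs K\bs X,\bs Y]-\bs K[\bs X,\bs K\bs Y]+\bs K^2[\bs X,\bs Y]
\]
on pairs of eigenvectors $(\bs X_i,\bs X_j)$ with $i\neq j$. I would expand each Lie bracket by the Leibniz rule $[f\bs U,g\bs V]=fg[\bs U,\bs V]+f(\bs Ug)\bs V-g(\bs Vf)\bs U$, substitute $\bs K\bs X_l=\rho^l\bs X_l$, and decompose $[\bs X_i,\bs X_j]=\sum_k a^k_{ij}\bs X_k$ in the eigenbasis. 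The four terms then collapse into the identity
\[
\bs H(\bs K)(\bs X_i,\bs X_j)=\sum_k a^k_{ij}(\rho^k-\rho^i)(\rho^k-\rho^j)\bs X_k+(\rho^i-\rho^j)\bigl[(\bs X_i\rho^j)\bs X_j+(\bs X_j\rho^i)\bs X_i\bigr].
\]
Producing this expansion is the one step that requires patience: the four Leibniz contributions must be signed carefully so that the quadratic $\rho^i\rho^j-(\rho^i+\rho^j)\rho^k+(\rho^k)^2$ is recognised as $(\rho^k-\rho^i)(\rho^k-\rho^j)$. This is the main obstacle; everything else is an inspection of the displayed identity.

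For the forward direction, assume $\bs H(\bs K)=0$ and read off the $\bs X_k$-component for $k\notin\{i,j\}$: this forces $a^k_{ij}(\rho^k-\rho^i)(\rho^k-\rho^j)=0$, so by simplicity of the eigenvalues $a^k_{ij}=0$. Thus every two-plane $\mathrm{span}(\bs X_i,\bs X_j)$ is involutive, and Theorem~\ref{t:3.1} makes the orthogonal eigenframe holonomic; since each $\bs X_i$ is then orthogonal to the integrable $(n-1)$-distribution spanned by the remaining eigenvectors, every $\bs X_i$ is normal. The $\bs X_i$- and $\bs X_j$-components of the identity, stripped of the now-vanishing $a^k_{ij}$-pieces, reduce to $(\rho^i-\rho^j)\bs X_j\rho^i=0$ and $(\rho^i-\rho^j)\bs X_i\rho^j=0$, and dividing by $\rho^i-\rho^j\neq 0$ yields $\bs X_i\rho^j=0$ for $i\neq j$.

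For the converse, assume the $\bs X_i$ are normal and $\bs X_i\rho^j=0$ whenever $i\neq j$. Normality of all $\bs X_i$ means that every $(n-1)$-distribution in the orthogonal eigenframe is integrable, which by Theorem~\ref{t:3.1} gives holonomy and in particular $[\bs X_i,\bs X_j]\in\mathrm{span}(\bs X_i,\bs X_j)$, so $a^k_{ij}=0$ for $k\notin\{i,j\}$. In the identity only the $k=i$ and $k=j$ summands could survive, but each carries the prefactor $(\rho^k-\rho^i)(\rho^k-\rho^j)=0$; the derivative term is killed by hypothesis. Hence $\bs H(\bs K)(\bs X_i,\bs X_j)=0$ on every pair of basis vectors (the diagonal case following from antisymmetry), and $C^\infty(Q)$-bilinearity of $\bs H(\bs K)$ extends this to $\bs H(\bs K)=0$.
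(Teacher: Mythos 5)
Your proof is correct. Note first that the paper offers no proof of this statement at all: Theorem~\ref{t:3.8} is quoted directly from Nijenhuis's 1951 paper, so there is no in-text argument to compare yours against. What you have reconstructed is essentially the classical Nijenhuis computation, and it holds up. The displayed identity is right: writing $[\bs X_i,\bs X_j]=\sum_k a^k_{ij}\bs X_k$, the four bracket terms assemble into $\sum_k a^k_{ij}(\rho^k-\rho^i)(\rho^k-\rho^j)\bs X_k$ (the $k=i$ and $k=j$ summands vanish identically, which is exactly why only the components of $[\bs X_i,\bs X_j]$ transverse to the $ij$-plane are constrained), and the derivative terms collect into $(\rho^i-\rho^j)\bigl[(\bs X_i\rho^j)\bs X_j+(\bs X_j\rho^i)\bs X_i\bigr]$; your invariant formula for $\bs H(\bs K)$ also agrees with the component expression displayed under Theorem~\ref{t:3.7}. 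The only step you leave implicit is the one that makes ``normal'' interchangeable with condition~(iii) of Theorem~\ref{t:3.1}: because the metric is positive-definite and the eigenvalues are simple, the eigenframe is orthogonal, so the $(n-1)$-distribution spanned by the $\bs X_j$ with $j\neq i$ is precisely the orthogonal complement of $\bs X_i$, and its integrability is literally the statement that $\bs X_i$ is surface-forming. With that remark spelled out, both directions of your argument are complete, and the final appeal to the $C^\infty$-bilinearity and antisymmetry of the torsion to pass from the eigenframe to arbitrary arguments is legitimate.
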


This means that each eigenvalue $\rho^j$ is constant on the hypersurfaces orthogonal to the corresponding eigenvector~$\bs X_j$. It is worthwhile to observe that

\begin{Theorem}
A torsionless KT with simple eigenvalues is necessarily a constant KT on a flat Riemannian manifold.
\end{Theorem}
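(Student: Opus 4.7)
The plan is to combine Theorem~\ref{t:3.8} with the Killing--Eisenhart equations~\eqref{(2.2)} in orthogonal coordinates adapted to the eigenframe of $\bs K$, and then absorb the remaining freedom in the metric into a separated rescaling of those coordinates.

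First, by Theorem~\ref{t:3.8}, the torsionlessness of~$\bs K$ together with simple eigenvalues yields normal eigenvectors $\bs X_i$ satisfying $\bs X_i\rho^j=0$ for $i\neq j$. These eigenvectors are mutually orthogonal (distinct simple eigenvalues in a positive-definite metric) and normal, so by Theorem~\ref{t:3.1} the eigenframe is holonomic and gives rise to a local orthogonal coordinate system $(q^i)$ in which $dq^i$ are eigenforms and each $\bs X_i$ is proportional to $\partial_i$. In such coordinates the condition from Theorem~\ref{t:3.8} reads $\partial_i\rho^j=0$ for $i\neq j$. Evaluating the Killing--Eisenhart equation~\eqref{(2.2)} on the diagonal $i=j$ gives $\partial_j\rho^j=0$, so in fact every eigenvalue $\rho^j$ is a constant. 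Substituting $\partial_i\rho^j=0$ back into~\eqref{(2.2)} for $i\neq j$ and using simplicity $\rho^i\neq\rho^j$, we obtain $\partial_i\ln g^{jj}=0$ whenever $i\neq j$, so each diagonal metric coefficient $g^{jj}$ is a function of~$q^j$ alone.

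To finish, introduce the separated rescaling $\tilde q^j=\int\sqrt{g_{jj}(q^j)}\,dq^j$, which is well defined precisely because $g_{jj}$ depends only on~$q^j$. In the new coordinates the line element reduces to $ds^2=\sum_j(d\tilde q^j)^2$, exhibiting $Q$ as flat on the domain of definition of~$\bs K$. Since each $\tilde q^j$ depends only on $q^j$, the tensor~$\bs K$ remains diagonal in $(\tilde q^j)$ with the same constant eigenvalues~$\rho^j$; its components are therefore constant in these Cartesian coordinates, i.e., $\nabla\bs K=0$ for the flat Levi-Civita connection.

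The main delicate step is the bridge from the eigenframe supplied by Theorem~\ref{t:3.8} to an actual adapted orthogonal coordinate system; this is handled by Theorem~\ref{t:3.1}, which turns normality of the mutually orthogonal eigenvectors into holonomy of the eigenframe and hence into the existence of the required separable coordinates in which the rest of the argument can be carried out purely algebraically.
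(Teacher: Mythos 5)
Your proof is correct and follows essentially the same route as the paper: pass to the orthogonal coordinates adapted to the (normal, holonomic) eigenframe guaranteed by Theorem~\ref{t:3.8}, combine $\partial_i\rho^j=0$ with the Killing--Eisenhart equations~\eqref{(2.2)} to conclude that the eigenvalues are constant and that $g^{jj}$ depends on $q^j$ alone, and then rescale each coordinate separately to obtain a flat metric with constant~$\bs K$. You merely make explicit the step (holonomicity of the eigenframe via Theorem~\ref{t:3.1}) that the paper leaves implicit.
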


\begin{proof}
If $\bs H(\bs K)=0$, then equations \eqref{(2.2)} imply
\begin{gather*}
0=\partial_i\rho^j=\big(\rho^i-\rho^j\big) \partial_i\ln g^{jj},\qquad i\neq j, \qquad \partial_i\rho^i=0.
\end{gather*}
This shows that $\rho^i={\rm const}$ and that $\partial_ig^{jj}=0$ for $i\neq j$. This last condition means that~$g^{jj}$ is a function of $q^j$ only. In this case, up to a change of scale of the coordinates, we can consider $g^{jj}={\rm const}$.
\end{proof}

This is a case considered in \cite{Bruce-McL-Smirnov-2001}. Going back to Theorem \ref{t:3.5}, it is interesting to make a comparison with the intrinsic characterisations of the geodesic orthogonal separation due to Eisenhart \cite{Eisenhart-1934,Eisenhart-1949} and Woodhouse \cite{Woodhouse-1975}. In the {\it Eisenhart theorem} \cite[p.~289]{Eisenhart-1934} the necessary and suf\/f\/icient conditions for the orthogonal separation are: (i)~the existence of $n-1$ independent Killing tensors $\bs K_1,\ldots, \bs K_{n-1}$ with normal common eigenvectors and such that (ii)~for {\it each} of these tensors the eigenvalues are simple and (iii)~for any pair $(i,\alpha)$ of f\/ixed indices ($\alpha =2,\ldots, n$, $i=1,\ldots, n$) the square matrices $\|\rho_i^\alpha-\rho_j^\alpha\|$ (with $j\neq i$) are regular. In the Eisenhart notation, $\rho_i^\alpha$ are the eigenvalues of $\bs K_\alpha$. Condition (i) should be replaced by the existence of $n-1$ Killing tensors such that $\bs G,\bs K_1,\ldots, \bs K_{n-1}$ are independent. Then Theorem~\ref{t:3.5} shows that conditions~(ii) and~(iii) are redundant. In~\cite[Theorem~4.2]{Woodhouse-1975} the $n-1$ KT's are assumed to be in involution and with common {\it closed} eigenforms. Theorem~\ref{t:3.3} shows that the requirement {\it closed} is redundant, since it is equivalent to the normality of the eigenvectors.

\section{The orthogonal separation of a natural Hamiltonian}\label{section4}

With each symmetric 2-tensor $\bs K$ and a function $U$ on $Q$ we associate the function $F=\tfrac 12  P_{\bs K}+U$ on $T^*Q$. We observe that {\it $F$ is a first integral of $H=G+V$, $\{H,F\}=0$, if and only if}
\begin{gather*}
\{G,P_{\bs K}\}=0, \qquad dU=\bs K\,dV.
\end{gather*}
The f\/irst equation means that $\bs K$ is a Killing tensor. If it is a ChKT, then, in any orthogonal coordinate system determined by its eigenvectors, the second equation is equivalent to $\partial_iU=\rho^i\partial_iV$.
Due to the fundamental equations~\eqref{(2.2)}, the integrability conditions of these equations assume the form
$\partial_j\partial_iU-\partial_i\partial_jU=(\rho^i-\rho^j) S_{ij}(V)=0$. This proves

\begin{Theorem}\label{t:4.1}
If $\bs K$ is a symmetric $2$-tensor with simple eigenvalues and normal eigenvectors, then~$F$ is a first integral of $G+V$ if and only if~$\bs K$ is a Killing tensor and $S_{ij}(V)=0$ in any orthogonal system of coordinates generated by the eigenvectors.
\end{Theorem}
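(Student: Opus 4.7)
The plan is to decompose the condition $\{H,F\}=0$ by degree in the fibre coordinates $p_i$. Writing $H=G+V$ and $F=\tfrac{1}{2}P_{\bs K}+U$, the bracket $\{H,F\}$ contains polynomial terms of degrees three and one in $p$, and these must vanish separately. The cubic part is $\tfrac{1}{2}\{G,P_{\bs K}\}=0$, which is exactly the Killing condition $[\bs K,\bs G]=0$. The linear part is $\{G,U\}+\{V,\tfrac{1}{2}P_{\bs K}\}=0$, which in coordinate-free form reads $dU=\bs K\,dV$. This reduces the theorem to showing that, under the hypothesis that $\bs K$ is a KT with simple eigenvalues and normal eigenvectors, the local solvability of $dU=\bs K\,dV$ is equivalent to the St\"ackel conditions $S_{ij}(V)=0$.

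I would then pass to orthogonal coordinates $(q^i)$ generated by the normal eigenforms of $\bs K$, which exist by hypothesis. In these coordinates $\bs K$ is diagonal with $K^i_j=\rho^i\delta^i_j$ (no sum), so $dU=\bs K\,dV$ becomes the Pfaffian system $\partial_iU=\rho^i\,\partial_iV$ in the unknown $U$. By Frobenius (i.e.\ $\partial_i\partial_jU=\partial_j\partial_iU$), this system admits a local solution $U$ if and only if the symmetry relations $\partial_i(\rho^j\,\partial_jV)=\partial_j(\rho^i\,\partial_iV)$ hold for all $i\ne j$.

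The crucial step is to expand these symmetry relations and substitute the Killing--Eisenhart equations~\eqref{(2.2)}, $\partial_i\rho^j=(\rho^i-\rho^j)\,\partial_i\ln g^{jj}$, on both sides. After cancellation the factor $\rho^j-\rho^i$ comes out multiplying precisely the combination $\partial_i\partial_jV-\partial_i\ln g^{jj}\,\partial_jV-\partial_j\ln g^{ii}\,\partial_iV$, so that the integrability condition becomes $(\rho^j-\rho^i)\,S_{ij}(V)=0$. Since the eigenvalues are pointwise simple, $\rho^j-\rho^i\ne 0$ for $i\ne j$, so we may divide to conclude $S_{ij}(V)=0$. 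Conversely, $S_{ij}(V)=0$ together with $\bs K$ being Killing gives the symmetry condition, hence a local primitive $U$ and hence the first integral $F$.

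The main obstacle is really just bookkeeping in this one compatibility calculation: one must apply~\eqref{(2.2)} twice with the correct index pattern and keep track of signs in order to recognise the exact St\"ackel combination on the right-hand side, rather than some sign-flipped or differently weighted variant of it. Once that identity is isolated, the simple-eigenvalue hypothesis immediately converts the Frobenius integrability into the St\"ackel equations, and both directions of the stated equivalence follow.
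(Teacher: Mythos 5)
Your proposal is correct and follows essentially the same route as the paper: splitting $\{H,F\}=0$ into its cubic part $\{G,P_{\bs K}\}=0$ and linear part $dU=\bs K\,dV$, passing to the orthogonal coordinates generated by the eigenvectors where the latter becomes $\partial_iU=\rho^i\partial_iV$, and using the Killing--Eisenhart equations~\eqref{(2.2)} to reduce the Frobenius integrability condition to $(\rho^i-\rho^j)S_{ij}(V)=0$, whence simplicity of the eigenvalues gives $S_{ij}(V)=0$. The compatibility computation you flag as the main bookkeeping step does come out exactly as you predict, so the argument is complete.
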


Since the existence of coordinates is a local matter, condition $d U=\bs K\,dV$ can be replaced by $d(\bs K\,dV)=0$. Thus, by recalling Theorem~\ref{t:3.2}  and the remarks at the end of Section~\ref{section1},  we f\/ind that

\begin{Theorem}[\cite{Benenti-1993}]\label{t:4.2}
The Hamilton--Jacobi equation $G+V=E$ is separable in orthogonal coordinates if and only if there exists a Killing $2$-tensor $\bs K$ with simple eigenvalues and normal eigenvectors such that
\begin{gather}
d(\bs K\,dV)=0.
\label{(4.1)}
\end{gather}
\end{Theorem}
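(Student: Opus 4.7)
The plan is to prove both implications in local orthogonal coordinates $(q^i)$ determined by the eigenvectors of $\bs K$, and to reduce the intrinsic condition $d(\bs K\,dV)=0$ to the scalar St\"ackel condition $S_{ij}(V)=0$ via the Killing--Eisenhart equations~\eqref{(2.2)}. The main computational ingredient will be checking that the two conditions $\partial_j\partial_iU=\partial_i\partial_jU$ arising from the local primitive of $\bs K\,dV$ (equivalently: the mixed partials of the components of the $1$-form $\bs K\,dV$) collapse, after using~\eqref{(2.2)}, to an expression proportional to $S_{ij}(V)$.

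For the necessity direction, assume $G+V=E$ is separable in orthogonal coordinates. By the Levi-Civita conditions~\eqref{(1.3)}, we have $S_{ij}(g^{kk})=0$ and $S_{ij}(V)=0$. The first equation together with Theorem~\ref{t:3.2} furnishes a ChKT $\bs K$ diagonalised in these coordinates, with simple eigenvalues $\rho^i$ and normal eigenvectors $\partial_i$. In these coordinates, $\bs K\,dV=\rho^j\partial_jV\,dq^j$, so I would compute
\begin{gather*}
\partial_i\big(\rho^j\partial_jV\big)-\partial_j\big(\rho^i\partial_iV\big)=\big(\rho^j-\rho^i\big)\partial_i\partial_jV+\big(\partial_i\rho^j\big)\partial_jV-\big(\partial_j\rho^i\big)\partial_iV
\end{gather*}
and substitute $\partial_i\rho^j=(\rho^i-\rho^j)\partial_i\ln g^{jj}$ and its transpose, using~\eqref{(2.2)}. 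The right-hand side then collects into $(\rho^j-\rho^i)\,S_{ij}(V)$, which vanishes by the Levi-Civita condition. Hence $d(\bs K\,dV)=0$.

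For the sufficiency direction, let $\bs K$ be a ChKT satisfying $d(\bs K\,dV)=0$. Theorem~\ref{t:3.2} yields separability of the geodesic Hamilton--Jacobi equation, so that in the orthogonal coordinates determined by the eigenvectors of $\bs K$ we have $S_{ij}(g^{kk})=0$. The closedness of $\bs K\,dV$ gives locally a function $U$ with $dU=\bs K\,dV$, equivalently $\partial_iU=\rho^i\partial_iV$. Applying Theorem~\ref{t:4.1} to the function $F=\tfrac12 P_{\bs K}+U$, since $\bs K$ is a KT with simple eigenvalues and normal eigenvectors and $dU=\bs K\,dV$, we conclude $S_{ij}(V)=0$. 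Alternatively, the identity derived above shows that the integrability condition $\partial_i\partial_jU=\partial_j\partial_iU$ is equivalent to $(\rho^j-\rho^i)\,S_{ij}(V)=0$, and simplicity of the eigenvalues forces $S_{ij}(V)=0$. Together with $S_{ij}(g^{kk})=0$, these are exactly the Levi-Civita conditions~\eqref{(1.3)}, so $G+V=E$ is separable in orthogonal coordinates.

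The main obstacle is essentially bookkeeping: ensuring that the computation of $d(\bs K\,dV)$ in the distinguished coordinates really produces the combination $(\rho^j-\rho^i)S_{ij}(V)$ after applying~\eqref{(2.2)} symmetrically in the two indices. Once this algebraic reduction is in place, both directions of the theorem follow by combining Theorem~\ref{t:3.2} (geodesic part) with Theorem~\ref{t:4.1} (potential part), and by appealing to the remark at the end of Section~\ref{section1} that~\eqref{(1.3)} characterises the full separation.
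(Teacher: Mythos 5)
Your proposal is correct and follows essentially the same route as the paper: Theorem~\ref{t:3.2} handles the geodesic part, and the local equivalence $d(\bs K\,dV)=0\Leftrightarrow dU=\bs K\,dV$ is reduced, via the Killing--Eisenhart equations~\eqref{(2.2)}, to $(\rho^j-\rho^i)S_{ij}(V)=0$ and hence to the Levi-Civita conditions~\eqref{(1.3)}. The algebraic reduction you flag as the ``main obstacle'' does go through exactly as you sketch, and is precisely the computation the paper performs in its proof of Theorem~\ref{t:4.1}.
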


This equation has been called {\it characteristic equation} of a {\it separable potential} $V$.
We observe that for $n=2$ any vector f\/ield is normal. Since it can be proved that on a two-dimensional manifold the separation always occurs in orthogonal coordinates \cite{LC-1904}, we get

\begin{Theorem}\label{t:4.3}\sloppy
On a two-dimensional Riemannian manifold the Hamilton--Jacobi equation \mbox{$G+V=E$} is separable if and only if there exists a $($non-trivial$)$ quadratic first integral.
\end{Theorem}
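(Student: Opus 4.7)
The plan is to reduce both directions to Theorem~\ref{t:4.2}, using two properties specific to $n=2$ that are noted just before the statement: (a)~on a two-dimensional Riemannian manifold every vector field is automatically normal, because its orthogonal distribution is one-dimensional and thus integrable; and (b)~by Levi-Civita \cite{LC-1904}, any separable system on a two-dimensional manifold is separable in orthogonal coordinates. In dimension two Theorem~\ref{t:4.2} therefore reads simply: $G+V=E$ is separable if and only if there exists a Killing $2$-tensor $\bs K$ with simple eigenvalues satisfying $d(\bs K\,dV)=0$.

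For the direct implication, suppose $G+V=E$ is separable. Theorem~\ref{t:4.2} produces such a Killing tensor $\bs K$, and the closed $1$-form $\bs K\,dV$ admits (locally) a primitive $U$. By the observation opening Section~\ref{section4}, the function $F=\tfrac12 P_{\bs K}+U$ is then a first integral of $H=G+V$. Since $\bs K$ has simple eigenvalues it cannot be any constant multiple of $\bs G$, so $F$ is not of the form $\alpha H+\beta$ with $\alpha,\beta$ constants, and it is a genuine non-trivial quadratic first integral.

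Conversely, let $F=\tfrac12 P_{\bs K}+U$ be a non-trivial quadratic first integral of $G+V$. The same observation of Section~\ref{section4} gives that $\bs K$ is a Killing tensor with $\bs K\,dV=dU$; applying $d$ yields $d(\bs K\,dV)=0$ at once. It remains to check that $\bs K$ has distinct eigenvalues on some open set, for on such a set~(a) makes its eigenvectors normal and Theorem~\ref{t:4.2} delivers separability there, which is enough since separation is a local matter. If instead the two eigenvalues coincided everywhere, then by positive-definiteness $\bs K=\rho\,\bs G$ for a smooth function $\rho$; expanding the Killing condition $\{P_{\bs K},P_{\bs G}\}=0$ and extracting the cubic-in-momenta part forces $\partial_i\rho=0$ for every $i$, so $\rho$ is a constant $c$. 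Then $dU=\bs K\,dV=c\,dV$ forces $U=cV+c'$ for a constant $c'$, whence $F=cH+c'$ is trivial, a contradiction.

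The only real subtlety is this last step: one must verify that ``$F$ is not of the form $\alpha H+\beta$'' is equivalent to ``$\bs K$ is not everywhere a scalar multiple of $\bs G$'', and that the Killing condition for a KT proportional to $\bs G$ forces the proportionality factor to be constant. Once that book-keeping is done, everything else is an immediate consequence of Theorem~\ref{t:4.2} together with the two dimension-two features~(a) and~(b).
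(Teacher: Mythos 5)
Your proof is correct and takes essentially the same route as the paper, which disposes of this theorem in two sentences: in dimension two every vector field is normal, and by Levi-Civita the separation on a two-dimensional manifold always occurs in orthogonal coordinates, so Theorem~\ref{t:4.2} applies directly. The additional bookkeeping you supply---checking that a Killing tensor everywhere proportional to $\bs G$ must have constant factor and hence yields only the trivial first integral $cH+c'$---is precisely the detail the paper leaves implicit, and you handle it correctly.
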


This is the extension to a two-dimensional manifold of the so-called Bertrand--Darboux--Whittaker theorem for the Euclidean plane $\mathbb E_2$ \cite{Ankiewicz-1983,Whittaker-1937}.

When written in Cartesian coordinates on a Euclidean $n$-space, equation~\eqref{(4.1)} gives rise to the so-called {\it Bertrand--Darboux} (BD) equations.
If we know the form of all characteristic tensors of a manifold, then equation \eqref{(4.1)}, written in any coordinate system (even not separable), provides an ef\/fective criterion for the separability of the Hamilton--Jacobi equation. This criterion have been applied for instance in the study of the super-separability of the inverse-square three-dimensional {\it Calogero system}~\cite{BCR-2000}. In~\cite{BCR-2000} you can f\/ind the intrinsic (i.e., ``boldface'') expressions of all the characteristic Killing 2-tensors in the Euclidean three-space $\mathbb E_3$, so that this criterion is ready to be used for any potential $V$. For a general~$\mathbb E_n$, the basic ChKT's and the corresponding BD equations have been determined in Marshall and Wojciechowski, and in~\cite{Benenti-1992b,Benenti-1993}. This analysis has been completed by Waksj\"o~\cite{Waksjo-2000}. In his thesis he presents an ef\/fective general criterion for the separability of a potential $V$ in the Euclidean $n$-space (see also~\cite{Waksjo-Rauch-Wojciechowski-2003}). Other separability criteria can be based on the analysis of the fundamental invariants of spaces of Killing tensors under the action of isometry groups and the method of moving frames (see \cite{DHMcLS-2004, McLenaghan-Smirnov-2002}, also for related references).

If in $\mathbb E_n$ we take
\begin{gather}
\bs K=\text{tr}(\bs L) \bs G-\bs L, \qquad
\bs L\doteq\bs A+ m \bs r\otimes\bs r + \bs w \odot \bs r,
\label{(4.2)}
\end{gather}
where $\bs A$ is symmetric and constant, $m\in\mathbb R$, $\bs w$ is a constant vector, and $\bs r$ is the vector representing the generic point,
then in Cartesian coordinates $(x^i)$ equation~\eqref{(4.1)} yields the Bertrand--Darboux equations~(3.25) in~\cite{Waksjo-2000}. The correspondence of notation is the following: $\bs A=(\gamma_{ij})$, $m=\alpha$, $\bs w =(2 \beta_i)$).

For $\bs w=0$ we get the BD equations for the separation in elliptic coordinates centered at the origin. For $m=0$ we get the BD equations for the separation in parabolic coordinates centered at the point~$P$, where $\bs L_P(\bs w)$ (the existence of such a point is proved in~\cite{Benenti-1992b}). For $\bs w=0$ and $m=0$ we have the separation in Cartesian coordinates. In the remaining case $\bs w\ne 0$ and $m\ne 0$, we have the separation in elliptic coordinates centered at the point $\bs c= -\tfrac 1{2m}\bs w$ (see below).

It must be emphasised that in the characteristic equation \eqref{(4.1)} for a separable potential $V$ the eigenvalues of $\bs K$ must be simple (outside a {\it singular set}). For the present case we have

\begin{Theorem}\label{t:4.4}
The tensor $\bs K$ defined in \eqref{(4.2)} has simple eigenvalues if and only if for $m=0$ the eigenvalues of $\bs A$ are simple and for $m\neq 0$ the eigenvalues of $\bs A-\tfrac 1{4m} \bs w\otimes\bs w$
are simple.
\end{Theorem}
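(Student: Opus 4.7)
The plan is a two-step reduction that pushes the analysis from the tensor field $\bs K$ down to the constant symmetric matrix named in the statement, via $\bs L$ as an intermediary. In Cartesian coordinates on $\mathbb E_n$ the contravariant metric $\bs G$ coincides with the identity endomorphism, so
\[
\bs K \;=\; (\operatorname{tr}\bs L)\,\bs I - \bs L .
\]
Hence $\bs K$ and $\bs L$ share their eigenvectors, and their eigenvalues are related by $\mu_i^{\bs K} = \operatorname{tr}\bs L - \mu_i^{\bs L}$; in particular $\bs K$ has $n$ distinct eigenvalues at a point iff $\bs L$ does. The problem therefore reduces to the pointwise spectral simplicity of $\bs L$.

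Next I would complete the square in the $\bs r$-dependent part of $\bs L$. For $m\ne 0$, set $\bs c = -\bs w/(2m)$ and $\bs r' = \bs r - \bs c$; expanding $m\,\bs r'\otimes\bs r'$ gives the identity
\[
m\,\bs r\otimes\bs r + \bs w\odot\bs r \;=\; m\,\bs r'\otimes\bs r' \;-\; \tfrac{1}{4m}\,\bs w\otimes\bs w,
\]
and substituting back yields $\bs L = \tilde{\bs A} + m\,\bs r'\otimes\bs r'$ with $\tilde{\bs A} \doteq \bs A - \tfrac{1}{4m}\bs w\otimes\bs w$. When $m=0$ no shift is needed. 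In either case, at the centre of the coordinate system the tensor $\bs L$ reduces to the constant symmetric matrix of the statement: $\bs L(\bs c) = \tilde{\bs A}$ when $m\ne 0$ and $\bs L(\bs 0) = \bs A$ when $m=0$.

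The theorem then follows from a polynomial/continuity argument. If $\tilde{\bs A}$ (resp.\ $\bs A$) has $n$ distinct eigenvalues, the discriminant of the characteristic polynomial of $\bs L$ is nonzero at the centre and, being a polynomial in the Cartesian components of $\bs r$, is nonzero on an open dense subset of $\mathbb E_n$; by continuity of the eigenvalues, $\bs L$ — and therefore $\bs K$ — has simple eigenvalues outside a proper closed singular set. Conversely, if $\tilde{\bs A}$ (resp.\ $\bs A$) has a repeated eigenvalue, then $\bs L$ inherits one at $\bs c$ (resp.\ at the origin), and $\bs K$ fails to have simple eigenvalues at the very point about which the elliptic or parabolic coordinate system is organised. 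The main subtlety I expect is just pinning down the reading of ``simple eigenvalues'' in the sense of the remark preceding Theorem~\ref{t:4.3} — simplicity outside a singular set, with the centre of coordinates required to lie in the non-singular locus — after which the proof is essentially the algebraic identity of the square-completion step combined with the elementary spectral reduction of the first step.
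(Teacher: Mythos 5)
Your first two reductions coincide with the paper's: the passage from $\bs K$ to $\bs L$ via $\bs K=\operatorname{tr}(\bs L)\,\bs G-\bs L$, and the completion of the square with $\bs c=-\tfrac1{2m}\bs w$ giving $\bs L=\bs A'+m\,\bs r'\otimes\bs r'$ with $\bs A'=\bs A-\tfrac1{4m}\bs w\otimes\bs w$. The divergence is in the final spectral step. The paper does not evaluate at the centre; it writes the eigenvalues of $\bs A'+m\,\bs r'\otimes\bs r'$ as the roots of the secular equation $\sum_i x_i^2/(u-a^i)=1/m$, i.e.\ of the polynomial equation~\eqref{(4.3)}, and argues that simple $(a^i)$ force simple roots by interlacing, $a^1<u^1<a^2<u^2<\cdots$, while a coincidence such as $a^1=a^2$ makes $u=a^1$ a double root of~\eqref{(4.3)}. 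Your forward direction (discriminant nonzero at the centre, hence on a dense open set because it is polynomial in the Cartesian components) is sound, and is actually more robust than the interlacing argument, which as stated presupposes $m>0$ and all $x_i\neq0$.

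The genuine gap is in your converse. You establish a repeated eigenvalue of $\bs L$ only at the single point $\bs r'=\bs 0$. But the paper's own convention, announced in the sentence immediately before the theorem, is that ``simple eigenvalues'' means simple \emph{outside a singular set}; a degeneracy confined to one point is exactly what that convention excuses, so ``$\bs K$ fails to have simple eigenvalues'' does not follow from your observation under the reading the paper actually uses. (Your closing remark shows you sensed this, but requiring the centre to lie in the non-singular locus amounts to altering the statement rather than proving it.) What the paper asserts instead is that the degeneracy \emph{persists}: when $a^1=a^2$, equation~\eqref{(4.3)} is claimed to have $u=a^1$ as a double root identically in $\bs r'$, not merely at the centre. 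That persistence is the missing ingredient in your argument, and it is the part that genuinely needs work: a rank-one perturbation $m\,\bs r'\otimes\bs r'$ of a matrix with a double eigenvalue does, at a generic point, split that eigenvalue (one factor of $(u-a^1)$ survives in~\eqref{(4.3)} via the eigenvector of $\bs A'$ orthogonal to $\bs r'$ inside the degenerate eigenspace, but the second factor is not automatic), so the converse cannot be settled by evaluation at one point nor by a one-line continuity remark.
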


This is (a slightly modif\/ied version of) a theorem of Waksj\"o \cite[p.~45]{Waksjo-2000}. We give here an alternative proof.

\begin{proof}
The tensor $\bs K$ has simple eigenvalues if and only if $\bs L$ has simple eigenvalues.
For $m=0$ we have $\bs L=\bs A+\bs w\odot \bs r$; we are in the case of the parabolic web~\cite{Benenti-1992b} and~$\bs L$ has simple eigenvalues if and only if $\bs A$ has simple (constant) eigenvalues.
For $m\neq 0$, let us change the origin by considering the transformation $\bs r=\bs c+\bs r'$,
where $\bs c$ is a constant vector. We obtain
\begin{gather*}
\bs L=\bs A+m  (\bs r'\otimes \bs r'+2 \bs r'\odot\bs c+\bs c\otimes\bs c)+
\bs w\odot \bs c +\bs w\odot\bs r'.
\end{gather*}
If we take $\bs c= -\tfrac 1{2m} \bs w$, then
\begin{gather*}
\bs L=\bs A-\tfrac 1{4m}   \bs w\otimes\bs w + m  \bs r'\otimes \bs r'=\bs A'+m  \bs r'\otimes \bs r'.
\end{gather*}
We known that for a symmetric tensor of the kind $\bs A'+\bs r'\otimes\bs r'$ the eigenvalues $(u^i)$
are the roots of equation
\begin{gather*}
\sum_i\dfrac{x_i^2}{u-a^i}=\frac 1m,
\end{gather*}
where $(x_i)$ are Cartesian coordinates and $(a^i)$ are the constant eigenvalues of $\bs A'$.
This equation is equivalent to
\begin{gather}
m \sum_i x_i^2 \prod_{k\neq i} \big(u-a^k\big)- \prod_j\big(u-a^j\big)=0.
\label{(4.3)}
\end{gather}
If $(a^i)$ are simple, then also $(u^i)$ are simple, since $a^1< u^1 < a^2 < u^2 < \ldots < u^{n-1} < a^n < u^n$. If~$(a^i)$ are not all simple, for instance $a^1=a^2$, then equation~\eqref{(4.3)} has a double root $u=a^1$.
\end{proof}

The tensors $\bs K$ and $\bs L$ have a mechanical meaning: they were introduced in~\cite{Benenti-1992b} as the {\it inertia tensor} and {\it planar inertia tensor} of a set of massive points (including, this is important, negative masses) in $\mathbb E_n$. The parameter $m$ is just the total mass (it may be $0$). Indeed, it is a remarkable fact that {\it an inertia tensor} is a~KT. This interpretation is of help in the problem of f\/inding the intrinsic expressions of all the ChKT's of $\mathbb E_n$ (see also~\cite{Marshall-Wojciechowski-1988}).

\begin{Notation}  A matrix of the kind $\bs A+\bs r\otimes\bs r$ has been used by Moser~\cite{Moser-1981} for constructing a~Lax pair for the geodesic f\/low of an asymmetric ellipsoid. For this reason it was denoted by~$\bs L$ in~\cite{Benenti-1992b} (indeed, in analogy with the Lax method, starting from~$\bs L$ we can construct  a complete system of f\/irst integrals in involution through a pure algebraic process). There were other two reasons which suggested this notation: (i)~$\bs L$ is a letter adjacent to~$\bs K$, and this is appropriate because a~tensor~$\bs L$ generates a tensor~$\bs K$ according to~\eqref{(4.2)}; (ii)~$\bs L$ stands for {\it Levi-Civita}, and indeed the orthogonal metric associated with $\bs L$ was f\/irstly introduced by Levi-Civita~\cite{LC-1896} (see the end of Section~\ref{section7}).
\end{Notation}

\section{First integrals associated with the orthogonal separation}\label{section5}

Going back to the characteristic equation \eqref{(4.1)} we recall that at the beginning of Section~\ref{section3} we observed that a characteristic Killing tensor, like that appearing in equation \eqref{(4.1)}, generates a~KS-space~$\cal K$. It is a~remarkable fact that

\begin{Theorem}\label{t:5.1}
If $(\bs K_a)=(\bs K_0, \bs K_2, \ldots, \bs K_{n-1})$ is a basis of $\cal K$  $(\bs G$ and $\bs K$ may belong to this basis$)$ then locally there exist functions $V_a$ such that
\begin{gather}
H_a\doteq \tfrac 12 P_{\bs K_a}+V_a
\label{(5.1)}
\end{gather}
are independent first integrals in involution.
\end{Theorem}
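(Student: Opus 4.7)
The plan is to convert $\{H_a, H_b\} = 0$ into a pointwise identity on the $V_a$, then to introduce a single ``generating'' potential $W$ from which all the $V_a$ are built by quadrature; the algebraic commutativity of the $\bs K_a$ will make the identity automatic, leaving only the existence of $W$ to be settled by a St\"ackel-type argument.

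First I would expand the Poisson bracket. By KS-space property (b), $\{P_{\bs K_a}, P_{\bs K_b}\} = 0$, and of course $\{V_a, V_b\} = 0$, so
\[
\{H_a, H_b\} = \tfrac12\{P_{\bs K_a}, V_b\} + \tfrac12\{V_a, P_{\bs K_b}\}.
\]
A direct computation gives $\{P_{\bs K}, V\} = 2K^{ij}p_j\,\partial_i V$, so this bracket is a linear form in the momenta which vanishes if and only if
\[
\bs K_a\, dV_b = \bs K_b\, dV_a \qquad \text{for all } a, b,
\]
where $\bs K\, dV$ denotes the usual action of the $(1,1)$-tensor on $dV$. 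The theorem is thus reduced to producing $V_a$'s obeying this symmetry.

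Next comes the common-generator ansatz $dV_a = \bs K_a\, dW$ for a single function $W$, from which
\[
\bs K_a\, dV_b = \bs K_a \bs K_b\, dW = \bs K_b \bs K_a\, dW = \bs K_b\, dV_a,
\]
the middle equality being precisely KS-space property (a), $[\![ \bs K_a, \bs K_b ]\!] = 0$. So the required symmetry is built in, and only the local integrability condition $d(\bs K_a\, dW) = 0$ for each $a$ remains.

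This last condition is the characteristic equation of Theorem~\ref{t:4.2} applied to $W$. In separable coordinates (furnished by Theorem~\ref{t:3.5}, where all the $\bs K_a$ are simultaneously diagonal), a short calculation using the Killing--Eisenhart equations~\eqref{(2.2)} reduces it to $(\rho_a^j - \rho_a^k)\, S_{jk}(W) = 0$; choosing a basis element with simple eigenvalues---which exists in any KS-space by definition---collapses it to the St\"ackel system $S_{jk}(W) = 0$, $j \neq k$, i.e.\ the very same system whose non-trivial solutions are the separable potentials of the metric. The classical St\"ackel theorem supplies an $n$-parameter family of local solutions, and any such $W$ yields the desired $V_a$ by quadrature. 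Independence of the $H_a$ is then immediate from independence of the principal symbols $P_{\bs K_a}$, since $(\bs K_a)$ is a basis of $\cal K$. I expect the main obstacle to be this last existence step: the algebra in the reduction and in the ansatz is essentially one line each, whereas producing a common generator $W$ is precisely where one must use the full strength of the KS-space hypothesis (property (a) together with the existence of a diagonalizing coordinate system), rather than merely involution of a handful of KTs.
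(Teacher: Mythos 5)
Your reduction of involution to the symmetry condition $\bs K_a\,dV_b=\bs K_b\,dV_a$, the ansatz $dV_a=\bs K_a\,dW$ with algebraic commutativity doing the work, and the computation $d(\bs K_a\,dW)\leftrightarrow(\rho_a^i-\rho_a^j)S_{ij}(W)$ are all correct and are exactly the mechanism the paper uses (the paper integrates the closed $1$-forms $\bs K_a\,dV$). But your endgame points the argument in the wrong direction. The theorem lives in the context of a \emph{given} separable potential $V$ satisfying the characteristic equation \eqref{(4.1)}, and the $H_a$ are meant to be first integrals of $H=G+V$ (indeed, with $\bs K_0=\bs G$ one must get $V_0=V$ and $H_0=H$). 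If you take for $W$ an arbitrary member of the $n$-parameter family supplied by the St\"ackel theorem, the resulting $H_a$ are mutually in involution and independent, but they Poisson-commute with $G+W$, not with $G+V$: the same bracket computation you performed shows $\{G+V,H_a\}=0$ if and only if $dV_a=\bs K_a\,dV$, i.e.\ if and only if the generator is $V$ itself. So your construction proves a statement about \emph{some} separable potential rather than the one at hand.

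The repair is one line and uses a piece you already have: take $W=V$. Its admissibility does not need the St\"ackel existence theorem at all --- the hypothesis $d(\bs K\,dV)=0$ for the characteristic tensor $\bs K$, whose eigenvalues are simple, reads $(\rho^i-\rho^j)S_{ij}(V)=0$ in separable coordinates and hence forces $S_{ij}(V)=0$; by your own identity this gives $d(\bs K_a\,dV)=0$ for \emph{every} element of the KS-space, so each $V_a$ exists locally by quadrature of $\bs K_a\,dV$. This is precisely the paper's proof. A last small caution: the commutation $[\![\bs K_a,\bs K_b]\!]=0$ that you invoke is not one of the two defining conditions you cite abstractly but is available here because all the $\bs K_a$ are simultaneously diagonalised in the separable coordinates, so the ansatz step is sound once that is said explicitly.
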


A proof can be found in~\cite{Benenti-1993}. Indeed, it can be shown that equation \eqref{(4.1)} implies $d(\bs K_a\,dV)=0$ for each index $a$. This implies that a ChKT $\bs K$ and a separable potential $V$ generates a $n$-dimensional space ${\cal H}(\bs K,V)$ of f\/irst integrals in involution,
\begin{gather*}
H_{\bs K}=\tfrac 12 P_{\bs K}+V_{\bs K}, \qquad \bs K\in \cal K,
\end{gather*}
and that the associated potentials $V_{\bs K}$ can be determined by integrating the closed 1-forms $\bs K_a\,dV$. We observe that there are separable systems (an example is the three-body Calogero system, see \cite{BCR-2000}) in which this integration can be avoided and replaced by an algebraic process.

We remark that, if we know a basis of a KS-space, then for testing if a potential $V$ is separable it is suf\/f\/icient to verify that equation \eqref{(4.1)} is satisf\/ied for an element of this space with simple eigenvalues. When the answer is af\/f\/irmative, then a complete set of integrals in involution can be determined by integrating the closed 1-forms $\bs K_a\,dV$.

It is well known that the orthogonal (as well as the non-orthogonal separation) is related to St\"ackel matrices. A {\it St\"ackel matrix} in the $n$ variables $(q^i)$ is a regular $n\times n$ matrix $\bs S=\big[\varphi_i^{(a)}\big]$ of functions $\varphi_i^{(a)}$ depending on the variable $q^i$ corresponding to the lower index only. We denote by~$\big[\varphi_{(a)}^i\big]$ the inverse matrix. The original {\it St\"ackel theorem}
asserts that {\it an orthogonal coordinate system $(q^i)$ is separable if and only if there exists a St\"ackel matrix such that $g^{ii}=\varphi_{(0)}^i$. In this case},  (i) {\it the diagonalised tensors $K_a^{ii}=\varphi_{(a)}^i$ are the basis of a St\"ackel space}, (ii)~{\it all separable potentials $V$ have the form $V=\phi_i(q^i) \varphi_{(0)}^i$,
where $\phi_i$ is a function of the corresponding coordinate~$q^i$ only};  (iii)~{\it a basis $H_a$  of the space of
f\/irst integrals in involution is given by~\eqref{(5.1)} with $V_a=\phi_i \varphi_{(a)}^i$}.\footnote{St\"ackel systems have been the object of several researches in recent years. A~generalisation of St\"ackel systems has been proposed in~\cite{Blaszak-Sergyeyev-2011}. Relations between f\/inite- (St\"ackel systems) and inf\/inite-dimensional (Harry--Dym hierarchies) integrable systems are investigated in~\cite{Marciniak-Blaszak-2010}. The separability of  a~Hamiltonian  in dif\/ferent St\"ackel systems is a standard procedure to determine  superintegrable systems (see \cite{Miller-Post-Winternitz-2013} and references therein). More links between St\"ackel systems and superintegrability are shown in~\cite{Tsiganov-2010}. An analysis of the Killing--St\"ackel spaces is the basis for a classif\/ication of the orthogonal separable coordinates  in~\cite{Rajaratnam-McLenaghan-2014}. (Bibliographical footnotes added by Giovanni Rastelli.)\label{pagenote1}}

\section{Conformal Killing tensors}\label{section6}

As remarked above, the existence of a KT $\bs K$ with simple eigenvalues and normal eigenvectors is a necessary and suf\/f\/icient condition for the existence of a KS-space $\cal K$, i.e., of a $n$-dimensional linear space of KT's with common normal eigenvectors (and, consequently, in involution). The following question arises: {\it is it possible to construct a basis of the space $\cal K$ by a coordinate independent algebraic procedure, starting from~$\bs K$}? Note that this problem can be solved (in principle) by integrating the linear dif\/ferential system \eqref{(2.2)}, if we know a separable coordinate system.

As illustrated in \cite{Benenti-1992b} the answer is af\/f\/irmative at least for special kinds of St\"ackel systems. In the next sections we will revisit this matter, by proposing new def\/initions and theorems. To this end, we need to recall some basic def\/initions and properties concerning conformal Killing tensors.

A {\it conformal Killing tensor} (CKT) on a Riemannian manifold $Q_n$ is a symmetric tensor $\bs L$ of order $l$ satisfying equation $\{P_{\bs L},P_{\bs G}\}=P_{\bs X} P_{\bs G}$, where $\bs X$ is a suitable symmetric tensor of order $l-1$. Since we are interested in CKT's of order two, we write this equation in the form
\begin{gather}
\{P_{\bs L},P_{\bs G}\}= -2 P_{\bs C} P_{\bs G},
\label{(6.1)}
\end{gather}
where $\bs C$ is a vector f\/ield which we call {\it associated with} $\bs L$ (also denoted by $\bs C(\bs L)$).

A CKT is said to be of {\it gradient-type} (GCKT) if $\bs C=\nabla f$. An example of GCKT is $f \bs G$. In this case we have $\bs C=\nabla f$. Indeed, $\{P_{f\bs G},P_{\bs G}\}=\{fP_{\bs G},P_{\bs G}\}=
P_{\bs G}\{f,P_{\bs G}\}= -2g^{ij}p_j\partial_if P_{\bs G}$.
A KT is obviously a GCKT with $\bs C=0$.

\begin{Theorem} \label{t:6.1}
A CKT $\bs L$ is of gradient-type with $\bs C=\nabla f$ if and only if
$\bs K=f \bs G-\bs L$ is a~Killing tensor.
\end{Theorem}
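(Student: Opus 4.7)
The plan is a direct computation with the Poisson bracket, using that $P$ is linear in the tensor and that $P_{\bs G}$ is a nonzero polynomial on $T^*Q$ and hence not a zero divisor. Set $\bs K = f\bs G - \bs L$, so by linearity $P_{\bs K} = f P_{\bs G} - P_{\bs L}$ (with $f$ canonically lifted to $T^*Q$). The goal is to show
\begin{gather*}
\{P_{\bs K},P_{\bs G}\} = -2 P_{\nabla f - \bs C} \, P_{\bs G},
\end{gather*}
from which both directions follow immediately.

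First I would handle the scalar piece. Using the Leibniz rule for the Poisson bracket and $\{P_{\bs G},P_{\bs G}\}=0$,
\begin{gather*}
\{f P_{\bs G},P_{\bs G}\} = \{f,P_{\bs G}\} P_{\bs G}.
\end{gather*}
Since $f$ is independent of the momenta, $\partial^i f = 0$, and $\partial^i P_{\bs G} = 2 g^{ij}p_j$, so $\{f,P_{\bs G}\} = -2 g^{ij}(\partial_i f) p_j = -2 P_{\nabla f}$, where $\nabla f$ is the gradient vector field. Combined with the defining CKT equation \eqref{(6.1)} for $\bs L$, subtraction gives
\begin{gather*}
\{P_{\bs K},P_{\bs G}\} = -2 P_{\nabla f} P_{\bs G} - ( -2 P_{\bs C} P_{\bs G} ) = -2 P_{\nabla f - \bs C} P_{\bs G}.
\end{gather*}

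The forward direction is then immediate: if $\bs C = \nabla f$, the right side vanishes and $\{P_{\bs K},P_{\bs G}\}=0$, i.e., $\bs K$ is a KT by \eqref{(2.1)}. For the converse, if $\bs K$ is a KT then $P_{\nabla f - \bs C} P_{\bs G} \equiv 0$ as a polynomial on the fibers of $T^*Q$. The only potential obstacle is justifying cancellation of $P_{\bs G}$, but this is harmless: the functions polynomial in the momenta form an integral domain (the symmetric algebra on $T^*Q$ fiberwise), $P_{\bs G}$ is a nonzero element, and $P_{\nabla f - \bs C}$ is linear in the momenta, so the product vanishes only if $P_{\nabla f - \bs C}=0$, which forces $\bs C = \nabla f$. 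This completes the equivalence.
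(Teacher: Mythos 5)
Your proof is correct and follows essentially the same route as the paper's one-line argument: apply the Leibniz rule to $\{fP_{\bs G},P_{\bs G}\}$, use $\{f,P_{\bs G}\}=-2P_{\nabla f}$ and the defining relation \eqref{(6.1)}, and read off the equivalence. You are in fact slightly more careful than the paper in justifying the cancellation of $P_{\bs G}$ in the converse direction (the fiberwise integral-domain argument), which the paper leaves implicit.
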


\begin{proof} $\{P_{\bs L}-fP_{\bs G}, P_{\bs G}\}=\{P_{\bs L},P_{\bs G}\}-P_{\bs G}\{f,P_{\bs G}\}= -2P_{\bs C}P_{\bs G}+2P_{\bs G}P_{\nabla f}$.
\end{proof}

A {\it conformal Killing tensor of trace-type} $\bs L$ is a CKT for which $\bs C=\nabla\operatorname{tr}(\bs L)$.
In this case, $\bs K= \operatorname{tr}(\bs L) \bs G-\bs L$ is a~KT.

By considering in \eqref{(6.1)} $P_{\bs G}=g^{ii}p_i^2$ and $P_{\bs L}=u^ig^{ii}p_i^2$, we f\/ind

\begin{Theorem}\label{t:6.2}
Assume that a symmetric $2$-tensor $\bs L$ is diagonalised in orthogonal coordinates, so that $L^{ii}=u^i g^{ii}$, $L^{ij}=0$, $i\neq j$, where~$u^i$ are the eigenvalues. Then~$\bs L$ is a CKT if and only if the following equations are satisfied,
\begin{gather}
\partial_iu^k=\big(u^i-u^k\big) \partial_i\ln g^{kk}+C_i, \qquad
C_i=\partial_iu^i.
\label{(6.2)}
\end{gather}
\end{Theorem}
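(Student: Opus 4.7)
The plan is to reduce equation (6.1) to a straightforward coefficient-matching problem. In the diagonalising coordinates both sides of (6.1) are polynomials of degree three in the momenta with very few surviving monomials, so the whole claim will drop out of collecting terms.

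First I substitute the diagonal expressions
\begin{gather*}
P_{\bs G} = \sum_l g^{ll}\, p_l^2, \qquad
P_{\bs L} = \sum_k u^k g^{kk}\, p_k^2
\end{gather*}
into the canonical Poisson bracket. Since $\partial^i P_{\bs L} = 2 u^i g^{ii} p_i$ and $\partial^i P_{\bs G} = 2 g^{ii} p_i$ (no sum on $i$), while the $q$-derivatives hit only the coefficients, a short manipulation collects all terms as
\begin{gather*}
\{P_{\bs L},P_{\bs G}\} = -2 \sum_{i,k} g^{ii} g^{kk}\bigl[\partial_i u^k - \big(u^i - u^k\big)\,\partial_i \ln g^{kk}\bigr] p_i p_k^2.
\end{gather*}
On the right-hand side of (6.1), writing $\bs C = C^i\partial_i$ and using its covariant components $C_i = g_{ii} C^i$ (unambiguous in an orthogonal frame), one finds
\begin{gather*}
-2 P_{\bs C} P_{\bs G} = -2 \sum_{i,k} g^{ii} g^{kk}\, C_i\, p_i p_k^2.
\end{gather*}

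The monomials $p_i p_k^2$ indexed by ordered pairs $(i,k)$ are linearly independent functions on each cotangent fibre (their unordered index multisets already distinguish them), hence (6.1) is equivalent to the single identity
\begin{gather*}
\partial_i u^k = \big(u^i - u^k\big)\,\partial_i \ln g^{kk} + C_i
\end{gather*}
holding for all $(i,k)$. Specialising $k=i$ makes the factor $u^i-u^k$ vanish and forces $C_i = \partial_i u^i$, which is the second displayed equation; the remaining cases $i\neq k$ reproduce the first. The calculation is completely reversible: once the two stated equations are assumed, $C_i = \partial_i u^i$ guarantees that the $C_i$ are the components of a genuine vector field $\bs C$, and substituting back recovers (6.1) identically. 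The only point requiring attention is the bookkeeping of the two running indices and the recognition that the diagonal contribution $i=k$ is precisely the compatibility condition that makes the family $\{C_i\}$ assemble into a single vector field, rather than an obstruction.
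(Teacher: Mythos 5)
Your computation is correct and is exactly the argument the paper intends: the paper's "proof" consists of the single remark that one substitutes $P_{\bs G}=g^{ii}p_i^2$ and $P_{\bs L}=u^ig^{ii}p_i^2$ into (6.1) and compares coefficients, which is what you carry out in full (the signs and the identification $C^i=g^{ii}C_i$ all check out, and the linear independence of the monomials $p_ip_k^2$ is correctly used). The only cosmetic point is that the diagonal case $k=i$ is better described as simply determining $C_i$ rather than as a compatibility condition for $\{C_i\}$ to form a vector field, since any family of functions in a chart does that locally.
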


\section{L-tensors}\label{section7}

Let us call {\it L{\rm-}tensor} a (i) conformal Killing tensor $\bs L$ with (ii) vanishing torsion and (iii) pointwise simple eigenvalues $(u^i)$. The reasons for introducing such an object will be explained in the next section. In the present section we examine the basic properties of an L-tensor. Due to the vanishing of the torsion, there is an equivalence classes of orthogonal coordinates $(q^i)$ in which this tensor is diagonalised and $\partial_iu^j=0$ for $i\neq j$. We say that these coordinates are {\it associated} with~$\bs L$. Since $\bs L$ is a CKT, due to Theorem~\ref{t:6.2} equations
\begin{gather}
\partial_iu^j=\big(u^i-u^j\big)\partial_i\ln g^{jj}+C_i=0, \qquad i\neq j, \qquad
C_i=\partial_iu^i
\label{(7.1)}
\end{gather}
hold. For $\bs C=0$ we f\/ind the equations \eqref{(2.2)} of a KT.

\begin{Remark}  In the above def\/inition no assumption is made about the independence of the eigenvalues as functions on~$Q$; some of them may be constant (a criterion for the independence of the eigenvalues is given in Theorem~\ref{t:9.2} below). This def\/inition has to be compared with those given in \cite{Blaszak-2003, Bolsinov-Matveev-2003, Ibort-Magri-Marmo-2000}, where $\bs L$ is assumed to be a torsionless CKT of trace-type (and called {\it Benenti tensor}), and in \cite{Crampin-2003a,Crampin-2003b}, where~$\bs L$ is assumed to be a torsionless CKT with independent (i.e., coordinate-forming) eigenvalues (and called {\it special conformal Killing tensor}, see Theorem~\ref{t:9.1} below). In all these papers the essential condition that~$\bs L$ has simple eigenvalues is missing (or understood).
\end{Remark}

In fact it can be proved that

\begin{Theorem}\label{t:7.1}
If an eigenvalue $u^i$ of a torsionless CKT of trace-type is not simple, then it is constant.
\end{Theorem}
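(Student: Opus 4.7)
The plan is to work in orthogonal coordinates adapted to $\bs L$ and extract $\partial_i u=0$ in every coordinate direction, handling the directions transverse to the $u$-eigendistribution and those inside it by different mechanisms. Suppose $u$ is an eigenvalue of multiplicity $k\geq 2$, which we may assume constant on the open set where we work. Since $\bs H(\bs L)=0$, the eigendistributions of $\bs L$ are integrable, so there exist orthogonal coordinates $(q^1,\ldots,q^n)$ in which $\bs L$ is diagonal, with $u^1=\cdots=u^k=u$ and the remaining eigenvalues $u^{k+1},\ldots,u^n$ distinct from $u$.

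First I would compute the Nijenhuis torsion of a diagonal $(1,1)$-tensor directly from the formula in Theorem~\ref{t:3.7}. A short calculation shows that $\bs H(\bs L)=0$ is equivalent to the scalar relations
\[
\big(u^i-u^j\big) \partial_i u^j=0,\qquad i\neq j.
\]
Taking $i>k$ and $j\leq k$, so that $u^j=u\neq u^i$, this forces $\partial_i u=0$. Hence $u$ is already independent of the coordinates transverse to its own eigendistribution; moreover, the same relation applied with $i\leq k$ and $\ell>k$ gives $\partial_i u^\ell=0$, information I will need in a moment.

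The main step is to show that $u$ is also constant along its own eigendistribution, and this is where the trace-type hypothesis enters decisively. For $i\leq k$ and any $m\leq k$ with $m\neq i$ (which exists precisely because $k\geq 2$), the CKT equation \eqref{(6.2)} reads $\partial_i u^m=(u^i-u^m) \partial_i\ln g^{mm}+C_i$, and since $u^i=u^m=u$ it collapses to $\partial_i u=C_i$. On the other hand the trace-type condition $\bs C=\nabla\operatorname{tr}(\bs L)$ gives $C_i=\sum_\ell \partial_i u^\ell$; by the observation from the transverse step, $\partial_i u^\ell=0$ for $\ell>k$, while $\partial_i u^\ell=\partial_i u$ for $\ell\leq k$, so $C_i=k \partial_i u$. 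Comparing the two expressions, $(1-k) \partial_i u=0$, and since $k\geq 2$ we conclude $\partial_i u=0$ for $i\leq k$ as well. Combined with the previous step, $u$ is locally constant.

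The main obstacle here is not the algebra but the geometric set-up: one must justify that a torsion-free CKT with a degenerate eigenvalue admits local orthogonal coordinates in which it is diagonal with the repeated eigenvalue grouped in consecutive slots, and that the multiplicity $k$ may be taken locally constant. Once that framework is secure, both the diagonal Nijenhuis identity and the combinatorial miracle $C_i=k \partial_i u$ are immediate, and the hypothesis $k\geq 2$ is used exactly once, at the point where it tears the equation $(1-k)\partial_i u=0$ apart from the trivial case and forces the eigenvalue to be constant.
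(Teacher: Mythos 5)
The paper does not actually print a proof of Theorem~\ref{t:7.1}; it only remarks that the proof ``requires the use of the Haantjes theorem for a tensor with non-simple eigenvalues'', and that remark points at precisely the step you leave open. Your argument begins by placing yourself in orthogonal coordinates in which $\bs L$ is diagonal, with the repeated eigenvalue filling a $k\times k$ block. When all eigenvalues are simple this is legitimate: the $n$ one-dimensional eigendistributions are normal (Theorem~\ref{t:3.8}) and their orthogonal hypersurface foliations furnish the chart. When an eigenvalue $u$ has multiplicity $k\ge 2$, vanishing torsion (via Haantjes) gives only integrability of the $k$-dimensional eigendistribution $D_u$ and of sums of eigendistributions; it does not split $D_u$ into $k$ mutually orthogonal \emph{normal} line fields, which is what an orthogonal diagonalising chart requires, and such a splitting need not exist. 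Consequently Theorem~\ref{t:6.2} and equations~\eqref{(6.2)}, which are derived under the hypothesis of orthogonal diagonalising coordinates, are not available to you. You flag this yourself as ``the main obstacle'', but flagging it is not closing it, and it is the entire content of the theorem's difficulty.

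The algebraic mechanism you found does survive once the set-up is weakened to what Haantjes actually delivers: coordinates $(q^a,q^\alpha)$, $a=1,\dots,k$, $\alpha=k+1,\dots,n$, adapted to the complementary integrable distributions $D_u$ and $D_u^{\perp}$, in which $\bs L$ is diagonal as a $(1,1)$-tensor but the metric retains a full (non-diagonal) block $g_{ab}$. Your identity $(u^i-u^j)\,\partial_i u^j=0$ uses only the $(1,1)$ components of $\bs L$, so it still yields $\partial_\alpha u=0$ and $\partial_a u^\alpha=0$. For the block directions, replace \eqref{(6.2)} by the covariant form of the CKT condition, $\nabla_{(a}L_{bc)}=C_{(a}g_{bc)}$: since $L_{bc}=u\,g_{bc}$ on the block and the block is $g$-orthogonal to its complement, one finds $\nabla_a L_{bc}=(\partial_a u)\,g_{bc}$ there, and contraction with $g^{bc}$ gives $C_a=\partial_a u$ with no diagonality of $g_{ab}$ needed. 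The trace-type hypothesis then gives $C_a=\partial_a\operatorname{tr}\bs L=k\,\partial_a u$, and your $(1-k)\,\partial_a u=0$ finishes the argument. As written, however, the proposal rests on an unestablished (and in general false) existence statement for the coordinates, so it has a genuine gap.
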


The proof of this theorem (here omitted) requires the use of the Haantjes theorem for a tensor with non-simple eigenvalues.

\begin{Theorem}\label{t:7.2}\sloppy
Let $\bs L$ be an L-tensor with associated coordinates $(q^i)$. Then:
$(i)$~Each eigenva\-lue~$u^i$ depends on the associated coordinate $q^i$ only, $u^i=u^i(q^i)$.
$(ii)$~It is of trace-type, $\bs C=\nabla\operatorname{tr}(\bs L).$
$(iii)$~In associated coordinates the metric has the form
\begin{gather}
g^{kk}=\phi_k \prod_{i\neq k}\dfrac 1{|u^i-u^k|}, \qquad u^i=u^i\big(q^i\big), \qquad\phi_k=\phi_k\big(q^k\big)>0,
\label{(7.2)}
\end{gather}
or, after a rescaling
\begin{gather}
g^{kk}=\prod_{i\neq k}\dfrac 1{|u^i-u^k|}.
\label{(7.3)}
\end{gather}
In both cases,
\begin{gather*}
\partial_i\ln g^{kk}=\dfrac {\partial_i u^i}{u^k-u^i},\qquad  i\neq k.
\end{gather*}
$($We call  normal coordinates associated with $\bs L$ the orthogonal coordinates for which equa\-tions~\eqref{(7.3)} hold$)$. $(iv)$~The associated coordinates are separable.
$(v)$~$\bs L$ commutes with the Ricci ten\-sor~$\bs R$, $[\! [\bs L,\bs R]\! ]=0$, i.e.,  the Robertson condition is satisfied: in the associated coordinates, $R_{ij}=0$ for $i\neq j$.
\end{Theorem}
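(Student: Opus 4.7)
The plan is to dispatch (i)--(iv) by combining the structural results already established, and to reduce (v) to a direct computation of the Ricci tensor in the normal coordinates produced by~(iii).

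For (i), since $\bs L$ has vanishing torsion and simple eigenvalues, Theorem~\ref{t:3.8} guarantees that its eigenvectors $\bs X_i$ satisfy $\bs X_i u^j=0$ for $i\ne j$; in associated coordinates, where $\bs X_i$ is proportional to $\partial_i$, this reads $\partial_i u^j=0$, so $u^i=u^i(q^i)$. Item (ii) then follows algebraically: reading~\eqref{(6.2)} as an equation on covariant components, $C_i=\partial_i u^i$, while $\operatorname{tr}(\bs L)=\sum_k u^k$ together with (i) yields $\partial_i\operatorname{tr}(\bs L)=\partial_i u^i=C_i$, hence $\bs C=\nabla\operatorname{tr}(\bs L)$.

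For (iii), inserting (i) into~\eqref{(7.1)} gives, for $i\ne k$,
\begin{gather*}
\partial_i\ln g^{kk}=\frac{\partial_i u^i}{u^k-u^i}=-\partial_i\ln|u^k-u^i|,
\end{gather*}
the second equality using that $u^k$ does not depend on $q^i$. Consequently $\ln g^{kk}+\sum_{i\ne k}\ln|u^k-u^i|$ is annihilated by every $\partial_j$ with $j\ne k$, so it is a function of $q^k$ alone; positivity of $g^{kk}$ identifies it with $\ln\phi_k(q^k)$, $\phi_k>0$, producing~\eqref{(7.2)}. A separated rescaling $q^k\mapsto\tilde q^k$ with $(d\tilde q^k/dq^k)^2=1/\phi_k$ absorbs $\phi_k$ and yields~\eqref{(7.3)}; the concluding identity for $\partial_i\ln g^{kk}$ is then immediate. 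For (iv), by (ii) and Theorem~\ref{t:6.1} applied with $f=\operatorname{tr}(\bs L)$, the tensor $\bs K=\operatorname{tr}(\bs L)\bs G-\bs L$ is a Killing tensor; its eigenvalues $\operatorname{tr}(\bs L)-u^i$ are pointwise distinct, and its eigenvectors coincide with those of $\bs L$, hence are normal. Theorem~\ref{t:3.2} then yields separability in the associated coordinates.

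The main obstacle is (v). Because $\bs L$ is diagonal with simple eigenvalues in the associated coordinates, $[ \! [\bs L,\bs R] \! ]=0$ is equivalent to $R_{ij}=0$ for $i\ne j$, so the task reduces to checking that the off-diagonal Ricci components of the metric~\eqref{(7.3)} vanish. I would proceed by direct computation: from $g_{ii}=\prod_{k\ne i}|u^k-u^i|$ one reads off
\begin{gather*}
\partial_j\ln g_{ii}=\frac{\partial_j u^j}{u^i-u^j}\qquad(j\ne i),
\end{gather*}
which feeds into the standard orthogonal-metric expressions for the Christoffel symbols and then into $R_{ij}$. Because each $u^\ell$ depends on $q^\ell$ alone, the resulting cross-terms organise themselves into telescoping partial-fraction combinations in the differences $u^i-u^j$ and cancel identically. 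This step, though mechanical, is the crux of the theorem: it is here that the explicit product form~\eqref{(7.3)} of the metric --- beyond mere separability --- is essential.
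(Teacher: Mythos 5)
Your treatment of items (i)--(iv) is correct and in fact more explicit than the paper's own, which disposes of these points by deferring to earlier references and to ``the fundamental properties of the elementary symmetric polynomials''. Item~(i) is essentially the definition of associated coordinates combined with Theorem~\ref{t:3.8}; your derivation of (ii) coincides with the paper's parenthetical argument; your integration of \eqref{(7.1)} to obtain \eqref{(7.2)}, and the separated rescaling giving \eqref{(7.3)}, are sound; and your route to (iv) through Theorem~\ref{t:6.1} with $f=\operatorname{tr}(\bs L)$ followed by Theorem~\ref{t:3.2} is a legitimate alternative to verifying the St\"ackel conditions $S_{ij}(g^{kk})=0$ directly on the metric \eqref{(7.3)}.

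The genuine gap is in (v), and you have located it yourself. Reducing $[\![\bs L,\bs R]\!]=0$ to $R_{ij}=0$ for $i\ne j$ is fine, but the assertion that the cross-terms ``organise themselves into telescoping partial-fraction combinations and cancel identically'' is a restatement of the claim to be proved, not a proof of it; as written, part (v) is not established. (There is also a sign slip in your starting point: from $g_{ii}=\prod_{k\ne i}|u^k-u^i|$ one gets $\partial_j\ln g_{ii}=\partial_j u^j/(u^j-u^i)$ for $j\ne i$, not $\partial_j u^j/(u^i-u^j)$.) The paper closes this step without any brute-force Ricci computation by exploiting (iv): in \emph{separable} orthogonal coordinates one has the identity $R_{ij}=\tfrac 32\,\partial_i\Gamma_j$ for $i\ne j$, where $\Gamma_j=g^{hk}\Gamma_{hk,j}$ are the contracted Christoffel symbols. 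A one-line computation from the product form \eqref{(7.2)} shows that $\Gamma_j$ is a function of $q^j$ alone (it vanishes in the normal coordinates \eqref{(7.3)}), whence $\partial_i\Gamma_j=0$ for $i\ne j$ and the Robertson condition follows. So either carry out your direct cancellation explicitly, or -- much more economically -- invoke the identity $R_{ij}=\tfrac 32\,\partial_i\Gamma_j$, which is available precisely because you have already proved~(iv).
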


These properties are derived from \cite{Benenti-1992b,Benenti-1993}. They follow from equations~\eqref{(7.1)} and from the fundamental properties of the elementary symmetric polynomials (see the next section).
From~\eqref{(7.2)} it follows that the {\it contracted Christoffel symbols} $\Gamma_i=g^{hj}\Gamma_{hj,i}$ take the simple form
$\Gamma_i=-\tfrac 12 \phi'_k\phi_k$. Thus, in normal coordinates $\Gamma_i=0$. Since in separable orthogonal coordinates
$R_{ij}=\tfrac 32 \partial_i\Gamma_j$, for $i\neq j$, the Robertson condition~(v) is proved. Item (ii) (which follows from \eqref{(7.1)}: $C_i=\partial_iu^i$ and $\partial_iu^j=0$ for $i\neq j$ implies $C_i=\partial_i\sum_ju^j$) shows that in~\cite[Proposition~1, Section~1]{Crampin-2003a} the assumption that~$(u^i)$ are functionally independent eigenvalues is redundant.

It is a remarkable fact that the metric~\eqref{(7.2)} associated with $\bs L$ is that of the {\it correspon\-ding geodesics} found by Levi-Civita~\cite{LC-1896}: {\it there exists a metric $\overline{\bs G}=(\overline g^{ij})$ having the same $($unparametrized$)$ geodesics of a given metric $\bs G=(g^{ij})$ if and only if there are orthogonal coordinates in which the metric $\bs G$ assumes the form~\eqref{(7.2)}}. It must be pointed out that this theorem holds under the assumption that the tensor $\overline {\bs G}$ has simple eigenvalues with respect to  $\bs G$. This matter has been recently analysed by Bolsinov and Matveev~\cite{Bolsinov-Matveev-2003} and by Crampin~\cite{Crampin-2003b}. The metric~\eqref{(7.2)} is a special case of the orthogonal separable metric determined by Eisenhart~\cite[Appendix~13]{Eisenhart-1949}
and characterised by the condition~$R_{jiik}=0$ for $i,j,k\neq$.

\section{L-systems}\label{section8}

Let $\sigma_a(\underline u)$ denote the elementary symmetric polynomial of degree $a$ of the $n$ variables $\underline u=(u^i)$. Let $\sigma_a^i$ and $\sigma_a^{ij}$ be the functions obtained from $\sigma_a$ by setting $u^i=0$ and $u^j=0$. Let us set
\begin{gather}
\sigma_0=\sigma_0^i=\sigma_0^{ij}=1, \qquad \sigma_{-1}=\sigma_{-1}^i=\sigma_{-1}^{ij}=0,\qquad\sigma^{ij}_n=\sigma_{n-1}^{ij}=0.
\label{(8.1)}
\end{gather}
Then the following equations are satisf\/ied \cite[Section~2]{Benenti-1992b}:
\begin{gather}
\sigma_a=\sigma_a^i+u^i \sigma_{a-1}^i, \qquad \sigma_a^j=\sigma_a^{ij}+u^i \sigma_{a-1}^{ij}, \qquad \sigma_a^i-\sigma_a^j=\big(u^j-u^i\big) \sigma^{ij}_{a-1}.
\label{(8.2)}
\\
 \sum_i u^i \sigma_{a-1}^i=a \sigma_a, \qquad
\det \big[\sigma_a^i\big]= \prod_{j>i}\big(u^i-u^j\big).
\label{(8.3)}
\\
\dfrac{\partial \sigma_a}{\partial u^i}=\sigma^i_{a-1},\qquad
\dfrac{\partial \sigma_a^j}{\partial u^i}=\sigma_{a-1}^{ij},\qquad
\big(u^j-u^i\big)\dfrac{\partial \sigma_a^j}{\partial u^i}=\sigma_a^i-\sigma_a^j.
\label{(8.4)}
\end{gather}
If $u^i\neq u^j$ for $i\neq j$, then
\begin{gather*}
\dfrac{\partial \sigma_a^j}{\partial u^i}=\dfrac{\sigma_a^i-\sigma_a^j}{u^j-u^i},
\qquad
\det \big[\sigma_a^i\big]\neq 0, \qquad a=0,1,\ldots, n-1,\qquad i=1,\ldots,n.
\end{gather*}
It follows that for any coordinate system $(q^i)$,
\begin{gather}
\partial_i\sigma_a^j=\partial_iu^h\;\dfrac{\partial\sigma_a^j}{\partial u^h}=\sum_{h\neq j}\partial_iu^h\;\sigma_{a-1}^{hj}=
\sum_{h\neq j}\partial_iu^h\;\dfrac{\sigma_a^h-\sigma_a^j}{u^j-u^h}.
\label{(8.6)}
 \end{gather}

\begin{Theorem}\label{t:8.1}
Let $\bs L$ be a symmetric $2$-tensor with eigenvalues $(u^i)$. The tensors $(\bs K_a)=(\bs K_0,\bs K_1,\ldots,\bs K_{n-1})$ defined by
\begin{gather}
\bs K_0=\bs G,\qquad
\bs K_a=\tfrac 1a \operatorname{tr}(\bs K_{a-1}\bs L)\bs G-\bs K_{a-1}\bs L, \qquad { {a>0}},
\label{(8.7)}
\end{gather}
or by
\begin{gather}
\bs K_a=\sum_{k=0}^a (-1)^k\sigma_{a-k}\bs L^k, \qquad \bs K_a=\sigma_a \bs G-\bs K_{a-1}\bs L, \qquad \bs K_{-1}=\bs 0,
\label{(8.8)}
\end{gather}
form a basis of a KS-space if and only if $\bs L$ is an L-tensor.
\end{Theorem}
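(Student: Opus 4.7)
The plan is to first verify that the two recursions (8.7) and (8.8) are equivalent, then read off the eigenvalues of each $\bs K_a$, and finally check the Killing--Eisenhart equations (2.2) directly in the ``if'' direction and invert them in the ``only if'' direction. The closed form $\bs K_a=\sum_{k=0}^a(-1)^k\sigma_{a-k}\bs L^k$ of (8.8) makes every $\bs K_a$ a (variable-coefficient) polynomial in $\bs L$, hence $\bs L$ and all $\bs K_a$ are simultaneously diagonalizable at each point, with the $i$-th eigenvalue of $\bs K_a$ equal to $\sum_{k=0}^a(-1)^k\sigma_{a-k}(u^i)^k=\sigma_a^i$. The equivalence of (8.7) and (8.8) then follows from Newton's identity $a\sigma_a=\sum_iu^i\sigma_{a-1}^i$ in (8.3), giving $\operatorname{tr}(\bs K_{a-1}\bs L)=\sum_i\sigma_{a-1}^iu^i=a\sigma_a$.

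For the ``if'' direction, I would work in normal coordinates associated with the L-tensor $\bs L$, where Theorem \ref{t:7.2} gives $\partial_iu^j=0$ for $i\neq j$ and $\partial_i\ln g^{jj}=(\partial_iu^i)/(u^j-u^i)$. Each $\bs K_a$ is diagonal there with eigenvalues $\sigma_a^i$, and substituting $\partial_iu^j=0$ (for $i\neq j$) into (8.6) collapses the sum to the single term $h=i$, yielding
\begin{gather*}
\partial_i\sigma_a^j=\partial_iu^i\cdot\frac{\sigma_a^i-\sigma_a^j}{u^j-u^i}=(\sigma_a^i-\sigma_a^j)\,\partial_i\ln g^{jj},
\end{gather*}
which is precisely the Killing--Eisenhart equation (2.2) for $\bs K_a$. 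Hence each $\bs K_a$ is a KT; they commute as operators (being simultaneously diagonal), their common eigenvectors are normal by torsionlessness of $\bs L$ and Theorem \ref{t:3.8}, and any two KT's satisfying (2.2) in the same orthogonal coordinates are automatically in involution. Linear independence over $\mathbb R$ follows from $\det[\sigma_a^i]=\prod_{j>i}(u^i-u^j)\neq 0$ of (8.3).

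For the ``only if'' direction, assume $(\bs K_a)$ is a basis of a KS-space. That space contains an element with pointwise simple eigenvalues $\sum_a c_a\sigma_a^i$, which forces the $u^i$'s to be distinct at every point (otherwise $\sigma_a^i=\sigma_a^j$ for all $a$ and no combination is simple). Thus $\bs L$ has pointwise simple eigenvalues. Since $\bs K_1=\operatorname{tr}(\bs L)\bs G-\bs L$ is a KT, Theorem \ref{t:6.1} yields that $\bs L$ is a trace-type CKT. Furthermore $\bs L$ shares the common normal eigenvectors of the KS-space, so in its associated orthogonal coordinates $\bs L$ is diagonal; by Theorem \ref{t:3.8}, torsionlessness of $\bs L$ will follow once $\partial_iu^k=0$ is established for $k\neq i$. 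Applying (2.2) to each $\bs K_a$ and expanding the left-hand side with (8.6) gives, for fixed $i\neq j$, the linear system
\begin{gather*}
\sum_{h\neq j}\frac{\partial_iu^h}{u^j-u^h}\bigl(\sigma_a^h-\sigma_a^j\bigr)=(\sigma_a^i-\sigma_a^j)\,\partial_i\ln g^{jj},\qquad a=0,\dots,n-1,
\end{gather*}
which, read as an identity among linear combinations of the columns of the invertible matrix $[\sigma_a^h]$, can be matched column by column. The resulting conditions produce $\partial_iu^k=0$ for $k\neq i,j$, together with the L-tensor relation $\partial_i\ln g^{jj}=(\partial_iu^i)/(u^j-u^i)$. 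Varying the auxiliary index $j$ then also covers $k=j$ and gives $\partial_iu^k=0$ for every $k\neq i$, as desired.

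The hard step is this torsionlessness part of the reverse direction: one has to decouple $\partial_iu^k=0$ for $k\neq i$ out of the whole system of Killing--Eisenhart equations carried by the basis $(\bs K_a)$. The algebraic fact that makes the decoupling possible is the non-singularity of the ``truncated'' symmetric-polynomial matrix $[\sigma_a^h]$, which in turn rests on the simplicity of the eigenvalues of $\bs L$ already established in the first step of the reverse argument.
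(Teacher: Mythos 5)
Your proposal reproduces the paper's proof almost verbatim in its setup and in the ``if'' direction: the identification of the eigenvalues of $\bs K_a$ with $\sigma_a^i$, the equivalence of \eqref{(8.7)} and \eqref{(8.8)} via the first identities in \eqref{(8.2)} and \eqref{(8.3)}, the use of \eqref{(8.6)} to verify the Killing--Eisenhart equations in coordinates associated with $\bs L$, and pointwise independence from $\det[\sigma_a^i]\neq 0$ are exactly the paper's steps. The one place you genuinely deviate is the extraction of torsionlessness, $\partial_iu^k=0$ for $k\neq i$, in the ``only if'' direction, and there your argument has a gap. You use only the off-diagonal Killing--Eisenhart equations \eqref{(8.9)} with $i\neq j$: matching coefficients against the columns of the invertible matrix $[\sigma_a^h]$ does yield $\partial_iu^k=0$ for $k\neq i,j$ together with $\partial_i\ln g^{jj}=\partial_iu^i/(u^j-u^i)$, but it yields no information about $\partial_iu^j$ itself, and your fix --- ``varying the auxiliary index $j$'' --- requires a third index and therefore only works for $n\geq 3$. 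For $n=2$ the off-diagonal equation for $\bs K_1$ with $i=1$, $j=2$ reads $\partial_1u^1=(u^2-u^1)\,\partial_1\ln g^{22}$ and the quantity $\partial_1u^2$ simply does not occur, so torsionlessness cannot be deduced this way.

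The paper closes this step differently, using the \emph{diagonal} equations: \eqref{(8.9)} with $j=i$ gives $\partial_i\rho_a^i=\partial_i\sigma_a^i=0$, that is, $\sum_{h\neq i}\partial_iu^h\,\sigma_{a-1}^{hi}=0$ for $a=1,\ldots,n-1$. This is a homogeneous $(n-1)\times(n-1)$ linear system whose matrix $[\sigma_{a-1}^{hi}]$ is the symmetric-polynomial matrix in the $n-1$ remaining eigenvalues; its determinant is a product of their pairwise differences, hence nonzero by the simplicity already established, and $\partial_iu^h=0$ follows for all $h\neq i$ at once, for every $n\geq 2$. Your argument becomes complete if you adjoin the diagonal equations to your linear system (or handle $n=2$ separately by the paper's route); as a small compensation, your off-diagonal analysis also recovers the metric relation $\partial_i\ln g^{jj}=\partial_iu^i/(u^j-u^i)$ of Theorem~\ref{t:7.2}, which the paper does not need at this stage.
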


\begin{proof}
(i) Assume that $(\bs K_a)$ def\/ined by \eqref{(8.7)} is a basis of a KS-space. Then they are linearly independent and there are orthogonal (separable) coordinates in which these tensors as well as~$\bs L$ are diagonalised and equations
\begin{gather}
\partial_i\rho_a^j=\big(\rho_a^i-\rho_a^j\big) \partial_i\ln g^{jj},\qquad \det\big[\rho_a^i\big]\neq 0,
\label{(8.9)}
\end{gather}
hold, being $\rho_a^i$ the eigenvalues of~$\bs K_a$. Due to~\eqref{(8.7)}, these eigenvalues fulf\/ill the recurrence relation
$\rho_a^i=\tfrac 1a \sum_k\rho_{a-1}^ku^k-\rho_{a-1}^i u^i$. On the other hand, from the f\/irst equations~\eqref{(8.3)} and~\eqref{(8.2)} we get $\tfrac 1a \sum_k\sigma_{a-1}^k u^k-\sigma_{a-1}^i u^i=\sigma_a-\sigma_{a-1}^i u^i=\sigma_a^i$. This shows that
\begin{gather*}
\rho_a^i=\sigma_a^i(\underline u).
\end{gather*}
It follows that: (I)~Due to the second equations \eqref{(8.3)} and~\eqref{(8.9)}, the eigenvalues $u^i$ are simple. (II)~Due to the f\/irst equation \eqref{(8.2)}, the def\/inition \eqref{(8.7)} implies the alternative def\/initions~\eqref{(8.8)}. (III)~Due to the f\/irst equation \eqref{(8.9)},  $\partial_i\rho_a^i=0$ thus, $0=\sum\limits_{h\neq i}\partial_iu^h \sigma_{a-1}^{hi}$.
Let us consider the case $i=1$. We get the linear homogeneous system of $n-1$ equations
\begin{gather}
\sum_{h>1}\partial_1u^h \sigma_{a-1}^{h1}=0, \qquad a=1,\ldots,n-1,
\label{(8.10)}
\end{gather}
in the $n-1$ unknown functions $\partial_1u^h$, with $h=2,\ldots,n$.
We can put $\sigma_{a-1}^{h1}=\tilde\sigma_b^h$, where $\tilde\sigma_b^h$, $b=0,\ldots, n-2$, are the symmetric polynomials in the $n-1$ variables $(u^2,\ldots,u^n)$. In analogy with the third equation~\eqref{(8.4)} we have $\det[\tilde\sigma_b^h]=\prod\limits_{j>i>1}(u^i-u^j)$, thus $\det[\tilde\sigma_b^h]=\det[\sigma_{a-1}^{h1}]\neq 0$. It follows from~\eqref{(8.10)} that $\partial_1u^h=0$ for all $h>1$. In a similar way we prove that
$\partial_iu^h=0$  for all $h\neq i$. This shows that $\bs H(\bs L)=0$.
Finally, from $\bs K_1=\text{tr}(\bs K_1\bs L)-\bs L$ we get
\begin{gather}
\bs L\doteq \frac 1{n-1}\operatorname{tr}(\bs K_1) \bs G-\bs K_1.
\label{(8.11)}
\end{gather}
Being $\bs K_1$ a KT, $\bs L$ is a CKT.
(ii) Conversely, assume that $\bs L$ is an L-tensor. In coordinates associated with $\bs L$ we have
$\partial_iu^h=0$ for all $h\neq i$, and moreover, $0=(u^i-u^j)\partial_i\ln g^{jj}+\partial_iu^i$.
By~\eqref{(8.6)} we get
\begin{gather*}
\partial_i\sigma_a^j=\partial_iu^i \dfrac{\sigma_a^i-\sigma_a^j}{u^j-u^i}=\big(\sigma_a^i-\sigma^j_a\big) \partial_i\ln g^{jj}.
\end{gather*}
Being $\rho_a^i=\sigma_a^i$, this shows that the tensors $\bs K_a$ are Killing tensors. They are pointwise independent due to~\eqref{(8.3)}.
\end{proof}

We call {\it L-system} any separable orthogonal system whose KS-space is generated by an L-tensor according to Theorem~\ref{t:8.1}. We call {\it L{\rm -}web} any orthogonal web corresponding to an L-system\footnote{L-systems have been considered in~\cite{Blaszak-Marciniak-2006, Blaszak-Sergyeyev-2005}. See also the  extensive classif\/ication~\cite{Benenti-2005} of special symmetric two-tensors (including L-tensors) with application to the theory of the equivalent dynamical systems and to the theory of cofactor and bi-cofactor systems.\label{pagenote2}}.

\begin{Remarks}\quad
\begin{enumerate}\itemsep=0pt
\item[(i)] All the Killing tensors $\bs K_a$ ($a\neq 0$) have simple eigenvalues. The sequence~\eqref{(8.7)} was suggested by the analysis \cite{Benenti-1992b} of the planar inertia tensor $\bs L$ of an asymmetric massive body in the Euclidean $n$-space. The tensor $\bs K_1$ is the corresponding inertia tensor.
Formula~\eqref{(8.8)} shows that $\bs K_a=0$ for $a>n-1$, since for $a=n$ the right-hand side vanishes due to the Hamilton--Cayley theorem. Formula~\eqref{(8.7)} is more ef\/fective than~\eqref{(8.8)} since it does dot require the knowledge of the eigenvalues of~$\bs L$.

\item[(ii)] Within  a completely dif\/ferent context, a sequence like~\eqref{(8.7)} is considered by Schouten  \cite[p.~30]{Schouten-1954} generated by a~matrix~$\bs P$, as an ef\/fective tool for computing the eigenvectors of a~matrix $\bs P$ without solving systems of linear equations (when the eigenvalues are known and simple). Such a method was f\/irstly introduced by Fettis~\cite{Fettis-1950} and Souriau~\cite{Souriau-1950}.

\item[(iii)] We observe with Schouten \cite{Schouten-1954} that the (1,1) tensor $\bs Q(x)=\operatorname{cof}(\bs L-x\bs G)$ is polynomial of degree $n-1$ in $x$, whose coef\/f\/icients, up to the sign, are the tensors $\bs K_a$ def\/ined in~\eqref{(8.8)}. We recall that the cofactor $\tilde{\bs A}=\operatorname{cof}(\bs A)$ of $\bs A$ is def\/ined by $\bs A\tilde{\bs A}=\tilde{\bs A}\bs A=\det(\bs A)\bs G$. Hence, the St\"ackel systems of the kind considered in the last two theorems are just the so-called {\it cofactor systems} (cf.~\cite{Blaszak-Ma-2003}).
In fact, they should be called {\it Levi-Civita systems} (so that ``L-systems'' is a good notation) since the separable metric~\eqref{(7.3)} (but not the tensor $\bs L$) appears for the f\/irst time in~\cite{LC-1896}, where it is also shown that for such a metric the function
\begin{gather*}
F(\underline q,\underline{\dot q},c)\doteq  \prod_{j\neq i} \big(u^j+c\big) \big|u^j-u^i\big| \big(\dot q^i\big)^2
\end{gather*}
is a f\/irst integral of the geodesics for all values of the parameter $c$. Since $F$ is a~polynomial in~$c$ of degree~$n-1$, its coef\/f\/icients gives rise to~$n$ f\/irst integrals. These f\/irst integrals coincide, up to the sign and after the Legendre transformation, with the f\/irst integ\-rals~$P(\bs K_a)$.

\item[(iv)] Due to item~(v) of Theorem~\ref{t:7.2}, a necessary condition for a~St\"ackel system to be an L-system is the Robertson condition: $[\! [\bs K,\bs R]\! ]=0$ for a characteristic tensor $\bs K$ (thus, for all elements of the KS-space).
 \end{enumerate}
 \end{Remarks}

     A criterion for testing if a St\"ackel system is an L-system is the following.

\begin{Theorem}
A St\"ackel system is an L-system if and only if in the corresponding KS-space there exists a characteristic tensor~$\bs K_1$ such that the tensor~$\bs L$ defined by~\eqref{(8.11)} is torsionless.
\end{Theorem}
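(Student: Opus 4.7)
The plan is to exploit the algebraic correspondence \eqref{(8.11)} linking a characteristic tensor in the KS-space with a candidate tensor $\bs L$, and then pass between the statements ``$\bs L$ is an L-tensor'' and ``$\bs K_1$ generates the KS-space via \eqref{(8.7)}'' using Theorem~\ref{t:6.1} together with Theorem~\ref{t:8.1}.

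For the ``only if'' direction, I start from the fact that, by definition, the KS-space of an L-system has a basis $(\bs K_a)$ built from an L-tensor $\bs L$ via \eqref{(8.7)}. Its index-one element is $\bs K_1 = \operatorname{tr}(\bs L)\bs G - \bs L$; taking traces yields $\operatorname{tr}(\bs K_1) = (n-1)\operatorname{tr}(\bs L)$, which rearranges to exactly \eqref{(8.11)}. It then remains to verify that this $\bs K_1$ is a characteristic tensor: by Remark~(i) following Theorem~\ref{t:8.1}, all $\bs K_a$ with $a\neq 0$ have simple eigenvalues, and those eigenvectors coincide with the eigenvectors of $\bs L$, hence are normal. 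The torsionlessness of $\bs L$ is built into the definition of an L-tensor.

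For the converse, given a characteristic $\bs K_1\in \cal K$ for which $\bs L$ defined by \eqref{(8.11)} is torsionless, I rewrite \eqref{(8.11)} as $\bs K_1 = f\,\bs G - \bs L$ with $f=\operatorname{tr}(\bs L)=\frac{1}{n-1}\operatorname{tr}(\bs K_1)$. Theorem~\ref{t:6.1} then immediately identifies $\bs L$ as a conformal Killing tensor of trace-type. The eigenvalues of $\bs L$ are $f-\rho^i$, where $\rho^i$ are the (simple) eigenvalues of the characteristic tensor $\bs K_1$, so they are pointwise simple. Combined with the hypothesis of vanishing torsion, this shows $\bs L$ is an L-tensor. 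Applying Theorem~\ref{t:8.1} to $\bs L$ produces a KS-space $\cal K'$ whose index-one element, computed from \eqref{(8.7)} or \eqref{(8.8)}, equals $\operatorname{tr}(\bs L)\bs G - \bs L = \bs K_1$. Since $\cal K$ and $\cal K'$ share the characteristic element $\bs K_1$, by the uniqueness remark (ii) of Section~\ref{section3} they coincide. Hence the original Stäckel system is the L-system generated by $\bs L$.

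The only delicate point I anticipate is conceptual rather than computational: one must check that the $\bs K_1$ appearing in \eqref{(8.11)} is actually the index-one element of the sequence \eqref{(8.7)} associated with $\bs L$, so that the two KS-spaces really do share a characteristic element. The short trace identity $\operatorname{tr}(\bs L)=\operatorname{tr}(\bs K_1)/(n-1)$ handles exactly this match-up, and once it is in place the rest of the argument is a direct assembly of Theorems~\ref{t:6.1} and~\ref{t:8.1} together with the uniqueness of a KS-space through any of its characteristic elements.
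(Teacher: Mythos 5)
Your proposal is correct and follows essentially the same route as the paper: for the converse you observe (as the paper does, via Theorem~\ref{t:6.1}) that $\bs L=\tfrac 1{n-1}\operatorname{tr}(\bs K_1)\bs G-\bs K_1$ is a CKT with simple eigenvalues, hence an L-tensor once torsionlessness is assumed, and then you invoke Theorem~\ref{t:8.1} plus the fact that two KS-spaces sharing a ChKT coincide. The only difference is that you spell out the trace identity $\operatorname{tr}(\bs K_1)=(n-1)\operatorname{tr}(\bs L)$ and the "only if" direction, which the paper dismisses as obvious.
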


\begin{proof}
(i) Assume that there exists a ChKT $\bs K_1$ such that the tensor $\bs L$ def\/ined by \eqref{(8.11)} is torsionless. Any tensor of the kind $f\bs G+\bs K$ is a CKT if $\bs K$ is a KT. Since $\bs K_1$ has simple eigenvalues, also $\bs L$ has simple eigenvalues. Then $\bs L$ is an L-tensor. The tensors $\bs K_a$ constructed by applying~\eqref{(8.7)} form a KS-space $\cal K_*$ which has $\bs K_1$ in common with the original KS-space~$\cal K$.  Thus, $\cal K_*=\cal K$ since two KS-spaces with a ChKT in common coincide. (ii) The converse is obvious.
\end{proof}

About the {\it uniqueness} of an L-tensor generating an L-system, it can be proved that

\begin{Theorem}
Two L-tensors $\bs L$ and $\tilde{\bs L}$ generates the same L-system if and only if $\tilde{\bs L}=a\bs L+b\bs G$, $a,b\in\mathbb R$, $a\neq 0$.
\end{Theorem}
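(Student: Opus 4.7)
\emph{Sufficiency.} Given $\tilde{\bs L}=a\bs L+b\bs G$ with $a\ne 0$, I first verify that $\tilde{\bs L}$ is itself an L-tensor: its eigenvalues $au^i+b$ are simple since $a\ne 0$; its Nijenhuis torsion vanishes via the standard identity $\bs H(a\bs L+b\bs G)=a^2\bs H(\bs L)=0$; and it is a CKT by bilinearity of $\{\,\cdot\,,P_{\bs G}\}$ together with $\{P_{\bs G},P_{\bs G}\}=0$. A direct computation then gives
\[
\tilde{\bs K}_1:=\operatorname{tr}(\tilde{\bs L})\bs G-\tilde{\bs L}=a\bs K_1+(n-1)b\bs G,
\]
which lies in the original KS-space $\cal K$ and retains simple eigenvalues since $a\ne 0$. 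Thus $\tilde{\bs K}_1$ is a characteristic tensor belonging to both $\cal K$ and the KS-space $\tilde{\cal K}$ generated by $\tilde{\bs L}$; by the observation in Section~\ref{section3} that two KS-spaces sharing an element with simple eigenvalues must coincide, $\tilde{\cal K}=\cal K$.

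\emph{Necessity.} Suppose $\bs L$ and $\tilde{\bs L}$ generate the same KS-space $\cal K$. Both tensors are diagonalised in the common St\"ackel web of $\cal K$ (via the characteristic tensors $\operatorname{tr}(\bs L)\bs G-\bs L$ and $\operatorname{tr}(\tilde{\bs L})\bs G-\tilde{\bs L}$, which share the web eigenframe). Choose orthogonal coordinates $(q^i)$ associated with $\bs L$, so $u^i=u^i(q^i)$; since $\tilde{\bs L}$ is torsionless with the same eigenvectors, Theorem~\ref{t:3.8} yields $\tilde u^i=\tilde u^i(q^i)$ in these same coordinates. Writing the metric in the Levi-Civita form \eqref{(7.2)} with respect to each tensor and equating,
\[
\phi_k(q^k)\prod_{i\ne k}|u^i-u^k|^{-1}=\tilde\phi_k(q^k)\prod_{i\ne k}|\tilde u^i-\tilde u^k|^{-1},
\]
I take the logarithmic derivative with respect to $q^j$ for $j\ne k$; since each $u^i,\tilde u^i$ depends only on $q^i$, only the $i=j$ term survives on each side, yielding
\[
\frac{\partial_j u^j}{u^j-u^k}=\frac{\partial_j \tilde u^j}{\tilde u^j-\tilde u^k}.
\]
Cross-multiplying, $(\partial_j u^j)\,\tilde u^j-(\partial_j\tilde u^j)\,u^j=(\partial_j u^j)\,\tilde u^k-(\partial_j\tilde u^j)\,u^k$. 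The left side depends on $q^j$ only, so differentiating the right in $q^k$ forces $\partial_j u^j/\partial_j\tilde u^j$ to be a constant $a$ independent of $j$; a second use of the same identity shows that the integration constant $c$ in $u^j=a\tilde u^j+c$ is likewise $j$-independent. Hence $\bs L=a\tilde{\bs L}+c\bs G$ in diagonalised form, i.e., $\tilde{\bs L}=a^{-1}\bs L-a^{-1}c\,\bs G$. The coefficient $a$ must be non-zero, else every $u^j$ would equal the common constant $c$, contradicting the simplicity of the eigenvalues of $\bs L$.

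\emph{Main obstacle.} The subtlest point is the separation-of-variables step when some eigenvalues of $\bs L$ or $\tilde{\bs L}$ are locally constant (as permitted by the L-tensor definition), since then the ratio $\partial_j u^j/\partial_j\tilde u^j$ is not defined. The identity $(\partial_j u^j)(\tilde u^j-\tilde u^k)=(\partial_j\tilde u^j)(u^j-u^k)$ combined with the simplicity of the eigenvalues forces $\partial_j u^j=0\Leftrightarrow\partial_j\tilde u^j=0$, so the constant and non-constant indices match up between the two tensors; the constant $a$ is then extracted from any non-constant index (which must exist, for otherwise $\bs L$ would be a scalar multiple of $\bs G$, contradicting simplicity when $n\ge 2$) and the linear relation is extended to the constant indices by matching values at compatible points.
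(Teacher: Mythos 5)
The paper states this theorem without proof (``it can be proved that\ldots''), so your argument must stand on its own. The sufficiency half is correct: $a\bs L+b\bs G$ is indeed an L-tensor, $\operatorname{tr}(\tilde{\bs L})\bs G-\tilde{\bs L}=a\bs K_1+(n-1)b\bs G$ is a characteristic tensor lying in both KS-spaces, and the fact that two KS-spaces sharing an element with simple eigenvalues coincide finishes it. The necessity half is also essentially sound in the generic situation: both tensors are diagonalised in the common web, Theorem~\ref{t:3.8} gives $\tilde u^i=\tilde u^i(q^i)$ in the coordinates associated with $\bs L$, and equating the two Levi-Civita forms \eqref{(7.2)} of the metric yields $(\partial_ju^j)(\tilde u^j-\tilde u^k)=(\partial_j\tilde u^j)(u^j-u^k)$, from which the affine relation follows by separation of variables \emph{provided at least one eigenvalue is non-constant}. (A minor repair: when exactly one index is non-constant your ``differentiate in $q^k$'' step degenerates to $0=0$, and you must instead integrate $\partial_j\tilde u^j/(\tilde u^j-\tilde u^k)=\partial_ju^j/(u^j-u^k)$ directly; this is routine.)

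The genuine gap is your closing claim that a non-constant index must exist ``for otherwise $\bs L$ would be a scalar multiple of $\bs G$, contradicting simplicity''. That is false: an L-tensor may have all eigenvalues constant \emph{and} distinct, and by Theorem~\ref{t:9.3}(i) this is precisely the case of a constant tensor on a flat manifold generating the Cartesian web. In that case the necessity direction genuinely fails as stated: on $\mathbb E_3$ the constant tensors $\bs L=\operatorname{diag}(1,2,3)$ and $\tilde{\bs L}=\bs L^2=\operatorname{diag}(1,4,9)$ are both L-tensors, and by \eqref{(8.8)} each generates, via a Vandermonde determinant with distinct nodes, the full three-dimensional space of constant diagonal tensors, i.e., the same Cartesian KS-space; yet $\tilde{\bs L}\neq a\bs L+b\bs G$. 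So the theorem requires excluding the all-constant (Cartesian) case, or a more restrictive reading of ``generate the same L-system''; your proof cannot close this case because the statement is false there, and the incorrect justification is what conceals it.
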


In accordance with the remarks at the end of Section~\ref{section5}, a basis
$(\bs K_a)$ of a KS-space corresponds to the inverse of a St\"ackel matrix
$\bs S^{-1}=\big[\varphi_{(a)}^i\big]$, $\varphi_{(a)}^i=K_a^{ii}$.
In the sequence~\eqref{(8.7)} we have \mbox{$\bs K_0=\bs G$}, then $\varphi_{(0)}^i=g^{ii}$ and $\varphi_{(a)}^i=K_a^{ii}=\sigma_a^ig^{ii}$.
By using the formulas concerning the elementary symmetric function at the beginning of Section~\ref{section8} it can be proved that

\begin{Theorem}
For an L-system the St\"ackel matrix associated with the basis~\eqref{(8.7)}, $\bs S=\big[\varphi_i^{(a)} \big]$, assuming $u^1<u^2<\cdots<u^n$, is of  alternating Vandermonde type,
\begin{gather}
\varphi_i^{(a)}=(-1)^{n-a+i}  (u^i)^{n-a-1}, \qquad
a=0,\ldots,n-1,\qquad
i=1,\ldots,n\quad \hbox{\rm index of row.}
\label{(8.12)}
\end{gather}
\end{Theorem}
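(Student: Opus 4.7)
The strategy is to verify directly that the proposed matrix is the two-sided inverse of the matrix $\bigl[\varphi_{(a)}^i\bigr]=\bigl[K_a^{ii}\bigr]$ and to check that each of its rows depends only on the corresponding variable $q^i$, so that it is genuinely a St\"ackel matrix.

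First I would compute $\varphi_{(a)}^i$ in closed form. From the proof of Theorem~\ref{t:8.1}, in coordinates associated with $\bs L$ the eigenvalues of $\bs K_a$ are $\rho_a^i = \sigma_a^i(\underline u)$, hence $\varphi_{(a)}^i = \sigma_a^i g^{ii}$. Writing $g^{ii}$ via~\eqref{(7.3)} and using the ordering $u^1<\cdots<u^n$ to resolve the absolute values, the $n-i$ factors with $k>i$ flip sign, giving
\[
\varphi_{(a)}^i \;=\; \frac{(-1)^{n-i}\,\sigma_a^i(\underline u)}{\prod_{k\neq i}(u^i-u^k)}.
\]

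The key algebraic step is the factored-versus-expanded form of the degree-$(n-1)$ polynomial whose roots are $\{u^k:k\neq j\}$, namely
\[
\prod_{k\neq j}(x-u^k)\;=\;\sum_{a=0}^{n-1}(-1)^a\,\sigma_a^j(\underline u)\,x^{n-1-a}.
\]
Setting $x=u^i$: when $i\neq j$ the left-hand side vanishes (since $u^i$ is one of the roots), and when $i=j$ it equals $\prod_{k\neq j}(u^j-u^k)$, exactly matching the denominator of $\varphi_{(a)}^j$.

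Combining, the sign factors in $\varphi_i^{(a)}\varphi_{(a)}^j$ collapse as $(-1)^{n-a+i}(-1)^{n-j}=(-1)^{a+i-j}$, so
\[
\sum_{a=0}^{n-1}\varphi_i^{(a)}\varphi_{(a)}^j
\;=\; \frac{(-1)^{i-j}}{\prod_{k\neq j}(u^j-u^k)}\sum_{a=0}^{n-1}(-1)^a(u^i)^{n-a-1}\sigma_a^j
\;=\; \frac{(-1)^{i-j}\prod_{k\neq j}(u^i-u^k)}{\prod_{k\neq j}(u^j-u^k)},
\]
which equals $1$ for $i=j$ and $0$ otherwise; thus $\bs S\bs S^{-1}=\bs I$. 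The St\"ackel row property is immediate: by Theorem~\ref{t:7.2}(i), each $u^i$ is a function of $q^i$ alone, so row $i$ of $\bs S$ depends only on $q^i$.

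The main obstacle is the sign bookkeeping, since three separate sources of signs have to conspire: the $(-1)^{n-i}$ from resolving the absolute values in~\eqref{(7.3)}, the alternating signs inside $\prod_{k\neq j}(x-u^k)$, and the exponent $(-1)^{n-a+i}$ in~\eqref{(8.12)}. The ordering hypothesis $u^1<\cdots<u^n$ is exactly what makes these three contributions match; without a fixed ordering, one would only recover~\eqref{(8.12)} up to an overall permutation of rows.
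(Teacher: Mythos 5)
Your proof is correct. The paper states this theorem without proof, offering only the remark that it follows from the elementary-symmetric-function formulas at the beginning of Section~\ref{section8}; your direct verification --- combining $\varphi_{(a)}^i=\sigma_a^ig^{ii}$ with the sign $(-1)^{n-i}$ from resolving the absolute values in~\eqref{(7.3)} under the ordering $u^1<\cdots<u^n$, and the identity $\prod_{k\neq j}(x-u^k)=\sum_{a}(-1)^a\sigma_a^j\,x^{n-1-a}$ evaluated at $x=u^i$ --- is precisely the computation that remark points to, and it checks out against the paper's explicit $n=3$ example.
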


For $n=3$ we have ($a$ index of row)
\begin{gather*}
\big[\varphi_{(a)}^i\big]=\bmatrix
g^{11} & g^{22} & g^{33}
\\[3pt]
(u^2+u^3)g^{11} & (u^3+u^1)g^{22} & (u^1+u^2)g^{33}
\\[3pt]
u^2u^3g^{11} & u^3u^1g^{22} & u^1u^2g^{33}
\endbmatrix,\\
g^{11}=\dfrac{1}{(u^1-u^2)(u^1-u^3)},
\qquad
g^{22}=\dfrac{1}{(u^2-u^3)(u^1-u^2)},
\qquad
g^{33}=\dfrac{1}{(u^3-u^2)(u^3-u^1)}.
\end{gather*}
In accordance with Theorem~\ref{t:8.1}, the inverse matrix is
\begin{gather*}
\bs S=[\varphi_i^{(a)}]=\bmatrix
(u^1)^2 & -u^1 & 1
\\[6pt]
- (u^2)^2 & u^2 & -1
\\[6pt]
(u^3)^2 & -u^3 & 1 \cr\endbmatrix.
\end{gather*}

\begin{Remark}  If we multiply each row $\varphi_i^{(a)}$ of a St\"ackel matrix by a function $f_i(q^i)\neq 0$, then we get a new St\"ackel matrix $\tilde\varphi_i^{(a)}=f_i\varphi_i^{(a)}$ whose inverse $\tilde\varphi^i_{(a)}$ def\/ines a new basis of the  same KS-space. By multiplying the lines of~\eqref{(8.12)} by $\pm 1$ in a suitable way and by changing $\bs L$ in $-\bs L$, we can get a St\"ackel matrix which is of the Vandermonde type (see the case $n=3$, for instance).
\end{Remark}

\section{The functional independence of the eigenvalues of an L-tensor}\label{section9}

The results of Section~\ref{section8} hold without any assumption on the functional independence of the eigen\-va\-lues~$(u^i)$ of $\bs L$. Some of them may be constant. The only essential assumption is that they are pairwise and pointwise distinct. The following theorems deal with this matter.

\begin{Theorem}\label{t:9.1}
If a torsionless CKT has functionally independent eigenvalues $(u^i)$, then it is an L-tensor $($its eigenvalues are pointwise simple$)$.
\end{Theorem}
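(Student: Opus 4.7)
The strategy is to promote the eigenvalues themselves to local coordinates and then read off a rigidity constraint from the defining CKT equation~\eqref{(6.2)} that forbids any coincidence of eigenvalues.

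Let $U\subset Q$ denote the open set on which the $n$ eigenvalues of $\bs L$ are pointwise pairwise distinct. On $U$, Theorem~\ref{t:3.8} applies: since $\bs L$ is torsionless with simple eigenvalues there, there exist orthogonal coordinates $(q^i)$ diagonalizing $\bs L$ with $u^j=u^j(q^j)$. The functional-independence hypothesis $du^1\wedge\cdots\wedge du^n\neq 0$ reads, in these adapted coordinates, $\partial_{q^j}u^j\neq 0$ for each $j$; hence $q^j\mapsto u^j$ is a separated change of variables, and $(u^1,\ldots,u^n)$ is itself a system of orthogonal coordinates on $U$ in which $\bs L$ is diagonal with eigenvalues $u^i$.

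The next step is to apply Theorem~\ref{t:6.2} in these new coordinates. Writing $\partial_i=\partial/\partial u^i$, one has $\partial_{u^i}u^k=\delta^k_i$, so $C_{u^i}=\partial_{u^i}u^i=1$, and~\eqref{(6.2)} collapses, for $i\neq k$, to
\begin{gather*}
0=\big(u^i-u^k\big)\,\partial_{u^i}\ln g^{kk}+1,\qquad\text{equivalently}\qquad \partial_{u^i}\ln g^{kk}=\frac{1}{u^k-u^i}.
\end{gather*}

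Now suppose, for contradiction, that there is a point $p$ in the domain of $\bs L$ with $u^i(p)=u^k(p)$ for some $i\neq k$, so $p\notin U$. By the global functional-independence hypothesis, $(u^1,\ldots,u^n)$ remain admissible coordinates in a neighborhood of $p$, and the metric component $g^{kk}$ is smooth and positive there, so $\partial_{u^i}\ln g^{kk}$ is bounded near $p$. But the displayed formula forces this derivative to diverge as one approaches $p$ along any sequence in $U$. This contradiction shows that $U$ coincides with the full domain, so the eigenvalues of $\bs L$ are pointwise simple and $\bs L$ is an L-tensor. The main obstacle I anticipate is ensuring the $(u^i)$ coordinate frame persists across potential coincidence points: it is the hypothesis of \emph{functional} (rather than merely pointwise) independence of the $u^i$ that delivers this, converting a potentially delicate degeneration analysis into the clean blow-up contradiction above.
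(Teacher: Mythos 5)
Your proof is correct and rests on exactly the same identity as the paper's: you promote the functionally independent eigenvalues to orthogonal coordinates and substitute into \eqref{(6.2)}, where $C_i=\partial_iu^i=1$ and $\partial_iu^k=\delta_i^k$ force $(u^i-u^k)\,\partial_i\ln g^{kk}=-1$ for $i\neq k$, which is incompatible with $u^i=u^k$. The paper simply evaluates this identity at the point in question to obtain $0=1$ directly, which is shorter than your blow-up-of-the-derivative argument and spares you from having to verify that sequences in $U$ converging to the putative coincidence point $p$ actually exist --- a fact that does follow from the linear independence of $du^i$ and $du^k$ at $p$ (so the coincidence locus is a hypersurface and $U$ is dense), but which your write-up leaves implicit.
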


\begin{proof}
In this case the eigenvalues def\/ine local orthogonal coordinates $q^i=u^i$ in which the tensor is diagonalised. We can apply Theorem~\ref{t:6.2}. Equations \eqref{(6.2)} become $\delta_i^k=(u^i-u^k)\partial_i\ln g^{kk}+1$. For $i\neq k$ we have $u^i\neq u^k$.
\end{proof}

Note that in this case $\bs L$ is a special type of L-tensor (the type considered by Crampin \cite{Crampin-2003a,Crampin-2003b}).

\begin{Theorem}\label{t:9.2}
Let $\bs L$ be an L-tensor. $(i)$  The eigenvalues $(u^i)$ of $\bs L$ are independent functions $($i.e., they def\/ine locally an orthogonal coordinate system$)$ if and only if~$\bs L$ is  not invariant with respect to  a Killing vector $\bs X$. $(ii)$ If there exists a Killing vector~$\bs X$ such that $[\bs X,\bs L]=0$, then~$\bs X$ is a linear combination $($with constant coefficients$)$ of Killing vectors in involution which are eigenvectors of~$\bs L$.
\end{Theorem}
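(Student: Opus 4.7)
The plan is to work throughout in the normal coordinates $(q^i)$ associated with $\bs L$, where $\bs L$ is diagonal, each eigenvalue $u^i$ depends only on $q^i$ (Theorem~\ref{t:7.2}(i)), and the metric has the form~\eqref{(7.3)}. Introduce the \emph{exceptional set}
\begin{gather*}
S=\{i:\partial_i u^i=0\}=\{i:u^i\text{ is constant}\}.
\end{gather*}
Since $du^i$ is a nonzero multiple of $dq^i$ exactly when $\partial_i u^i\neq 0$, functional independence of $(u^i)$ is equivalent to $S=\varnothing$. So statement~(i) is just the nontriviality statement of~(ii), and I will focus on proving~(ii).

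The first step will be to turn the condition $[\bs X,\bs L]=0$ into pointwise constraints on the components $X^i$. Writing $\bs L=u^i\,\partial_i\otimes dq^i$ and expanding the Lie derivative, I expect the $(1,1)$-components to reduce to
\begin{gather*}
(\mathcal{L}_{\bs X}\bs L)^i_j=\delta^i_j\,X^i\partial_iu^i+(u^i-u^j)\,\partial_jX^i,
\end{gather*}
where I have used $\partial_k u^i=0$ for $k\neq i$. The off-diagonal part ($i\neq j$) forces $\partial_jX^i=0$, since the eigenvalues are pointwise distinct, so each $X^i$ depends only on $q^i$; the diagonal part then gives $X^i\partial_iu^i=0$, hence $X^i\equiv 0$ for $i\notin S$.

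The main step, and the one I expect to be the most delicate, is squeezing the remaining information out of the Killing equation $\mathcal{L}_{\bs X}\bs G=0$ for $i\in S$. Using~\eqref{(7.3)} together with the identity $\partial_j\ln g^{kk}=\partial_ju^j/(u^k-u^j)$ from Theorem~\ref{t:7.2}, one sees that $\partial_k g^{ii}=0$ whenever $k\in S$ (for $k=i$ because $u^i$ is constant; for $k\in S\setminus\{i\}$ because $\partial_k u^k=0$ kills the chain-rule factor, as $g^{ii}$ depends on $q^k$ only through $u^k$). Since also $X^k=0$ for $k\notin S$, the diagonal Killing equation $X^k\partial_kg^{ii}-2g^{ii}\partial_iX^i=0$ collapses to $\partial_iX^i=0$; combined with $X^i=X^i(q^i)$, this forces $X^i$ to be a constant for every $i\in S$. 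The off-diagonal Killing equations are automatically satisfied because all relevant partial derivatives of $X^i$ vanish.

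Finally, I will verify that each individual $\partial_i$ with $i\in S$ has the properties claimed in~(ii): it is a Killing vector (because $\partial_ig^{kk}=0$ for every $k$, by the same computation), an eigenvector of $\bs L$ with eigenvalue $u^i$ (obvious in the diagonalising coordinates), and commutes with the other $\partial_j$ so the whole family is in involution. Writing $\bs X=\sum_{i\in S}c_i\,\partial_i$ then gives exactly the decomposition required by~(ii), and~(i) follows immediately: if $S=\varnothing$ only the zero vector qualifies, while if $S\neq\varnothing$ any single $\partial_i$ with $i\in S$ is a nontrivial Killing symmetry of $\bs L$.
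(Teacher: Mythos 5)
Your proof is correct and follows essentially the same route as the paper's: working in the normal coordinates associated with $\bs L$, the off-diagonal parts of the invariance and Killing conditions give $X^i=X^i(q^i)$, the diagonal parts give $X^i\partial_iu^i=0$ and then $\partial_iX^i=0$, and the constant eigenvalues correspond to ignorable coordinates whose $\partial_i$ are commuting Killing eigenvectors of $\bs L$. The only (immaterial) difference is that you expand the Lie derivative of the $(1,1)$-tensor directly, whereas the paper expands the Poisson brackets $\{P_{\bs X},P_{\bs G}\}$ and $\{P_{\bs X},P_{\bs L}\}$ in the momenta to obtain the same component equations.
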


In case (i) we have no {\it symmetry} of the separable web generated by the eigenvectors of~$\bs L$. This means that all the structures associated with~$\bs L$ are not invariant with respect to  groups of isometries.

\begin{proof}
The eigenvalues of a (1,1)-tensor~$\bs L$ are functionally independent, and thus they def\/ine a~coordinate system, if and only if $\det[\partial_iu^j]\neq 0$. For a~torsionless tensor with simple eigenvalues we have, with respect to  the associated coordinates, $\partial_iu^j=0$ for $i\neq j$. Hence, the eigenvalues are independent functions if and only $\partial_iu^i\neq 0$ for all indices. (i)~If an eigenvalue $u^i$ of $\bs L$ is constant, $\partial_iu^i=0$, then from the expression~\eqref{(7.3)} of the metric we see that the corresponding (normalised) coordinate $q^i$ is ignorable. This means that $\partial/\partial q^i$ is a Killing vector which leaves invariant $\bs L$. Conversely, let us assume that there is a Killing vector $\bs X$ such that $[\bs X,\bs L]=0$. These two assumptions on~$\bs X$ are equivalent to $\{P_{\bs X}, P_{\bs G}\}=0$ and $\{P_{\bs X}, P_{\bs L}\}=0$. In an orthogonal coordinate system $(q^i)$ associated with $\bs L$, these two equations read $\{X^ip_i, g^{jj}p_j^2\}=0$ and $\{X^ip_i, u^jg^{jj}p_j^2\}=0$, respectively. Since they are algebraic equations in $\underline p$, to be satisf\/ied for any value of these variables,they are equivalent to
\begin{gather*}
\sum_i X^i\partial_ig^{kk}-2 g^{kk} \partial_kX^k=0,
\qquad
 g^{jj} \partial_jX^k+g^{kk} \partial_kX^j, \qquad j\neq k,
\\
 \sum_i X^i\partial_i\big(u^k g^{kk}\big)-2 u^k g^{kk} \partial_kX^k=0,
\qquad u^j g^{jj} \partial_jX^k+u^k g^{kk} \partial_kX^j, \qquad j\neq k.
\end{gather*}
The f\/irst two equations characterise the Killing vectors in orthogonal coordinates. The second and the fourth equations imply $(u^j-u^k) g^{jj} \partial_jX^k=0$ for $j\neq k$. Since $u^j\neq u^k$ we conclude that $\partial_jX^k=0$ for $j\neq k$, which means that $X^i=X^i(q^i)$. Since $\partial_iu^k=0$ for $i\neq k$, from the third equation it follows that $X^k \partial_ku^k g^{kk}+u^k \sum_iX^i \partial_ig^{kk}-2 u^k g^{kk} \partial_kX^k=0$. Due to the f\/irst equation~\eqref{(8.10)}, this last equation implies $X^k \partial_ku^k=0$ (no summation over the index $k$).
Up to a reordering of the coordinates, let us assume that $X^a=0$ and $X^\alpha\neq 0$ for $a=1,\ldots, m$ and $\alpha=m+1,\ldots, n$. From the last equation it follows that $u^\alpha={\rm const}$ and $\bs X=X^\alpha \partial_\alpha$. At the beginning of this proof we have seen that if $u^\alpha={\rm const}$, then $q^\alpha$ are ignorable coordinates (we always assume that the coordinates $(q^i)$ are normalised so that the metric assumes  the form~\eqref{(7.3)}). Thus, $\partial_\alpha$ are Killing vectors in involution and eigenvectors of~$\bs L$. Since $\bs X=X^\alpha \partial_\alpha$ is a~Killing vector, the components~$X^a$ must be constant.
\end{proof}

Assume that the eigenvalues $u^\alpha$ ($u^\alpha=m+1,\ldots,n$) of an L-tensor are constant and the remaining~$(u^a)$ ($a=1,\ldots,m$) are independent functions. Then we can choose associated orthogonal coordinates $(q^i)=(q^a,q^\alpha)$ such that $q^a=u^a$ and $q^\alpha$ are ignorable. From~\eqref{(7.1)} it follows that
\begin{gather*}
\big(u^a-u^b\big) \partial_a\ln g^{bb}+1=0, \qquad
\big(u^a-u^\alpha\big) \partial_a\ln g^{\alpha\alpha}+1=0,
\end{gather*}
being the remaining equations identically satisf\/ied. Thus, we are faced with three cases:
\begin{enumerate}\itemsep=0pt
\item[(I)] $m=0$, all $u^i={\rm const}$, i.e., all $q^i$ are ignorable: the manifold $Q$ is locally f\/lat, the coordinates $(q^i)$
are orthogonal Cartesian coordinates, $g^{ii}={\rm const}$ and $\bs L$ is a constant tensor.

\item[(II)] $0<m<n$: in this case $g^{\alpha\alpha}\neq{\rm const}$ due to equation $(u^a-u^\alpha)\partial_a\ln g^{\alpha\alpha}+1=0$.
Condition $g^{\alpha\alpha}\neq{\rm const}$ means that the Killing vectors $\bs X_\alpha=\partial_\alpha$ are  not translations \cite[Section~52]{Eisenhart-1933}.

\item[(III)] $m=n$, all eigenvalues are independent: this is the case examined in Theorem~\ref{t:9.2}.
\end{enumerate}

Taking into account the proof of the preceding theorem, cases (I) and (II) shows that:

\begin{Theorem}\label{t:9.3}
Let $\bs L$ be an L-tensor.
$(i)$ If $\bs L$ has all constant eigenvalues, then the mani\-fold~$Q$ is locally flat and $\bs L={\rm const}$ $($in the sense that all its components in Cartesian coordinates are constant$)$.
$(ii)$ If $\bs L$ is invariant with respect to  $m<n$ Killing vectors, then these vectors are not translations.
\end{Theorem}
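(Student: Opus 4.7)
The plan is to extract both parts directly from the three-way case split (I)--(III) displayed just before the theorem, since those cases enumerate every possibility for the eigenvalues of an L-tensor and already do the essential work. No new computation is really needed; one only has to identify the hypotheses of the theorem with the labelled cases and record the conclusions.

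For part (i), I would observe that the assumption ``all eigenvalues $u^{i}$ of $\bs L$ are constant'' is exactly case (I) (that is, $m=0$). With $\partial_{i}u^{j}=0$ for every pair of indices, equation \eqref{(7.1)} collapses to $(u^{i}-u^{j})\partial_{i}\ln g^{jj}=0$ for $i\neq j$; since $\bs L$ is an L-tensor its eigenvalues are pairwise distinct, so $\partial_{i}\ln g^{jj}=0$ for $i\neq j$. Combined with the normal form \eqref{(7.3)}, whose right-hand side is a product of constants when all $u^{i}$ are constant, this gives $g^{kk}=\text{const}$ in normal coordinates. Hence the coordinates $(q^{i})$ are orthogonal Cartesian, the Christoffel symbols vanish, and $Q$ is locally flat. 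In these coordinates $L^{ii}=u^{i}g^{ii}=\text{const}$ and $L^{ij}=0$ for $i\neq j$, so $\bs L$ has constant components, as required.

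For part (ii), I would appeal to Theorem~\ref{t:9.2}(ii) to reduce the hypothesis to case (II): every Killing vector $\bs X$ with $[\bs X,\bs L]=0$ is a constant-coefficient combination of commuting Killing eigenvectors of $\bs L$, and these eigenvectors are precisely $\partial_{\alpha}$ associated with the constant eigenvalues $u^{\alpha}$ in the adapted orthogonal coordinates. The assumption of $m<n$ such independent Killing vectors means that exactly $m$ eigenvalues of $\bs L$ are constant while at least one eigenvalue, say $u^{a}$, is non-constant, so $\partial_{a}u^{a}\neq 0$. Applying \eqref{(7.1)} with $i=a$, $j=\alpha$ then gives $(u^{a}-u^{\alpha})\partial_{a}\ln g^{\alpha\alpha}+\partial_{a}u^{a}=0$, whence $\partial_{a}\ln g^{\alpha\alpha}\neq 0$ and $g^{\alpha\alpha}$ is not constant. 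By Eisenhart's classical characterisation (quoted in the case~(II) preamble to the theorem, \cite{Eisenhart-1933}), a Killing vector of the form $\partial_{\alpha}$ generates a translation iff the corresponding metric coefficient $g^{\alpha\alpha}$ is constant; so the $\bs X_{\alpha}$ in our situation are not translations.

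The only subtlety (and the nearest thing to an obstacle) is the bookkeeping around the adapted coordinates: in part~(i) one must check that the rescaling bringing the metric into the normal form \eqref{(7.3)} is compatible with having constant $g^{ii}$ (which it is, because the factors $|u^{i}-u^{j}|$ are themselves constants), and in part~(ii) one must verify that the coordinate system produced by Theorem~\ref{t:9.2} is the \emph{same} normalised system in which \eqref{(7.1)} takes the clean form used above. Both verifications are routine once the case analysis (I)--(III) is in hand, so the proof is really just a packaging of the already-established facts.
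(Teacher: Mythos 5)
Your proof is correct and follows essentially the same route as the paper: Theorem~\ref{t:9.3} is presented there precisely as a corollary of the case analysis (I)--(III), with part~(i) read off from case~(I) via equations \eqref{(7.1)} and \eqref{(7.3)}, and part~(ii) from case~(II) together with Theorem~\ref{t:9.2}(ii) and Eisenhart's criterion that $\partial_\alpha$ is a translation if and only if $g^{\alpha\alpha}$ is constant. The extra detail you supply (deriving $g^{kk}={\rm const}$ and $\partial_a\ln g^{\alpha\alpha}\neq 0$ explicitly) merely fills in steps the paper leaves implicit.
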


\section{L-potentials}\label{section10}

In this section we propose a few remarks on the {\it potential functions} associated with an L-system. For further approaches to this matter we refer the reader to \cite{Blaszak-Ma-2003, Ibort-Magri-Marmo-2000}. By virtue of~\eqref{(1.3)} and~\eqref{(7.3)} we have

\begin{Theorem}
A potential $V$ is separable in an L-system if and only if, with respect to  normal coordinates $(q^i)$ associated with $\bs L$,
\begin{gather*}
\partial_i\partial_jV=\dfrac 1{u^j-u^i} (\partial_jV-\partial_iV),\qquad  \partial_i=\dfrac{ \partial}{\partial q^i}.
\end{gather*}
\end{Theorem}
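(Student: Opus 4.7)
The plan is to reduce everything to a direct substitution, using Theorem~\ref{t:7.2}(iii) inside the St\"ackel separability condition \eqref{(1.3)}. By \eqref{(1.3)}, in any orthogonal coordinate system a potential $V$ is separable if and only if the St\"ackel equation
\[
S_{ij}(V) = \partial_i\partial_j V - \partial_i\ln g^{jj}\,\partial_j V - \partial_j\ln g^{ii}\,\partial_i V = 0
\]
holds for every $i\neq j$, so the task reduces to computing the coefficients $\partial_i\ln g^{jj}$ in the normal coordinates associated with $\bs L$ and inserting them into this equation.

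Next, I would read off $\partial_i\ln g^{kk}$ directly from the explicit metric \eqref{(7.3)}. Taking logarithms gives $\ln g^{kk} = -\sum_{l\neq k}\ln|u^l - u^k|$, and since $u^l = u^l(q^l)$ by Theorem~\ref{t:7.2}(i), only the $l=i$ summand actually depends on $q^i$ (when $i\neq k$); differentiating yields
\[
\partial_i\ln g^{kk} = \frac{\partial_i u^i}{u^k - u^i}, \qquad i\neq k,
\]
recovering the formula quoted in Theorem~\ref{t:7.2}(iii).

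Substituting this identity (for the indices $(i,j)$ and $(j,i)$) back into $S_{ij}(V)=0$ produces
\[
\partial_i\partial_j V = \frac{(\partial_i u^i)\,\partial_j V - (\partial_j u^j)\,\partial_i V}{u^j - u^i}.
\]
In the normal coordinates one takes the non-constant eigenvalues themselves as coordinates, $q^i = u^i$ (cf.\ Theorem~\ref{t:9.1}), so that $\partial_i u^i \equiv 1$; with this convention the numerator collapses to $\partial_j V - \partial_i V$, giving the stated formula. The converse implication is obtained by reading the chain backwards, since every step is an equivalence.

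The computation is essentially a one-line substitution; the only subtlety worth flagging is the parametrisation choice fixing $\partial_i u^i \equiv 1$ in the normal coordinates, which is precisely what removes the eigenvalue derivatives from the right-hand side and produces the clean form displayed in the theorem.
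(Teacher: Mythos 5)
Your proposal is correct and is essentially the proof the paper intends: the theorem is stated there with only the remark ``by virtue of \eqref{(1.3)} and \eqref{(7.3)}'', i.e., substitute the logarithmic derivatives of the metric \eqref{(7.3)} into the St\"ackel conditions $S_{ij}(V)=0$, exactly as you do. The one point you rightly flag --- that the clean right-hand side requires $\partial_i u^i\equiv 1$ --- is real, and is tacitly assumed by the paper as well (the proof of the next theorem in Section~\ref{section10} uses $\partial_iV=\partial_iu^i=1$ in the step $\sigma_a^i\,\partial_iV\,dq^i=\sigma_a^i\,dq^i$); just note that justifying it via $q^i=u^i$ as in Theorem~\ref{t:9.1} presumes non-constant eigenvalues and is, strictly speaking, a different normalisation from the one defining normal coordinates through \eqref{(7.3)} (which fixes $\phi_k=1$ rather than $\partial_iu^i=1$), so the two conventions coincide only up to the rescaling freedom the paper leaves implicit.
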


Note that in this theorem (as well as in the following) the eigenvalues $u^i$ of $\bs L$ may not be independent functions.

\begin{Theorem}
Let $(\bs K_a)$ be the basis of the KS-space generated by an L-tensor $\bs L$ according to formula~\eqref{(8.7)}. Then the functions $V_a(\underline u)\doteq \sigma_{a+1}(\underline u)$ $(a=0,1,\ldots, n-1)$ fulfill equations
$dV_a=\bs K_a\,dV$, with $V=V_0=\operatorname{tr}(\bs L)=\sum_iu^i$, and
the functions $H_a=\tfrac 12 P_{\bs K_a}+V_a$ are first integrals in involution.
\end{Theorem}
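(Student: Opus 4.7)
The plan is to work entirely in normal coordinates $(q^i)$ associated with $\bs L$, where formula \eqref{(7.3)} holds, the eigenvalues $u^i=u^i(q^i)$ depend on their own coordinate only, and, as shown in the proof of Theorem~\ref{t:8.1}, $\bs K_a$ is diagonal with $K_a^{ii}=\sigma_a^i\, g^{ii}$, i.e., the mixed components satisfy $(\bs K_a)^i_j=\sigma_a^j\,\delta^i_j$ (no sum). Since $V=V_0=\sigma_1(\underline u)=\sum_k u^k$ depends on $q^j$ only through $u^j$, I first observe that $\partial_j V=\partial_j u^j$ and $\partial_i\partial_j V=0$ for $i\neq j$. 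Combined with the relation $\partial_i\ln g^{jj}=\partial_i u^i/(u^j-u^i)$ from Theorem~\ref{t:7.2}(iii), a two-line computation gives $S_{ij}(V)=0$, so $V$ is a separable potential for the L-system.

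Next I verify the characteristic identity $dV_a=\bs K_a\,dV$ componentwise. Using $\partial_i u^k=0$ for $i\neq k$ and $\partial\sigma_{a+1}/\partial u^j=\sigma_a^j$ (equation~\eqref{(8.4)}),
\begin{gather*}
\partial_j V_a=\partial_j\sigma_{a+1}(\underline u)=\sigma_a^j\,\partial_j u^j\qquad(\text{no sum on }j),
\end{gather*}
while
\begin{gather*}
(\bs K_a\,dV)_j=(\bs K_a)^i_j\,\partial_i V=\sigma_a^j\,\partial_j u^j\qquad(\text{no sum on }j).
\end{gather*}
The two sides agree, and since $dV_a=\bs K_a\,dV$ is a tensorial statement it holds in every chart. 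In particular, $d(\bs K_a\,dV)=d\,dV_a=0$, which is exactly the characteristic equation~\eqref{(4.1)} for the separability of $V$ with respect to each $\bs K_a$; by Theorem~\ref{t:4.1} each $H_a=\tfrac12 P_{\bs K_a}+V_a$ is a first integral of $H_0=G+V$.

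For the mutual involution of the $H_a$, the quadratic parts $P_{\bs K_a}$ commute under the Poisson bracket because the $\bs K_a$ belong to a KS-space, and the potential parts $V_a$ commute as functions on $Q$. The remaining cross terms are
\begin{gather*}
\tfrac12\{P_{\bs K_a},V_b\}+\tfrac12\{V_a,P_{\bs K_b}\}=\tfrac12\,\langle p,\bs K_a\,dV_b-\bs K_b\,dV_a\rangle,
\end{gather*}
so it suffices to show $\bs K_a\,dV_b=\bs K_b\,dV_a$. Substituting the already proved identity $dV_c=\bs K_c\,dV$ and using the algebraic commutativity $[\![\bs K_a,\bs K_b]\!]=0$ (property (a) of a KS-space) yields $\bs K_a\,dV_b=\bs K_a\bs K_b\,dV=\bs K_b\bs K_a\,dV=\bs K_b\,dV_a$, and hence $\{H_a,H_b\}=0$.

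The potentially delicate point is the bookkeeping in step two, because three different quantities are called $\sigma$ (the polynomial $\sigma_{a+1}$, its partial derivative $\sigma_a^j$, and the eigenvalues $\sigma_a^i$ of $\bs K_a$), and one must keep track of which indices are summed; however once the identity $\partial\sigma_{a+1}/\partial u^j=\sigma_a^j$ is invoked and the eigenstructure of $\bs K_a$ recalled from the proof of Theorem~\ref{t:8.1}, the verification reduces to a single equality of scalars in each coordinate chart. Everything else is a direct application of Theorem~\ref{t:4.1} together with the KS-space properties.
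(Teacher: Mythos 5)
Your proof is correct and follows essentially the same route as the paper's: the key identity $dV_a=\bs K_a\,dV$ is verified in coordinates associated with $\bs L$ via $\partial\sigma_{a+1}/\partial u^j=\sigma_a^j$, which is exactly the paper's one-line computation (yours is in fact slightly more careful, keeping the factor $\partial_j u^j$ rather than implicitly taking $q^j=u^j$, so it covers the case of non-independent eigenvalues that the theorem explicitly allows). The only divergence is that where the paper disposes of the involution by citing Theorem~\ref{t:5.1}, you unpack that citation into the standard cross-term computation $\bs K_a\,dV_b=\bs K_a\bs K_b\,dV=\bs K_b\bs K_a\,dV=\bs K_b\,dV_a$ using $[\![\bs K_a,\bs K_b]\!]=0$; this is a faithful expansion of the reference, not a different argument.
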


\begin{proof}
$\bs K_a\,dV=\sigma_a^i \partial_iV\,dq^i=\sigma_a^i\,dq^i=\partial_i\sigma_{a+1}\,dq^i=dV_a$. Apply~\eqref{(5.1)}.
\end{proof}

\section{L-pencils}\label{section11}

There are St\"ackel webs which are not L-webs. A necessary condition is the Robertson condition. However, also in manifolds where this condition is identically satisf\/ied, for instance in constant curvature spaces, there are St\"ackel systems which are not generated by an L-tensor. For instance, due to Theorem~\ref{t:9.3}(ii), in a~$\mathbb E_n$ all translational webs (except the Cartesian web) are not L-webs. As remarked above, also the spherical-conical webs are not L-webs. This last case has suggested~\cite{Benenti-1992b} the introduction of a linear combination\footnote{See also \cite{Benenti-2008}.\label{pagenote3}}
\begin{gather*}
\bs L(m)=\bs L_0+m\bs L_1,
\end{gather*}
which is an L-tensor for all values the parameter $m\in \mathbb R$. We call {\it L-pencil} such an object.

\begin{Theorem}
Let $\bs L=\bs L_0+m\bs L_1$ be an L-pencil. Then:
\begin{enumerate}\itemsep=0pt
\item[$(i)$] $\bs L_0$ has simple eigenvalues.
 \item[$(ii)$] $\bs L$~is a CKT for all $m$ if and only if $\bs L_0$ and $\bs L_1$ are CKT's.

\item[$(iii)$] The condition $\bs H(\bs L)=0$ is equivalent to
 \begin{gather*}
\bs H(\bs L_0)=\bs H(\bs L_1)=0, \qquad
\underset{0}{L}{\phantom |}_{[i}^h \partial_{|h|} \underset{1}{L}{\phantom |}_{j]}^k
-\underset{0}{L}{\phantom |}_h^k \partial_{[i} \underset{1}{L}{\phantom |}_{j]}^h
+
\underset{1}{L}{\phantom |}_{[i}^h \partial_{|h|} \underset{0}{L}{\phantom |}_{j]}^k
-\underset{1}{L}{\phantom |}_h^k \partial_{[i} \underset{0}{L}{\phantom |}_{j]}^h=0.
\end{gather*}
\item[$(iv)$]  The Ricci tensor $\bs R$ commutes with both $\bs L_0$ and $\bs L_1$.
\end{enumerate}
\end{Theorem}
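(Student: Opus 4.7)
The plan is to exploit the fact that $\bs L(m)=\bs L_0+m\bs L_1$ being an L-tensor for every $m\in\mathbb R$ forces several polynomial-in-$m$ identities to vanish identically; standard polynomial interpolation then isolates the claimed properties of $\bs L_0$ and $\bs L_1$ individually (and, in the quadratic case, a mixed bilinear term). For (i), simply set $m=0$: then $\bs L(0)=\bs L_0$ is by hypothesis an L-tensor, hence has pointwise simple eigenvalues by definition.

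For (ii), since $\bs K\mapsto P_{\bs K}$ is linear and the Poisson bracket is bilinear,
\begin{gather*}
\{P_{\bs L(m)},P_{\bs G}\}=\{P_{\bs L_0},P_{\bs G}\}+m \{P_{\bs L_1},P_{\bs G}\}.
\end{gather*}
If $\bs L_0,\bs L_1$ are CKTs with associated vector fields $\bs C_0,\bs C_1$, then $\bs L(m)$ is a CKT with $\bs C(m)=\bs C_0+m\bs C_1$. Conversely, if $\bs L(m)$ is a CKT for every $m$ with associated vector $\bs C(m)$, the displayed identity forces $-2 P_{\bs C(m)} P_{\bs G}$ to be a polynomial of degree at most one in $m$; since $P_{\bs G}$ is not the zero polynomial on $T^*Q$ and the correspondence $\bs C\leftrightarrow P_{\bs C}$ is injective, $\bs C(m)$ itself is affine in $m$, say $\bs C(m)=\bs C_0+m\bs C_1$, and equating coefficients of $m^0$ and $m^1$ gives that both $\bs L_0$ and $\bs L_1$ are CKTs.

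For (iii), observe that the Nijenhuis torsion $\bs H(\bs L)$ in the formula recalled after Theorem~\ref{t:3.7} is a homogeneous quadratic polynomial in $\bs L$; substituting $\bs L=\bs L_0+m\bs L_1$ yields
\begin{gather*}
\bs H(\bs L(m))=\bs H(\bs L_0)+m \bs B(\bs L_0,\bs L_1)+m^2 \bs H(\bs L_1),
\end{gather*}
where $\bs B$ is the symmetric polarization of $\bs H$. Requiring this to vanish for all $m$ separates into three conditions: $\bs H(\bs L_0)=0$ from the $m^0$ coefficient, $\bs H(\bs L_1)=0$ from the $m^2$ coefficient, and $\bs B(\bs L_0,\bs L_1)=0$ from the $m^1$ coefficient. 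Expanding $\bs B$ from the definition of $\bs H$ produces exactly the mixed expression displayed in the statement. Finally, for (iv), each $\bs L(m)$ is an L-tensor, so by Theorem~\ref{t:7.2}(v) the Robertson condition $[ \! [\bs L(m),\bs R] \! ]=0$ holds for every $m$; the algebraic commutator is linear in its first argument, hence $[ \! [\bs L_0,\bs R] \! ]+m [ \! [\bs L_1,\bs R] \! ]=0$ for all $m$, giving $[ \! [\bs L_0,\bs R] \! ]=[ \! [\bs L_1,\bs R] \! ]=0$.

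The only step requiring care is the polarization bookkeeping in (iii), which amounts to expanding $\bs H(\bs L_0+m\bs L_1)$ symbol by symbol and collecting the four mixed terms; no conceptual obstacle is expected, and every other part reduces to a one-line application of linearity in $m$ together with results stated earlier in the paper.
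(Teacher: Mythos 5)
Your proposal is correct. The paper itself omits the argument, stating only that ``the proof is straightforward,'' and what you have written is exactly the straightforward proof intended: item (i) by evaluating the pencil at $m=0$, items (ii) and (iv) by linearity in $m$ of the Poisson bracket and of the algebraic commutator respectively, and item (iii) by polarizing the quadratic dependence of the Nijenhuis torsion on its argument and equating the coefficients of $m^0$, $m^1$, $m^2$ to zero, the $m^1$ coefficient being precisely the displayed mixed expression.
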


The proof is straightforward. Let us apply the iterative formula~\eqref{(8.7)} and Theorem~\ref{t:8.1} to the tensor $\bs L(m)$. Since $\bs L$ is polynomial of degree~1 in $m$, each tensor $\bs K_a(m)$ is at most of degree~$a$:
$\bs K_a(m)=\bs H_a m^a+\bs H_{a-1}m^{a-1}+\cdots$.

\begin{Theorem}
$(i)$ The tensors $\bs K_a(0)$ form a KS-space.
$(ii)$ The tensors $\bs H_a$, given by the coefficients of maximal degree of $\bs K_a(m)$, if independent, form a KS-space. $(iii)$ The St\"ackel systems generated by these two KS-spaces satisfy the Robertson condition.
\end{Theorem}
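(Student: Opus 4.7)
The plan is to reduce both (i) and (ii) to a single application of Theorem~\ref{t:8.1}, performed with $\bs L_0$ and $\bs L_1$ in place of $\bs L$, and to deduce (iii) from item~(v) of Theorem~\ref{t:7.2}. From the preceding L-pencil theorem I already have that $\bs L_0$ and $\bs L_1$ are torsionless conformal Killing tensors and that $\bs L_0$ has simple eigenvalues, so $\bs L_0$ is itself an L-tensor in the sense of Section~\ref{section7}.

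The structural observation that makes the leading coefficients accessible is the homogeneity of the construction~\eqref{(8.7)}: $\bs K_0=\bs G$ has degree zero in $\bs L$, and the recursion $\bs K_a=\tfrac 1a\operatorname{tr}(\bs K_{a-1}\bs L)\bs G-\bs K_{a-1}\bs L$ raises the degree by one at each step. Substituting $\bs L(m)=\bs L_0+m\bs L_1$ therefore yields a polynomial of degree at most $a$ in $m$ whose constant term is the sequence built from $\bs L_0$ and whose top coefficient $\bs H_a$ is the sequence built from $\bs L_1$. Once this is in place, (i) is just Theorem~\ref{t:8.1} applied to the L-tensor $\bs L_0$.

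For (ii), the strategy is to argue that the independence hypothesis on the $\bs H_a$ promotes $\bs L_1$ from \emph{torsionless CKT} to \emph{L-tensor}, after which a second application of Theorem~\ref{t:8.1} delivers the KS-space. The tool is the explicit form $\rho_a^i=\sigma_a^i(\underline u)$ of the eigenvalues of the $\bs K_a$-sequence, derived inside the proof of Theorem~\ref{t:8.1}, together with the Vandermonde identity $\det[\sigma_a^i]=\prod_{j>i}(u^i-u^j)$ from~\eqref{(8.3)}: at a generic point, linear independence of the $\bs H_a$ forces $\det[\rho_a^i]\neq 0$, and the identity then forces the eigenvalues of $\bs L_1$ to be pairwise distinct. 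I expect this translation of ``$\bs H_a$ linearly independent'' into ``$\bs L_1$ has simple eigenvalues'' to be the one step that needs care; everything else is formal polynomial bookkeeping. Part (iii) is then automatic, because each of the two KS-spaces has just been exhibited as generated by an L-tensor, hence as the KS-space of an L-system, and Theorem~\ref{t:7.2}(v) is exactly the statement that the Robertson condition $R_{ij}=0$ ($i\neq j$) holds in the associated coordinates of any L-system.
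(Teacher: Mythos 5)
The paper itself omits the proof of this theorem, so there is nothing to compare against; judged on its own merits, your argument is sound for parts (i) and for the $\bs K_a(0)$ half of (iii), but part (ii) — and consequently the second half of (iii) — contains a genuine gap. The flaw is the identification of $\bs H_a$ with the $a$-th tensor of the sequence \eqref{(8.7)} built from $\bs L_1$. That identification is only valid when $\bs K_a(m)$ actually has degree $a$ in $m$; the theorem's phrase ``coefficient of maximal degree'' refers to the leading coefficient of the polynomial as it stands, which may sit at a lower power. The paper's own motivating example is exactly this degenerate case: for $\bs L(m)=\bs L_0+m\,\bs r\otimes\bs r$ in $\mathbb E_n$ every $\bs K_a(m)$ has degree $1$, the sequence \eqref{(8.7)} built from $\bs L_1=\bs r\otimes\bs r$ collapses (one checks $\bs K_1\bs L_1=r^2\,\bs r\otimes\bs r-(\bs r\otimes\bs r)^2=0$, so $\bs K_2=\bs K_3=\cdots=0$), and yet the $\bs H_a$ are independent and span the conical--spherical KS-space. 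So your key step --- ``independence of the $\bs H_a$ forces $\det[\sigma_a^i(\underline u_1)]\neq 0$, hence $\bs L_1$ has simple eigenvalues, hence $\bs L_1$ is an L-tensor'' --- is refuted by this example ($\bs r\otimes\bs r$ has the eigenvalue $0$ with multiplicity $n-1$). Worse, Section~\ref{section11} states explicitly that the spherical--conical webs are \emph{not} L-webs, so the $\bs H_a$ KS-space cannot in general be exhibited as generated by an L-tensor, and your route to the Robertson condition for it via Theorem~\ref{t:7.2}(v) is closed off.

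The repair is to exploit that $\bs L(m)$ is an L-tensor for \emph{every} $m$, so that by Theorem~\ref{t:8.1} all the defining identities hold identically in $m$ and one may extract coefficients. The Killing equation $[\bs K_a(m),\bs G]=0$ is linear in $\bs K_a(m)$, so each coefficient of the polynomial, in particular the leading one $\bs H_a$, is a Killing tensor. The identities $[\bs K_a(m),\bs K_b(m)]=0$ and $[\! [\bs K_a(m),\bs K_b(m)]\! ]=0$ are polynomial in $m$ with top coefficient $[\bs H_a,\bs H_b]$, resp.\ $[\! [\bs H_a,\bs H_b]\! ]$, so these vanish; under the independence hypothesis the $\bs H_a$ then satisfy precisely the definition of a St\"ackel (KS-) space given in Section~\ref{section3} (no recourse to an underlying L-tensor is needed). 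The Robertson condition for this space follows the same way from $[\! [\bs K_a(m),\bs R]\! ]=0$, which holds for all $m$ by Theorem~\ref{t:7.2}(v) applied to $\bs L(m)$, by again taking the leading coefficient. Your argument for (i) and for $\bs K_a(0)$ in (iii) stands as written, and is the special case $m=0$ of this coefficient-extraction scheme.
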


Also for this statement we omit the proof. We remark that there are cases in which all the~$\bs K_a$ are of degree~1, as in the following example.

\begin{Example}[\cite{Benenti-1992b}] In $\mathbb E_n$ we consider the tensor $\bs L(m)=\bs L_0+m \bs r\otimes\bs r$. It represents (for $m\neq 0$) the planar moment of inertia of a massive body with total mass $m$ and center of mass at the origin. $\bs L_0$ is a constant symmetric tensor (hence, a~KT) with simple eigenvalues~$(a^i)$. $\bs L_1=\bs r\otimes \bs r$ is a CKT whose eigenvalues are all zero except one ($={}r^2$). It can be proved that: (i)~$\bs L$ is an L-pencil; (ii)~the tensors $\bs K_a(0)$ form a KS-space corresponding to Cartesian coordinates; (iii)~$\bs K_a(m)$ are all of degree~1 in $m$, $\bs K_a(m)=\bs K_a(0)+m \bs H_a$; (iv)~the tensors $\bs H_a$ form a KS-space corresponding to the conical spherical coordinates.
\end{Example}

The above results stimulate investigations about the notion of an L-pencil, also in relation with recent studies on the same concept and the notion of {\it cofactor pair system} (see, e.g., \cite{Lundmark-2001, Marciniak-Blaszak-2002, Rauch-Wojciechowski-Marciniak-Lundmark-1999,
Rauch-Wojciechowski-Waksjo-2003}). A possible generalisation is a~{\it multi-pencil} of the type $\bs L=\bs L_0+m^i\bs L_i$ with $(m^i)\in \mathbb R^k$ or $\bs L=m^i\bs L_i$ with $(m^i)\neq \bs 0$. Basic examples are, in the Euclidean $n$-space, the case $\bs L=\bs L_0+ m  \bs r\otimes\bs r+ w \bs c\odot \bs r$ where $\bs c$ is a constant unit vector, see~\eqref{(4.2)}.
For brevity, we do not examine here which St\"ackel webs in $\mathbb E_n$ (or in~$\mathbb S_n$, $\mathbb H_n$) are generated by an L-tensor or by an L-pencil.

\section{The Riemannian background of the separation}\label{section12}

Also for a better understanding of the orthogonal separation, it is necessary to study the Levi-Civita equations without any {\it a priori} assumptions on the separable coordinates $\underline q$. The geometrical meaning of these equations has been f\/irstly investigated by Kalnins and Miller~\cite{KM-1980,KM-1981} and in~\cite{Benenti-1980} (see also~\cite{Benenti-1991,Benenti-1996}). A result of these investigations is the following.

\begin{Theorem}[\cite{Benenti-1997}]\label{t:12.1}
The Hamilton--Jacobi equation $G+V=E$ is separable if and only if there exists a characteristic Killing pair $(D,\bs K)$ such that
\begin{gather}
DV=0, \qquad d(\bs K \,dV)=0.
\label{(12.1)}
\end{gather}
\end{Theorem}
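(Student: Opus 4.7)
The plan is to bootstrap Theorem~\ref{t:4.2} from the orthogonal setting to the general one by decoupling the ignorable and essential directions. A characteristic Killing pair $(D,\bs K)$ naturally extends the notion of characteristic Killing tensor of Theorem~\ref{t:3.2}: $D$ is an involutive distribution of commuting Killing vectors (the ignorable directions of the would-be separable chart) and $\bs K$ is a Killing $2$-tensor whose eigenspaces span the orthogonal complement $D^\perp$, are pointwise simple on $D^\perp$, and are normal within $D^\perp$. The purely geodesic version of the statement (the case $V=0$) is assumed known from \cite{Benenti-1997}; it produces the separable chart from the existence of the pair.

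First I would handle necessity by expanding the Levi-Civita equations $L_{ij}(G+V)=0$ by homogeneity degree in $\underline p$. The fourth-degree part is $L_{ij}(G)=0$, so the geodesic Hamilton--Jacobi equation already separates in the same coordinates; applying the geodesic version of the theorem yields the characteristic Killing pair $(D,\bs K)$. The lower-degree parts encode conditions on $V$ alone. In the adapted separable coordinates $(q^a,q^\alpha)$, with $a$ essential and $\alpha$ ignorable, the components with an $\alpha$-index collapse to $\partial_\alpha V=0$, i.e.\ $DV=0$, while the purely essential components reduce to the Stäckel equations $S_{ab}(V)=0$ on the essential block, equivalent by the argument preceding~\eqref{(4.1)} to the coordinate-free statement $d(\bs K\,dV)=0$.

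For sufficiency the argument runs in reverse. The existence of a characteristic Killing pair provides, via the geodesic half of the theorem, separable coordinates $(q^a,q^\alpha)$ in which $\bs K$ is diagonal on the essential block and the $\partial_\alpha$ are Killing; the metric then assumes the standard block form with the essential block orthogonal. Now $DV=0$ forces $V=V(q^a)$, and, restricted to a single leaf of $D^\perp$, $d(\bs K\,dV)=0$ reduces by Theorem~\ref{t:4.2} to the Stäckel conditions $S_{ab}(V)=0$. Together with the previous step this reconstructs every Levi-Civita equation \eqref{(1.3)} in the chart, hence separability.

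The delicate point is the sufficiency direction: one must verify that the two intrinsic equations $DV=0$ and $d(\bs K\,dV)=0$ really recover \emph{every} component of the Levi-Civita system, including those mixing an essential and an ignorable index. The key input is that the Killing vectors in $D$ commute as derivations with $\bs K$ (built into the definition of a characteristic Killing pair), so the $1$-form $\bs K\,dV$ is automatically $D$-invariant once $DV=0$; consequently its closedness on $Q$ is equivalent to its closedness on a single essential leaf, which matches the essential-block Stäckel conditions and leaves no residual mixed conditions to verify.
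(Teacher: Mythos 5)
The first thing to note is that the paper does not actually prove Theorem \ref{t:12.1}: it cites \cite{Benenti-1997}, remarks that the proof is ``rather long'', and records only its main ingredient, Theorem \ref{t:12.2} (the existence, within every equivalence class of separable coordinates, of standard coordinates in which the metric takes the block form \eqref{(12.2)}). So there is no in-paper argument to compare yours against step by step. Measured against what the theorem asserts, however, your proposal has a substantive gap: you take ``the purely geodesic version of the statement'' as a known black box from the same reference. But the geodesic case is exactly the case $V=0$ of Theorem \ref{t:12.1} itself (both conditions \eqref{(12.1)} are then vacuous), and it is where essentially all the difficulty lives --- proving that separability of $G=E$ forces the existence of a normal space $D$ of commuting Killing vectors and a $D$-invariant Killing tensor with the stated eigenstructure, and conversely that such a pair yields standard coordinates in which the Levi-Civita equations for $G$ hold. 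What you have written is therefore not a proof of the theorem but a reduction of the natural-Hamiltonian case to its geodesic special case.

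That reduction, taken on its own, is sound in outline and parallels the paper's Section \ref{section4} treatment of the orthogonal case. The homogeneity decomposition of $L_{ij}(G+V)=0$ is the right tool for necessity, and in standard coordinates the mixed components $L_{a\alpha}$ and the ignorable components $L_{\alpha\beta}$ do collapse once $\partial_\alpha V=0$ and the metric is $q^\alpha$-independent, leaving only the essential-block St\"ackel conditions $S_{ab}(V)=0$. The one step you assert rather than verify is the equivalence $d(\bs K\,dV)=0\Leftrightarrow S_{ab}(V)=0$ in this setting: for it you need that the eigenvalues $\rho^a$ of $\bs K$ on $D^\perp$ satisfy the Killing--Eisenhart equations \eqref{(2.2)} with respect to the essential metric $(g^{aa})$ and are pairwise distinct, which requires writing out the Killing equation for a $D$-invariant tensor in the block metric \eqref{(12.2)}; this is not the orthogonal computation verbatim, because $G$ now contains the non-diagonal $g^{\alpha\beta}$ block depending on the $q^a$. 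Your appeal to the $D$-invariance of $\bs K\,dV$ disposes of the $\alpha$-components of $d(\bs K\,dV)$ but does not address this point.
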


A {\it characteristic Killing pair} $(D,\bs K)$ is made of a $r$-dimensional linear space $D$ of commuting Killing vectors and a $D$-invariant Killing tensor $\bs K$ having $m=n-r$ normal eigenvectors orthogonal to $D$ and corresponding to distinct eigenvalues (note that these eigenvalues may not be simple). Since the elements of $D$ are in involution, they generate $r$-dimensional orbits which are locally f\/lat submanifolds. It can be shown that {\it if $(D,\bs K)$ is a characteristic Killing pair, then $D$ is normal}, in the sense that the distribution orthogonal to its elements is completely integrable. This means that there is a foliation of $m$-dimensional manifolds orthogonal to the orbits of~$D$. The leaves of this foliation are isometric Riemannian manifolds (the isometries being generated by the orbits of $D$, assuming that they intersect the orbits in only one point).

With a characteristic Killing pair we associate {\it standard separable coordinates} $(q^i)=(q^a,q^\alpha)$
($a=1,\ldots,m$, $\alpha=m+1,\ldots, n$) def\/ined in this way:
(i)~$dq^a$ are eigenforms of~$\bs K$ orthogonal to~$D$ (or, equivalently, the orbits of $D$ are def\/ined by equations $q^a={\rm const}$); (ii)~the $(q^\alpha)$ are the af\/f\/ine parameters, with zero value on an arbitrary $m$-dimensional orthogonal section $\cal Z$  of the orbits of~$D$, of Killing vectors $\bs X_\alpha$ forming a local basis of $D$. It follows that: (I)~$\partial/\partial q^\alpha=\bs X_\alpha$ and the coordinates $(q^\alpha)$ are ignorable;
(II)~in these coordinates the contravariant metric tensor components assume the {\it standard form}
\begin{gather}
\big[g^{ij}\big]=
\bmatrix \big[g^{aa}\big] & [0]
\\[3pt]
 [0] & \big[g^{\alpha\beta}\big] \cr \endbmatrix,
\label{(12.2)}
\end{gather}
where $\big[g^{aa}\big]$ is a diagonal $m\times m$ matrix and $\big[ g^{\alpha\beta}\big]$ is a $r\times r$ matrix, with $m=n-r$. Here we denote by $[0]$ zero-matrices of proper dimensions. The rather long proof of this theorem is based on the following

\begin{Theorem}[\cite{Benenti-1980}]\label{t:12.2}
In an equivalence class of separable coordinates there exist coordinates in which the metric assume the standard form~\eqref{(12.2)}.
\end{Theorem}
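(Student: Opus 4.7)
The plan is to deduce the block structure \eqref{(12.2)} directly from the Levi--Civita conditions $L_{ij}(G)=0$ and the form of a complete separated solution, possibly followed by a separated rescaling of the coordinates to put the metric in normal form within the equivalence class.

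First, I would fix a complete separated solution $W(\underline q,\underline c)=\sum_i W_i(q^i,\underline c)$, so that on solutions $p_i=\partial_i W_i(q^i,\underline c)$. The completeness condition $\det[\partial^2W/\partial q^i\partial c^j]\neq 0$ forces each $W_i$ to depend genuinely on the constants; I would then classify $q^i$ as \emph{first-class} (write it $q^a$, with $a=1,\ldots,m$) when $\partial_iW_i$ depends non-trivially on $q^i$, and \emph{second-class} ($q^\alpha$, with $\alpha=m+1,\ldots,n$) when completeness is forced through an affine term $W_i = c_i q^i+\cdots$, so that $p_\alpha = c_\alpha$ is a free parameter along solutions.

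Next, I would substitute $p_\alpha=c_\alpha$ into the Hamilton--Jacobi identity $\tfrac12 g^{ij}p_ip_j=E$. Since the $c_\alpha$ are arbitrary constants, both sides must agree as polynomial expressions in them. Differentiating the resulting identity with respect to $q^\alpha$ and using $L_{i\alpha}(G)=0$ to control the mixed derivatives of $g^{ij}$, I would obtain $\partial_\alpha g^{ij}=0$, showing that the $\partial/\partial q^\alpha$ are Killing vectors and that all $g^{ij}$ depend only on the first-class coordinates $q^a$. The identity then becomes a polynomial equation in the independent quantities $(p_a,c_\alpha)$, and coefficient-wise comparison yields $g^{a\alpha}=0$ from the cross-terms $p_a c_\alpha$, while the remaining $L_{ab}(G)=0$ reduce to Stäckel-type equations $S_{ab}(g^{cc})=0$ on the first-class block, forcing $g^{ab}=0$ for $a\neq b$. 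The ignorable block $[g^{\alpha\beta}]$ automatically depends only on the $q^a$ by the earlier ignorability step, so the metric takes the standard form \eqref{(12.2)}; a separated rescaling of the $q^\alpha$ then normalises the Killing vectors $\partial/\partial q^\alpha$ to affine parameters of the orbits, keeping us in the equivalence class.

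The main obstacle is justifying that $(p_a,c_\alpha)$ can really be treated as $n$ independent variables when equating coefficients of the polynomial identity: this requires the full non-degeneracy of the completeness matrix $[\partial^2W/\partial q^i\partial c^j]$, which guarantees that by varying the first-class separation constants one can realise arbitrary values of $p_a=\partial_aW_a(q^a,\underline c)$ on each hypersurface $q^a=\mathrm{const}$, independently of the $c_\alpha$. A secondary difficulty lies in showing that the cross-block vanishes outright rather than only being removable by a more general coordinate change: here one leans on $L_{a\alpha}(G)=0$ together with the ignorability $\partial_\alpha g^{ij}=0$ already established, which together rule out any residual $g^{a\alpha}$ term.
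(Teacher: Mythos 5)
The paper does not actually prove Theorem~\ref{t:12.2}: it is quoted from~\cite{Benenti-1980} precisely because, as the text says, the proof is ``rather long''. So your proposal must stand on its own, and as written it has two genuine gaps. The first is that you claim to derive $\partial_\alpha g^{ij}=0$ for the \emph{given} separable coordinates by differentiating the Hamilton--Jacobi identity and invoking $L_{i\alpha}(G)=0$. This step fails: ignorability is not a property of an arbitrary representative of the equivalence class. Already for polar coordinates on the Euclidean plane, replacing $\theta$ by $\tilde\theta=f(\theta)$ with $f$ non-affine is a separated transformation that stays inside the class (the paper notes this explicitly at the end of Section~\ref{section3}), the Levi--Civita equations still hold, and yet $g^{\tilde\theta\tilde\theta}=\big(f'(\theta)\big)^2r^{-2}$ depends on $\tilde\theta$. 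The entire content of the theorem is the \emph{existence} of a good representative, so the reparametrization of the second-class coordinates must come \emph{before} ignorability can be asserted, not afterwards as a cosmetic normalisation. The invariant notion you need is that a family of hypersurfaces is of second class when it admits a parametrization $q^\alpha$ for which $p_\alpha$ is constant on the leaves of the separated Lagrangian foliation; since those leaves are invariant under the geodesic flow, such a $p_\alpha$ is then a linear first integral and $\partial/\partial q^\alpha$ a Killing vector. Your classification, phrased in terms of whether $W_i$ happens to be affine in $q^i$ for the coordinates you were handed, is neither exhaustive nor invariant under the equivalence, and in the example above it would misclassify $\tilde\theta$ as first class.

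The second gap is the orthogonality of the first-class block, $g^{ab}=0$ for $a\neq b$, which you assert rather than prove: saying that the remaining Levi--Civita equations ``reduce to St\"ackel-type equations $S_{ab}(g^{cc})=0$, forcing $g^{ab}=0$'' is circular, since the reduction to St\"ackel form is a consequence of diagonality, not an argument for it. Proving that the essential part of a separable metric can be taken orthogonal is the hardest step in~\cite{Benenti-1980} and in the parallel analysis of~\cite{KM-1981}; it requires extracting information from the coefficients of the individual quartic monomials $p_hp_kp_lp_m$ in $L_{ij}(G)=0$ together with the completeness condition~\eqref{(1.1)2}. The shortcut you propose --- coefficient-wise comparison in the ``independent variables'' $(p_a,c_\alpha)$ --- is not available, because the identity $g^{ij}p_ip_j=2E(\underline c)$ involves the unknown function $E$ of all the constants and is not a polynomial identity in freely varying quantities. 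The difficulty you yourself flag as ``the main obstacle'' is therefore real, and it is not discharged by the nondegeneracy of the completeness matrix alone.
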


Two separable systems are called {\it equivalent} if (in the intersection of their domains of def\/i\-ni\-tion) the corresponding separated solutions of the Hamilton--Jacobi equation generate the same Lagrangian foliation of~$T^*Q$.
The geometrical representation of an equivalent class is given by
a {\it separable web}: it is a pair $(D,{\cal S}^a)=(D, ({\cal S}^1,\ldots,{\cal S}^m))$, where $D$ is a normal $r$-dimensional space of commuting  KT's and ${(\cal S}^a)$ is a family of $m$ orthogonal foliations
of submanifolds of codimension~1, all invariant with respect to  $D$ (it follows that the orbits of $D$ are the complete intersections of the $({\cal S}^a)$). Moreover, the submanifolds $({\cal S}^a)$ are orthogonal to $m$ eigenvectors of a $D$-invariant KT $\bs K$ with distinct (but not necessarily simple) eigenvalues (so that $(D,\bs K)$ is a characteristic Killing pair).
It must be observed that the quotient set of the orbits of $D$ is a $m$-dimensional manifold with a naturally induced orthogonal separable metric $(g^{aa})$ (it is isomorphic to any $m$-manifold orthogonal to~$D$).

In this description of the separation we include the extreme cases: (I)~$m=n$, $r=0$, which corresponds to the an orthogonal separation; (II)~$m=0$, $r=n$, which corresponds to the separation in Cartesian coordinates (in this case the manifold is f\/lat). In \cite{BCR-2002a} it is proved that

\begin{Theorem}\label{t:12.3}
A characteristic Killing pair $(D,\bs K)$ generates a $m$-dimensional space ${\cal K}_m$ of Killing tensors with the following properties:
$(i)$~they are $D$-invariant and have $m$ eigenvectors in common orthogonal to $D$;
$(ii)$~they are in involution;
$(iii)$~if the  characteristic equations~\eqref{(12.1)} of a separable potential $V$ are satisfied, then they are satisfied by all elements of~${\cal K}_m$.
\end{Theorem}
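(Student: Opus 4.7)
The plan is to reduce the statement to the fully orthogonal case (Section~\ref{section3}) by working in the quotient by $D$, and then lift the resulting KS-space back to $Q$.

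First, by Theorem~\ref{t:12.2} I would choose standard separable coordinates $(q^a,q^\alpha)$ so that the metric takes the form~\eqref{(12.2)} with $g^{aa}$ and $g^{\alpha\beta}$ functions of $(q^b)$ only, and the $\partial_\alpha$ form a basis of $D$. In these coordinates $\bs K$ is $D$-invariant, its $m$ distinguished eigenvectors orthogonal to $D$ are the $\partial_a$, and $K^{aa}=\rho^a g^{aa}$ with pointwise distinct $\rho^a(\underline{q^b})$. The quotient manifold $\bar Q=Q/D$ (locally) is $m$-dimensional with the diagonal metric $\bar{\bs G}=(g^{aa})$, and $\bs K$ descends to a symmetric tensor $\bar{\bs K}$ on $\bar Q$ which is a characteristic Killing 2-tensor there (simple eigenvalues $\rho^a$ and normal eigenvectors $\partial_a$). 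Theorem~\ref{t:3.5} then furnishes an $m$-dimensional KS-space $\bar{\mathcal K}_m$ on $\bar Q$, consisting of KTs simultaneously diagonalized in the $\partial_a$ and pairwise in involution.

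Second, I would lift each $\bar{\bs K}_s\in\bar{\mathcal K}_m$ to a tensor $\bs K_s$ on $Q$ that is $D$-invariant, has the $\partial_a$ as common eigenvectors orthogonal to $D$, and is itself a KT. Concretely, write the candidate in standard form with $K_s^{aa}=\bar\rho_s^a g^{aa}$, $K_s^{ab}=0$ for $a\neq b$, $K_s^{a\alpha}=0$, and a vertical block $K_s^{\alpha\beta}(\underline{q^b})$. Expanding the Killing equation $\{P_{\bs K_s},P_{\bs G}\}=0$ in these coordinates, the $D$-invariance and block structure cause the equation to split into: (a)~a Killing--Eisenhart system~\eqref{(2.2)} for $(\bar\rho_s^a)$ on the essential coordinates $(q^a)$, which is satisfied by hypothesis on $\bar Q$; and (b)~a first-order linear system for the vertical block $K_s^{\alpha\beta}$ whose integrability conditions, once again, reduce to the St\"ackel-type identities $S_{ab}(g^{cc})=0$ that hold because the original $\bs K$ realizes such a lift. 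This yields $\bs K_s$ uniquely (up to adding elements of the span of $D\odot D$, which are absorbed in the chosen parametrization). The resulting family $\mathcal K_m$ is $m$-dimensional, has property~(i) by construction, and contains~$\bs K$.

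Third, for involution~(ii), I would expand $\{P_{\bs K_s},P_{\bs K_t}\}$ in standard coordinates. Thanks to $D$-invariance no $\partial_\alpha$-derivatives appear, and the polynomial in $(p_a,p_\alpha)$ splits into pieces each of which vanishes: the purely ``horizontal'' part by the orthogonal involution argument of Section~\ref{section3} (the $(\bar\rho_s^a)$ solve~\eqref{(2.2)} with common eigenvectors), and the mixed and vertical parts by the companion equations that fixed the $K_s^{\alpha\beta}$. For property~(iii), the hypothesis $DV=0$ gives $V=V(\underline{q^b})$, so $d(\bs K\,dV)=0$ translates to the orthogonal St\"ackel condition $S_{ab}(V)=0$ in the essential coordinates, which depends only on the common eigenvectors $\partial_a$ and the diagonal $g^{aa}$; it is therefore equivalent to $d(\bs K_s\,dV)=0$ for every $\bs K_s\in\mathcal K_m$, while $DV=0$ is automatic for all $\bs K_s$ since $V$ does not depend on $q^\alpha$.

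The main obstacle I expect is the correct treatment of the vertical block: one must show that each reduced KT on $\bar Q$ admits a lift to a genuine KT on $Q$ (not merely to a $D$-invariant symmetric tensor with the right eigenstructure), that this lift is essentially unique so that $\dim\mathcal K_m = m$, and that the lift is compatible with involution. This is precisely where the standard-coordinate decomposition~\eqref{(12.2)} and the $D$-invariance of $\bs K$ are indispensable, and where the detailed calculation of \cite{BCR-2002a} is needed to verify that the off-diagonal Killing equations for $K_s^{\alpha\beta}$ are consistent.
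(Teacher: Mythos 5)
The paper does not prove this theorem; it cites \cite{BCR-2002a} for it, so there is no internal proof to compare against. Your architecture (pass to standard coordinates via Theorem~\ref{t:12.2}, solve the Killing--Eisenhart system \eqref{(2.2)} for the essential eigenvalues $\rho^a$ to get an $m$-dimensional solution space, then lift each solution back to a $D$-invariant tensor on $Q$ by determining a vertical block $K^{\alpha\beta}$) is the right one and is essentially the route of the cited reference; the quotient-manifold packaging via Theorem~\ref{t:3.5} is a harmless reformulation of the same computation.

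The one genuine gap is exactly where you flag it, and your stated reason for why it closes is not quite right. In standard coordinates the Killing equation for a block tensor with $K_s^{aa}=\rho_s^a g^{aa}$ and vertical block $K_s^{\alpha\beta}(q^b)$ splits into \eqref{(2.2)} for the $\rho_s^a$ together with $\partial_a K_s^{\alpha\beta}=\rho_s^a\,\partial_a g^{\alpha\beta}$. The integrability condition of this second system is not $S_{ab}(g^{cc})=0$ as you write, but $\bigl(\rho_s^a-\rho_s^b\bigr)S_{ab}\bigl(g^{\alpha\beta}\bigr)=0$, i.e.\ it involves the St\"ackel operators applied to the \emph{vertical} metric components. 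To close the gap you must observe that the hypothesis already hands you one solution of this system, namely the vertical block of the given $\bs K$, whose essential eigenvalues are pointwise distinct; equality of its cross derivatives forces $S_{ab}\bigl(g^{\alpha\beta}\bigr)=0$, and since that condition is independent of the unknowns, the system for every other $K_s^{\alpha\beta}$ is then completely integrable. With that identity in hand the rest of your sketch goes through: the $p_ap_bp_c$ part of $\{P_{\bs K_s},P_{\bs K_t}\}$ vanishes by the Section~\ref{section3} argument and the $p_ap_\alpha p_\beta$ part vanishes because $K_s^{aa}\partial_aK_t^{\alpha\beta}-K_t^{aa}\partial_aK_s^{\alpha\beta}=g^{aa}\bigl(\rho_s^a\rho_t^a-\rho_t^a\rho_s^a\bigr)\partial_ag^{\alpha\beta}=0$, and property (iii) reduces, as you say, to $S_{ab}(V)=0$, which is insensitive to which set of distinct eigenvalues $\rho_s^a$ multiplies it. You should also state explicitly that the additive freedom $c^{\alpha\beta}\in D\odot D$ in the vertical block is what must be quotiented out (or fixed) to get $\dim{\cal K}_m=m$ rather than $m+\tfrac12 r(r+1)$.
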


\begin{Theorem}
In a space ${\cal K}_m$ having properties $(i)$ and $(ii)$, where $D$ is a $n-m$-space of Killing vectors in involution, all the common eigenvectors are normal.
\end{Theorem}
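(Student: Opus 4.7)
\medskip

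\noindent
\textbf{Proof plan.} I would reduce to the orthogonal case already handled in Theorem~\ref{t:3.4} by a symmetry reduction along $D$. The argument has six steps.

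First, invoke Theorem~\ref{t:12.2} to pick local standard separable coordinates $(q^a,q^\alpha)$, $a=1,\ldots,m$, $\alpha=m+1,\ldots,n$, in which $\partial/\partial q^\alpha$ is a local basis of $D$ and the contravariant metric has the block form \eqref{(12.2)}. This is possible because $D$ is an $(n-m)$-space of commuting Killing vectors. Second, show that every ${\bs K}\in{\cal K}_m$ is block-diagonal in these coordinates with components independent of $q^\alpha$: the $D$-invariance gives $\partial_\alpha K^{ij}=0$, while the assumption that the $m$ common eigenvectors lie in $D^\perp$ combined with the symmetry of $\bs K$ forces $K^{a\alpha}=0$ (indeed $\langle{\bs K}X,Y\rangle=\lambda\langle X,Y\rangle=0=\langle X,{\bs K}Y\rangle$ for any eigenvector $X\in D^\perp$ and any $Y\in D$).

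Third, pass to the $m$-dimensional quotient manifold $\bar Q=Q/D$ coordinatized by $(q^a)$ with induced orthogonal metric $(g^{aa})$: the $a$-blocks define symmetric 2-tensors $\bar{\bs K}_a$ on $\bar Q$, and the common eigenvectors of ${\cal K}_m$ (lying in $D^\perp$) project to an orthogonal frame of common eigenvectors of the $\bar{\bs K}_a$. Fourth, verify that property (ii) descends to involution on $\bar Q$. Because $K^{a\alpha}=0$ and $\partial_\alpha K^{ij}=0$, the polynomial $P(\bs K)$ splits as $K^{ab}p_ap_b+K^{\alpha\beta}p_\alpha p_\beta$ with coefficients independent of $q^\alpha$; restriction of $\{P({\bs K}_1),P({\bs K}_2)\}=0$ to the submanifold $p_\alpha=0$ of $T^*Q$ therefore yields exactly $\{P(\bar{\bs K}_1),P(\bar{\bs K}_2)\}=0$ on $T^*\bar Q$.

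Fifth, apply Theorem~\ref{t:3.4} on $\bar Q$: we have $m$ independent symmetric 2-tensors in involution on an $m$-dimensional manifold whose common eigenvectors form an orthogonal frame, hence this frame is holonomic and the eigenvectors are normal in $\bar Q$. Sixth, lift back. If $\bar\Sigma$ is a hypersurface in $\bar Q$ orthogonal to an eigenvector $\bar X$ and $\pi:Q\to\bar Q$ is the projection, then $\pi^{-1}(\bar\Sigma)$ is a hypersurface in $Q$ whose tangent distribution is the $\pi$-horizontal lift of $T\bar\Sigma$ together with $D$. By the block form \eqref{(12.2)} this tangent distribution is precisely the orthogonal complement to the lift of $\bar X$, so the common eigenvector is normal in $Q$.

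\medskip

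\noindent
\textbf{Main obstacle.} The delicate point is Step~4: one must verify that involution on the full $T^*Q$ actually yields involution of the reduced tensors on $T^*\bar Q$. The full Poisson bracket $\{K_1^{ij}p_ip_j,K_2^{k\ell}p_kp_\ell\}$ mixes the $a$- and $\alpha$-blocks in principle, and the clean separation into a $p_\alpha$-independent quotient bracket depends jointly on (a) the vanishing of the cross components $K^{a\alpha}=0$ and (b) the $D$-invariance $\partial_\alpha K^{ij}=0$ from Step~2. Once those two facts are in hand the bracket computation is essentially formal, but assembling them from properties (i) and (ii) is the real content of the reduction.
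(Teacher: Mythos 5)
The paper states this theorem without proof: it is one of the results deferred to \cite{BCR-2002a} (like Theorem~\ref{t:3.4}, of which it is the non-orthogonal analogue), so there is no in-text argument to compare yours against and I judge the plan on its own. Your overall strategy---quotient by $D$ and invoke Theorem~\ref{t:3.4} downstairs---is the natural one, but Step~1 contains a genuine gap that Step~2 inherits. Theorem~\ref{t:12.2} produces the standard form \eqref{(12.2)} \emph{within an equivalence class of separable coordinates}; its hypothesis is that a separable system is already given, which is precisely what this circle of theorems is trying to establish. From the data actually available here ($D$ and ${\cal K}_m$ with properties (i) and (ii)), the existence of coordinates $(q^a,q^\alpha)$ with $\partial/\partial q^\alpha$ spanning $D$ and $g_{a\alpha}=0$ is \emph{equivalent} to the complete integrability of $D^\perp$, i.e.\ to $D$ being normal. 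That is false for a general space of commuting Killing vectors (a one-parameter screw motion in $\mathbb E_3$ already has a non-integrable orthogonal distribution), and the paper explicitly flags the normality of $D$ as a separate fact to be proved for characteristic Killing pairs. So Step~1 assumes a statement of the same nature as the conclusion, and Step~2 silently uses it: your pairing argument only shows that $\bs K$ preserves the splitting $TQ=D\oplus D^\perp$ as a $(1,1)$ tensor, and this yields $K^{a\alpha}=0$ for the contravariant components only when $\bs G\,dq^\alpha\in D$, i.e.\ only when the metric is already block-diagonal.

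The good news is that the reduction survives without the block form. Since the $\bs X_\alpha$ commute, $D$ is integrable and a local quotient $\bar Q=Q/D$ exists with the Riemannian-submersion metric $\bar g^{ab}=g^{ab}$ (the $\bs X_\alpha$ being Killing); the tensors descend because they are $D$-invariant and preserve $D^\perp$ as operators. Your Step~4 computation in fact needs only $\partial_\alpha K^{ij}=0$, not $K^{a\alpha}=0$: every term of $\{P(\bs K_1),P(\bs K_2)\}$ carrying a factor $p_\alpha$ dies upon restriction to $p_\alpha=0$ anyway, so the restricted bracket is exactly the reduced one. Likewise your Step~6 lift uses only that horizontal lifts preserve inner products, not integrability of the horizontal distribution. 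Two further points you should still supply: that the $m$ reduced tensors remain independent on $\bar Q$ (linear independence in ${\cal K}_m$ does not by itself give independence of the restrictions to $D^\perp$, which Theorem~\ref{t:3.4} requires), and an acknowledgement that Theorem~\ref{t:3.4} is itself only quoted from \cite{BCR-2002a}, so the argument ultimately rests on the same external source as the paper's.
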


\begin{Theorem}
The functions $H_a=\tfrac 12  P_{\bs K_a}+V_a$ and $P_{\bs X_\alpha}=X_\alpha^i p_i$, where $dV_a=\bs K_a\,dV$ and~$(\bs X_\alpha)$ is a basis of~$D$, form a~complete system of integrals in involution.
\end{Theorem}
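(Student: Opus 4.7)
The plan is to verify, in order, that (i) each of the $n$ functions $H_a$ and $P_{\bs X_\alpha}$ Poisson-commutes with $H=G+V$, that (ii) all pairwise Poisson brackets among them vanish, and that (iii) they are functionally independent at a generic point of $T^*Q$. The potentials $V_a$ with $dV_a=\bs K_a\,dV$ exist locally because Theorem~\ref{t:12.3}(iii) grants that each $1$-form $\bs K_a\,dV$ is closed. For~(i): $\{G,P_{\bs X_\alpha}\}=0$ since $\bs X_\alpha$ is Killing, and the hypothesis $DV=0$ of Theorem~\ref{t:12.1} gives $\{V,P_{\bs X_\alpha}\}=-\bs X_\alpha V=0$; applying the calculation behind Theorem~\ref{t:4.1} to the Killing tensor $\bs K_a$ together with $dV_a=\bs K_a\,dV$ yields $\{H,H_a\}=0$.

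For~(ii) there are three cases. First, $\{P_{\bs X_\alpha},P_{\bs X_\beta}\}=0$ because $D$ is commutative. Second, for $\{H_a,P_{\bs X_\alpha}\}$ the quadratic piece $\tfrac12\{P_{\bs K_a},P_{\bs X_\alpha}\}=\tfrac12 P([\bs K_a,\bs X_\alpha])$ vanishes by the $D$-invariance of $\bs K_a$ (Theorem~\ref{t:12.3}(i)), while the remaining piece equals $-\bs X_\alpha V_a=-\langle dV,\bs K_a\bs X_\alpha\rangle$ by the symmetry of $\bs K_a$; since $\bs K_a$ preserves the orthogonal splitting $TQ=D^\perp\oplus D$ determined by its $m$ common eigenvectors orthogonal to $D$, one has $\bs K_a\bs X_\alpha\in D$, and $dV$ annihilates $D$. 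Third, for $\{H_a,H_b\}$ the top-degree term vanishes by the involution of Theorem~\ref{t:12.3}(ii), and a direct expansion identifies the cross-term difference $\{P_{\bs K_a},V_b\}-\{P_{\bs K_b},V_a\}$ with a multiple of $p_j\,g^{jm}([\![\bs K_a,\bs K_b]\!]\,dV)_m$; this vanishes because $dV$ lies in the annihilator of $D$, on which $\bs K_a$ and $\bs K_b$ are simultaneously diagonal in the common eigenforms and hence commute as endomorphisms.

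For~(iii) work in the standard separable coordinates of Theorem~\ref{t:12.2}, so that (after adapting the basis of $D$) $\bs X_\alpha=\partial_\alpha$, $P_{\bs X_\alpha}=p_\alpha$, and each $\bs K_a$ is block-diagonal with $K_a^{bb}=\rho_a^b g^{bb}$ on the $m$-block and $D$-invariant entries on the $r$-block. At any point with $p_1,\dots,p_m$ nonzero, the vertical (fibre) parts of $dH_1,\dots,dH_m,dP_{\bs X_{m+1}},\dots,dP_{\bs X_n}$ assemble into a block lower-triangular matrix whose diagonal blocks are $[\rho_a^b]\,\mathrm{diag}(g^{bb}p_b)$ and $[\delta_\alpha^\beta]$; the non-singularity of $[\rho_a^b]$, which is (up to the factors $g^{bb}$) the inverse of the St\"ackel matrix attached to the basis $(\bs K_a)$ of ${\cal K}_m$, forces linear independence of the $n$ differentials. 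The decisive step, and the main obstacle, is the cross-term computation in the third case of~(ii): although $\bs K_a$ and $\bs K_b$ need not commute on the full $TQ$, the combined hypotheses $DV=0$ and $dV_c=\bs K_c\,dV$ ensure that only their restrictions to $D^\perp$, where they share eigenforms and hence commute, ever enter the argument --- the precise geometric mechanism generalising the common-eigenform argument of Theorem~\ref{t:3.3} to the non-orthogonal setting.
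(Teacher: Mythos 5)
Your proof is correct. Note that the paper itself states this theorem without proof (Section~\ref{section12} defers its arguments to \cite{Benenti-1980,Benenti-1997,BCR-2002a}), so there is no in-text proof to compare against; what you have written is a sound self-contained argument. The three computations at the heart of it are exactly right: the cubic terms of the brackets vanish by the involution property of ${\cal K}_m$ and the $D$-invariance of the $\bs K_a$; the term $\bs X_\alpha V_a=\langle dV,\bs K_a\bs X_\alpha\rangle$ vanishes because $\bs K_a$, having $m$ eigenvectors spanning $D^\perp$, preserves the splitting $D\oplus D^\perp$ and $dV$ annihilates $D$; and the cross term in $\{H_a,H_b\}$ reduces to $p_j\big([\![\bs K_a,\bs K_b]\!]\,dV\big)^j$, which vanishes because the $\bs K_a$ are simultaneously diagonal on the annihilator of $D$, where $dV$ lives. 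This last identification is indeed the decisive point, and your observation that only the restrictions of the $\bs K_a$ to $D^\perp$ ever act on $dV$ is the correct mechanism. The one step you assert rather than derive is the pointwise nonsingularity of $[\rho^b_a]$ in part (iii); this does follow from the construction of ${\cal K}_m$ (the evaluation of the $m$-dimensional solution space of the Killing--Eisenhart system at a point is injective, as for a KS-space in the orthogonal case), but a one-line justification along those lines would make the independence argument airtight.
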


Theorems~\ref{t:12.2} and~\ref{t:12.3} show that if $D$ admits an orthogonal basis, $\bs X_\alpha{\mathbf\cdot}\bs X_\beta=0$ for $\alpha\neq\beta$, then the standard coordinates are orthogonal. This means that the separation occurs in orthogonal coordinates. It can be proved by a coordinate-independent method that

\begin{Theorem}[\cite{Benenti-1992a}]
On a manifold with constant curvature any normal space $D$ of Killing vectors in involution has an orthogonal basis.
\end{Theorem}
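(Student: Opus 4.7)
The plan is to reduce the assertion to a simultaneous-diagonalisation problem for a family of bilinear forms and then to exploit the rigid classification of abelian subalgebras of the Killing algebra on a constant-curvature model space.

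Let $(\bs X_\alpha)_{\alpha=1}^r$ be a basis of $D$ and set $G_{\alpha\beta}(q)\doteq g(\bs X_\alpha,\bs X_\beta)$. Since each $\bs X_\gamma$ is Killing and $[\bs X_\gamma,\bs X_\alpha]=0$, a short computation gives $\mathcal L_{\bs X_\gamma}G_{\alpha\beta}=0$, so $G$ is constant along the orbits of $D$. Combined with the standard separable coordinates $(q^a,q^\alpha)$ of Theorem~\ref{t:12.2}, in which $\bs X_\alpha=\partial/\partial q^\alpha$, this gives $G=G(q^a)$ as a function of the $m=n-r$ transverse coordinates only. Finding an orthogonal basis of $D$ then amounts to exhibiting a \emph{constant} invertible $r\times r$ matrix $M$ such that $M^{\mathrm T}G(q^a)M$ is diagonal for every $q^a$; equivalently, since $G$ is everywhere positive-definite, one needs the family of endomorphisms $\{G(q_0)^{-1}G(q^a)\}$ on $D$ to be pairwise commutative as $q^a$ varies.

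Since the claim is local and invariant under isometry, one may replace $Q$ by an open subset of a model space $\mathbb E_n$, $\mathbb S_n$ or $\mathbb H_n$, whose Killing algebra is $\mathfrak e(n)$, $\mathfrak{so}(n+1)$ or $\mathfrak{so}(n,1)$. Every abelian subalgebra of these Lie algebras admits, up to ambient isometry, a well-known normal form: a direct sum of translations along a subspace $V\subseteq\mathbb R^n$ together with infinitesimal rotations in mutually $g$-orthogonal $2$-planes of $V^\perp$ in the Euclidean case; rotations in mutually orthogonal $2$-planes in the spherical case; and analogous direct-sum decompositions, possibly including parabolic summands, in the hyperbolic case. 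In each such normal form the distinguished generators are pointwise $g$-orthogonal by direct inspection: a translation along a line is orthogonal to the vector field of any commuting rotation, and two rotations in orthogonal $2$-planes take values in orthogonal subspaces at every point. In the normal-form basis $G(q^a)$ is therefore diagonal and the required congruence matrix $M$ exists.

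The main obstacle is verifying that the normality hypothesis rules out the ``screw'' abelian subalgebras such as $D=\operatorname{span}(\partial_z+(x\partial_y-y\partial_x))$ in $\mathfrak e(3)$, where the basis of $D$ cannot be taken to be a normal-form basis without enlarging $D$. For any such screw configuration, a direct Frobenius check on a local frame of $D^\perp$ produces an obstruction: in the example above, with $D^\perp=\operatorname{span}(\bs V,\bs W)$, $\bs V=\partial_x+y\partial_z$, $\bs W=\partial_y-x\partial_z$, one computes $[\bs V,\bs W]=-2\partial_z\notin D^\perp$, so $D$ is not normal. The analogous computation excludes the parallel-axis screws on $\mathbb E_n$ and their parabolic-screw analogues on $\mathbb S_n$ and $\mathbb H_n$. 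With these configurations ruled out by the normality hypothesis, $D$ coincides, after a suitable constant change of basis, with the span of normal-form generators, the matrices $G(q^a)$ are simultaneously diagonal in that basis, and the desired orthogonal basis of Killing vectors is obtained.
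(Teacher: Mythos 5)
The paper itself gives no proof of this theorem: it cites \cite{Benenti-1992a} and remarks only that the argument there is ``coordinate-independent'', so your classification-based route through the model spaces is in any case a different strategy from the one the author has in mind. Your opening reduction is correct and useful: $G_{\alpha\beta}=g(\bs X_\alpha,\bs X_\beta)$ is $D$-invariant, and an orthogonal basis of $D$ exists precisely when the positive-definite matrices $G(q^a)$ admit a single constant diagonalising congruence, equivalently when the endomorphisms $G(q_0)^{-1}G(q^a)$ pairwise commute. But you never verify that criterion. Instead you switch to an appeal to a normal-form classification of the abelian subalgebras of the isometry algebras of $\mathbb E_n$, $\mathbb S_n$, $\mathbb H_n$, and the entire weight of the proof then rests on the sentence ``the analogous computation excludes\dots''. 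That sentence \emph{is} the theorem: what must be shown is that \emph{every} abelian subalgebra admitting no pointwise-orthogonal basis fails to be normal. Your only evidence is a one-dimensional screw in $\mathbb E_3$, which is a vacuous test case, since any one-dimensional $D$ trivially has an orthogonal basis.

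Moreover, the obstruction is not confined to screws, so the dichotomy ``normal-form generators versus screws'' on which your argument relies is not exhaustive. For two rotational Killing vectors $\bs X=Ax$, $\bs Y=Bx$ on $\mathbb E_n$ (or restricted to $\mathbb S_{n-1}$) one has $g(\bs X,\bs Y)=-x^{\mathrm T}\operatorname{sym}(AB)\,x$, so together with $[A,B]=0$ pointwise orthogonality forces $AB=0$, i.e., rotations in disjoint $2$-planes. The span of $R_{12}+R_{34}$ and $R_{34}+R_{56}$ in the rotation algebra of $\mathbb R^6$ is abelian, contains no translations and no screws, yet admits no basis of elements with disjoint rotation planes (the support pattern $(1,1,0)$, $(0,1,1)$ has no disjoint-support basis), hence no orthogonal basis of Killing vectors; consistency with the theorem requires proving that such a $D$ is never normal, and nothing in your argument addresses configurations of this purely rotational type. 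The hyperbolic case, where the abelian subalgebras include horospherical (parabolic) translations and boost--rotation combinations, is likewise only gestured at. To close the gap you would either have to carry out the full classification of \emph{normal} abelian subalgebras in all three geometries, or return to your own unused criterion and derive the commutativity of the family $G(q_0)^{-1}G(q^a)$ directly from the integrability of $D^\perp$ together with the constant-curvature identity expressing the second covariant derivative of a Killing vector algebraically in terms of the vector itself --- which is presumably the coordinate-free route of \cite{Benenti-1992a}.
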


As a consequence,

\begin{Theorem}
On a manifold of constant curvature the geodesic separation always occurs in orthogonal coordinates.
\end{Theorem}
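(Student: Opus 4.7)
The plan is to chain together the preceding results of the section: geodesic separation $\Rightarrow$ existence of a characteristic Killing pair $(D,\bs K)$ $\Rightarrow$ (using constant curvature) $D$ admits an orthogonal basis $\Rightarrow$ the associated standard coordinates are orthogonal. Nothing new needs to be proved; the statement is a direct corollary of the material already assembled.

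First I would reduce the geodesic case to the general separation theorem. Assume $G = E$ is separable. Taking $V = 0$ in Theorem \ref{t:12.1}, the condition $DV = 0$ is automatically satisfied by any space of Killing vectors, and the characteristic equation $d(\bs K\, dV) = 0$ is trivial. Hence Theorem \ref{t:12.1} (or equivalently the structural result Theorem \ref{t:12.2}) furnishes a characteristic Killing pair $(D,\bs K)$ associated with the separable structure, where $D$ is an $r$-dimensional space of commuting Killing vectors and $\bs K$ is a $D$-invariant Killing tensor with $m = n - r$ normal eigenvectors orthogonal to $D$.

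Next I would invoke the normality property of $D$ that is built into the definition of a characteristic Killing pair, as recalled explicitly just after Theorem \ref{t:12.1}: \emph{if $(D,\bs K)$ is a characteristic Killing pair, then $D$ is normal}. In particular, $D$ is a normal space of Killing vectors in involution. Since the ambient manifold has constant curvature, the theorem quoted from \cite{Benenti-1992a} applies and yields an orthogonal basis $(\bs X_\alpha)$ of $D$, i.e., $\bs X_\alpha \cdot \bs X_\beta = 0$ for $\alpha \neq \beta$.

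Finally I would appeal to the remark following Theorems \ref{t:12.2} and \ref{t:12.3}: when $D$ admits an orthogonal basis, the standard separable coordinates $(q^i) = (q^a, q^\alpha)$ built from $(D,\bs K)$ have the feature that the ignorable block $[g^{\alpha\beta}]$ in the standard form \eqref{(12.2)} is diagonal, so the full metric matrix $[g^{ij}]$ is diagonal. Thus the standard coordinates belong to the equivalence class of the given separation and are orthogonal, proving that geodesic separation on a constant curvature manifold always takes place in orthogonal coordinates.

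There is essentially no technical obstacle: the only place where real work was done is the quoted result from \cite{Benenti-1992a}, which is cited as a black box. The only thing to verify carefully is the trivial specialization $V \equiv 0$ in Theorem \ref{t:12.1}, so that the characteristic Killing pair framework really does apply to pure geodesic separation, and the bookkeeping showing that an orthogonal basis of $D$ makes the block $[g^{\alpha\beta}]$ in \eqref{(12.2)} diagonal and hence the whole metric diagonal in the standard coordinates.
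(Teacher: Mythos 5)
Your proposal is correct and follows essentially the same route as the paper: the theorem is presented there precisely as a corollary of the chain ``separable $\Rightarrow$ characteristic Killing pair $(D,\bs K)$ with $D$ normal (Theorems~\ref{t:12.1},~\ref{t:12.2}) $\Rightarrow$ constant curvature gives $D$ an orthogonal basis (the quoted result from~\cite{Benenti-1992a}) $\Rightarrow$ the standard coordinates~\eqref{(12.2)} become orthogonal.'' The extra bookkeeping you flag (the specialization $V\equiv 0$ and the diagonalization of the block $[g^{\alpha\beta}]$) is exactly what the paper leaves implicit.
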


This important property was discovered by Kalnins and Miller \cite{KM-1984a, KM-1986} (see also \cite{Kalnins-1986}), within a coordinate-dependent approach. This theorem can be extended to a natural Hamiltonian $H=G+V$.\footnote{The study of the Riemannian background of the separation has been extended
to a~Hamiltonian with scalar and vector potential in~\cite{BCR-2001}.}

\pdfbookmark[1]{References}{ref}
\LastPageEnding


\begin{thebibliography}{99}
\footnotesize\itemsep=0pt

\bibitem{Ankiewicz-1983}
Ankiewicz A., Pask C., The complete {W}hittaker theorem for two-dimensional
  integrable systems and its application, \href{http://dx.doi.org/10.1088/0305-4470/16/18/021}{\textit{J.~Phys.~A: Math. Gen.}}
  \textbf{16} (1983), 4203--4208.

\bibitem{Benenti-1980}
Benenti S., Separability structures on {R}iemannian manifolds, in Dif\/ferential
  Geometrical Methods in Mathematical Physics ({P}roc. {C}onf.,
  {A}ix-en-{P}rovence/{S}alamanca, 1979), \href{http://dx.doi.org/10.1007/BFb0089763}{\textit{Lecture Notes in Math.}}, Vol.~836, Springer, Berlin, 1980, 512--538.

\bibitem{Benenti-1991}
Benenti S., Separation of variables in the geodesic {H}amilton--{J}acobi equation,
in Symplectic Geometry and Mathematical Physics ({A}ix-en-{P}rovence, 1990),
\href{http://dx.doi.org/10.1007/s10107-003-0377-7}{\textit{Progr. Math.}},
Vol.~99, Birkh\"auser Boston,  Boston, MA, 1991, 1--36.

\bibitem{Benenti-1992b}
Benenti S., Inertia tensors and {S}t\"ackel systems in the {E}uclidean spaces,
  \textit{Rend. Sem. Mat. Univ. Politec. Torino} \textbf{50} (1992), 315--341.

\bibitem{Benenti-1992a}
Benenti S., Orthogonal separation of variables on manifolds with constant
  curvature, \href{http://dx.doi.org/10.1016/0926-2245(92)90002-5}{\textit{Differential Geom. Appl.}} \textbf{2} (1992), 351--367.

\bibitem{Benenti-1993}
Benenti S., Orthogonal separable dynamical systems, in Dif\/ferential Geometry
  and its Applications ({O}pava, 1992), \textit{Math. Publ.}, Vol.~1, Editors
  O.~Kowalski, D.~Krupka, Silesian Univ. Opava, Opava, 1993, 163--184.

\bibitem{Benenti-1996}
Benenti S., Connections and Hamiltonian mechanics, in Gravitation
  Electromagnetism and Geometrical Structures, Editor G.~Ferrarese, Pitagora
  Editrice, Bologna, 1996, 185--206.

\bibitem{Benenti-1997}
Benenti S., Intrinsic characterization of the variable separation in the
  {H}amilton--{J}acobi equation, \href{http://dx.doi.org/10.1063/1.532226}{\textit{J.~Math. Phys.}} \textbf{38} (1997),
  6578--6602.

\bibitem{Benenti-2005}
Benenti S., Special symmetric two-tensors, equivalent dynamical systems,
  cofactor and bi-cofactor systems, \href{http://dx.doi.org/10.1007/s10440-005-1138-9}{\textit{Acta Appl. Math.}} \textbf{87}
  (2005), 33--91.

\bibitem{Benenti-2008}
Benenti S., Algebraic construction of the quadratic f\/irst integrals for a
  special class of orthogonal separable systems, in Symmetries and
  overdetermined systems of partial dif\/ferential equations, \href{http://dx.doi.org/10.1007/978-0-387-73831-4_13}{\textit{IMA Vol.
  Math. Appl.}}, Vol.~144, Springer, New York, 2008, 277--304.

\bibitem{BCR-2000}
Benenti S., Chanu C., Rastelli G., The super-separability of the three-body
  inverse-square {C}alogero system, \href{http://dx.doi.org/10.1063/1.533369}{\textit{J.~Math. Phys.}} \textbf{41} (2000),
  4654--4678.

\bibitem{BCR-2001}
Benenti S., Chanu C., Rastelli G., Variable separation for natural
  {H}amiltonians with scalar and vector potentials on {R}iemannian manifolds,
  \href{http://dx.doi.org/10.1063/1.1340868}{\textit{J.~Math. Phys.}} \textbf{42} (2001), 2065--2091.

\bibitem{BCR-2002a}
Benenti S., Chanu C., Rastelli G., Remarks on the connection between the
  additive separation of the {H}amilton--{J}acobi equation and the
  multiplicative separation of the {S}chr\"odinger equation. {I}.~{T}he
  completeness and {R}obertson conditions, \href{http://dx.doi.org/10.1063/1.1506180}{\textit{J.~Math. Phys.}} \textbf{43}
  (2002), 5183--5222.

\bibitem{BCR-2002b}
Benenti S., Chanu C., Rastelli G., Remarks on the connection between the
  additive separation of the {H}amilton--{J}acobi equation and the
  multiplicative separation of the {S}chr\"odinger equation. {II}.~{F}irst
  integrals and symmetry operators, \href{http://dx.doi.org/10.1063/1.1506181}{\textit{J.~Math. Phys.}} \textbf{43} (2002),
  5223--5253.

\bibitem{Blaszak-2003}
B{\l}aszak M., Separability theory of Gel'fand--Zakharevic systems on
  Riemannian manifolds, {P}reprint, A.~Mickiewics University, Pozna\'n, 2003.

\bibitem{Blaszak-Ma-2003}
B{\l}aszak M., Ma W.X., Separable {H}amiltonian equations on {R}iemann
  manifolds and related integrable hydrodynamic systems, \href{http://dx.doi.org/10.1016/S0393-0440(02)00173-0}{\textit{J.~Geom.
  Phys.}} \textbf{47} (2003), 21--42, \href{http://arxiv.org/abs/nlin.SI/0209014}{nlin.SI/0209014}.

\bibitem{Blaszak-Marciniak-2006}
B{\l}aszak M., Marciniak K., From {S}t\"ackel systems to integrable hierarchies
  of {PDE}'s: {B}enenti class of separation relations, \href{http://dx.doi.org/10.1063/1.2176908}{\textit{J.~Math. Phys.}}
  \textbf{47} (2006), 032904, 26~pages, \href{http://arxiv.org/abs/nlin.SI/0511062}{nlin.SI/0511062}.

\bibitem{Blaszak-Sergyeyev-2005}
B{\l}aszak M., Sergyeyev A., Maximal superintegrability of {B}enenti systems,
  \href{http://dx.doi.org/10.1088/0305-4470/38/1/L01}{\textit{J.~Phys.~A: Math. Gen.}} \textbf{38} (2005), L1--L5,
  \href{http://arxiv.org/abs/nlin.SI/0412018}{nlin.SI/0412018}.

\bibitem{Blaszak-Sergyeyev-2011}
B{\l}aszak M., Sergyeyev A., Generalized {S}t\"ackel systems, \href{http://dx.doi.org/10.1016/j.physleta.2011.05.046}{\textit{Phys.
  Lett.~A}} \textbf{375} (2011), 2617--2623.

\bibitem{Bolsinov-Matveev-2003}
Bolsinov A.V., Matveev V.S., Geometrical interpretation of {B}enenti systems,
  \href{http://dx.doi.org/10.1016/S0393-0440(02)00054-2}{\textit{J.~Geom. Phys.}} \textbf{44} (2003), 489--506.

\bibitem{Bruce-McL-Smirnov-2001}
Bruce A.T., McLenaghan R.G., Smirnov R.G., Benenti's theorem and the method of
  moving frames, \href{http://dx.doi.org/10.1016/S0034-4877(01)80083-5}{\textit{Rep. Math. Phys.}}
  \textbf{48} (2001), 227--234.

\bibitem{Crampin-2003a}
Crampin M., Conformal {K}illing tensors with vanishing torsion and the
  separation of variables in the {H}amilton--{J}acobi equation,
  \href{http://dx.doi.org/10.1016/S0926-2245(02)00140-7}{\textit{Differential Geom. Appl.}} \textbf{18} (2003), 87--102.

\bibitem{Crampin-2003b}
Crampin M., Projectively equivalent {R}iemannian spaces as
  quasi-bi-{H}amiltonian systems, \href{http://dx.doi.org/10.1023/A:1024907509042}{\textit{Acta Appl. Math.}} \textbf{77} (2003),
  237--248.

\bibitem{DHMcLS-2004}
Deeley R.J., Horwood J.T., McLenaghan R.G., Smirnov R.G., Theory of algebraic
  invariants of vector spaces of {K}illing tensors: methods for computing the
  fundamental invariants, in Symmetry in Nonlinear Mathematical Physics,
  \textit{Pro\-cee\-dings of Institute of Mathematics, Kyiv}, Vol.~50, Part~3,
  Editors A.G.~Nikitin, V.M.~Boyko, R.O.~Popovych, I.A.~Yehorchenko, Institute
  of Mathematics, Kyiv, 2004, 1079--1086.

\bibitem{Eisenhart-1933}
Eisenhart L.P., Continuous groups of transformations, Princeton University
  Press, Princeton, N.J., 1933.

\bibitem{Eisenhart-1934}
Eisenhart L.P., Separable systems of {S}t\"ackel, \href{http://dx.doi.org/10.2307/1968433}{\textit{Ann. of Math.}}
  \textbf{35} (1934), 284--305.

\bibitem{Eisenhart-1949}
Eisenhart L.P., Riemannian geometry, Princeton University Press, Princeton,
  N.J., 1949.

\bibitem{Fettis-1950}
Fettis H.E., A method for obtaining the characteristic equation of a matrix and
  computing the associated modal columns, \textit{Quart. Appl. Math.}
  \textbf{8} (1950), 206--212.

\bibitem{Haantjes-1955}
Haantjes J., On {$X_m$}-forming sets of eigenvectors, \href{http://dx.doi.org/10.1016/S1385-7258(55)50021-7}{\textit{Indag. Math.}}
  \textbf{58} (1955), 158--162.

\bibitem{Ibort-Magri-Marmo-2000}
Ibort A., Magri F., Marmo G., Bihamiltonian structures and {S}t\"ackel
  separability, \href{http://dx.doi.org/10.1016/S0393-0440(99)00051-0}{\textit{J.~Geom. Phys.}} \textbf{33} (2000), 210--228.

\bibitem{Kalnins-1986}
Kalnins E.G., Separation of variables for {R}iemannian spaces of constant
  curvature, \textit{Pitman Monographs and Surveys in Pure and Applied
  Mathematics}, Vol.~28, Longman Scientif\/ic \& Technical, Harlow, John Wiley \&
  Sons, Inc., New York, 1986.

\bibitem{KM-1980}
Kalnins E.G., Miller Jr. W., Killing tensors and variable separation for
  {H}amilton--{J}acobi and {H}elmholtz equations, \href{http://dx.doi.org/10.1137/0511089}{\textit{SIAM~J. Math. Anal.}}
  \textbf{11} (1980), 1011--1026.

\bibitem{KM-1981}
Kalnins E.G., Miller Jr. W., Killing tensors and nonorthogonal variable
  separation for {H}amilton--{J}acobi equations, \href{http://dx.doi.org/10.1137/0512054}{\textit{SIAM~J. Math. Anal.}}
  \textbf{12} (1981), 617--629.

\bibitem{KM-1984a}
Kalnins E.G., Miller Jr. W., Separation of variables on $n$-dimensional
  Riemannian manifolds. II.~The $n$-dimensional hyperboloid~$\mathbf H_n$,
  {U}niversity of Waikato Research Report, no.~103, 1984.

\bibitem{KM-1986}
Kalnins E.G., Miller Jr. W., Separation of variables on {$n$}-dimensional
  {R}iemannian manifolds. {I}.~{T}he {$n$}-sphere~{$S^n$} and {E}uclidean
  {$n$}-space {${\mathbb R}^n$}, \href{http://dx.doi.org/10.1063/1.527088}{\textit{J.~Math. Phys.}} \textbf{27} (1986),
  1721--1736.

\bibitem{LC-1896}
Levi-Civita T., Sulle trasformazioni delle equazioni dinamiche, \href{http://dx.doi.org/10.1007/BF02419530}{\textit{Ann. di
  Matem.}} \textbf{24} (1896), 255--300.

\bibitem{LC-1904}
Levi-Civita T., Sulla integrazione della equazione di {H}amilton--{J}acobi per
  separazione di variabili, \href{http://dx.doi.org/10.1007/BF01445149}{\textit{Math. Ann.}} \textbf{59} (1904), 383--397.

\bibitem{Lundmark-2001}
Lundmark H., A new class of integrable {N}ewton systems, \href{http://dx.doi.org/10.2991/jnmp.2001.8.s.34}{\textit{J.~Nonlinear
  Math. Phys.}} \textbf{8} (2001), suppl.~1, 195--199.

\bibitem{Marciniak-Blaszak-2002}
Marciniak K., B{\l}aszak M., Separation of variables in quasi-potential systems
  of bi-cofactor form, \href{http://dx.doi.org/10.1088/0305-4470/35/12/316}{\textit{J.~Phys.~A: Math. Gen.}} \textbf{35} (2002),
  2947--2964, \href{http://arxiv.org/abs/nlin.SI/0201058}{nlin.SI/0201058}.

\bibitem{Marciniak-Blaszak-2010}
Marciniak K., B{\l}aszak M., Construction of coupled {H}arry {D}ym hierarchy
  and its solutions from {S}t\"ackel systems, \href{http://dx.doi.org/10.1016/j.na.2010.06.067}{\textit{Nonlinear Anal.}}
  \textbf{73} (2010), 3004--3017, \href{http://arxiv.org/abs/1009.1729}{arXiv:1009.1729}.

\bibitem{Marshall-Wojciechowski-1988}
Marshall I., Wojciechowski S., When is a {H}amiltonian system separable?,
  \href{http://dx.doi.org/10.1063/1.527926}{\textit{J.~Math. Phys.}} \textbf{29} (1988), 1338--1346.

\bibitem{McLenaghan-Smirnov-2002}
McLenaghan R.G., Smirnov R.G., Intrinsic characterizations of orthogonal
  separability for natural {H}amiltonians with scalar potentials on
  pseudo-{R}iemannian spaces, \href{http://dx.doi.org/10.2991/jnmp.2002.9.s1.12}{\textit{J.~Nonlinear Math. Phys.}} \textbf{9}
  (2002), suppl.~1, 140--151.

\bibitem{Miller-Post-Winternitz-2013}
Miller Jr. W., Post S., Winternitz P., Classical and quantum superintegrability
  with applications, \href{http://dx.doi.org/10.1088/1751-8113/46/42/423001}{\textit{J.~Phys.~A: Math. Theor.}} \textbf{46} (2013),
  423001, 97~pages, \href{http://arxiv.org/abs/1309.2694}{arXiv:1309.2694}.

\bibitem{Moser-1981}
Moser J., Integrable {H}amiltonian systems and spectral theory, Lezioni
  Fermiane, Scuola Normale Superiore, Pisa, 1983.

\bibitem{Nijenhuis-1951}
Nijenhuis A., {$X_{n-1}$}-forming sets of eigenvectors, \href{http://dx.doi.org/10.1016/S1385-7258(51)50028-8}{\textit{Indag. Math.}}
  \textbf{54} (1951), 200--212.

\bibitem{Rajaratnam-McLenaghan-2014}
Rajaratnam K., McLenaghan R.G., Killing tensors, warped products and the
  orthogonal separation of the {H}amilton--{J}acobi equation, \href{http://dx.doi.org/10.1063/1.4861707}{\textit{J.~Math.
  Phys.}} \textbf{55} (2014), 013505, 27~pages, \href{http://arxiv.org/abs/1404.3161}{arXiv:1404.3161}.

\bibitem{Rauch-Wojciechowski-Marciniak-Lundmark-1999}
Rauch-Wojciechowski S., Marciniak K., Lundmark H., Quasi-{L}agrangian systems
  of {N}ewton equations, \href{http://dx.doi.org/10.1063/1.533098}{\textit{J.~Math. Phys.}} \textbf{40} (1999),
  6366--6398, \href{http://arxiv.org/abs/solv-int/9909025}{solv-int/9909025}.

\bibitem{Rauch-Wojciechowski-Waksjo-2003}
Rauch-Wojciechowski S., Waksj\"o C., St\"ackel separability for Newton systems
  of cofactor type, \href{http://arxiv.org/abs/nlin.SI/0309048}{nlin.SI/0309048}.

\bibitem{Schouten-1954}
Schouten J.A., Ricci-calculus. An introduction to tensor analysis and its
  geometrical applications, \href{http://dx.doi.org/10.1007/978-3-662-12927-2}{\textit{Grundlehren der Mathematischen
  Wissenschaften}}, Vol.~10, Springer-Verlag, Berlin, 1954.

\bibitem{Shapovalov-1981}
Shapovalov V.N., St\"ackel spaces, \href{http://dx.doi.org/10.1007/BF00971844}{\textit{Sib. Math.~J.}} \textbf{20} (1979),
  790--800.

\bibitem{Souriau-1950}
Souriau J.M., Le calcul spinoriel et ses applications, \textit{Recherche
  A\'eronautique} \textbf{1950} (1950), 3--8.

\bibitem{Stackel-1893}
St{\"a}ckel P., Ueber die {B}ewegung eines {P}unktes in einer {$n$}-fachen
  {M}annigfaltigkeit, \href{http://dx.doi.org/10.1007/BF01447379}{\textit{Math. Ann.}} \textbf{42} (1893), 537--563.

\bibitem{Stackel-1897}
St{\"a}ckel P., Uber quadatizche {I}ntegrale der {D}if\/ferentialgleichungen der
  {D}ynamik, \textit{Ann. Mat. Pura Appl.} \textbf{26} (1897), 55--60.

\bibitem{Tsiganov-2010}
Tsiganov A.V., On the superintegrable {R}ichelot systems, \href{http://dx.doi.org/10.1088/1751-8113/43/5/055201}{\textit{J.~Phys.~A:
  Math. Theor.}} \textbf{43} (2010), 055201, 14~pages, \href{http://arxiv.org/abs/0909.2923}{arXiv:0909.2923}.

\bibitem{Waksjo-2000}
Waksj\"o C., St\"ackel multipliers in {E}uclidean spaces, {T}heses, Link\"oping
  Studies in Sciences and Technology, no.~833, 2000.

\bibitem{Waksjo-Rauch-Wojciechowski-2003}
Waksj{\"o} C., Rauch-Wojciechowski S., How to f\/ind separation coordinates for
  the {H}amilton--{J}acobi equation: a criterion of separability for natural
  {H}amiltonian systems, \href{http://dx.doi.org/10.1023/B:MPAG.0000007238.37788.2c}{\textit{Math. Phys. Anal. Geom.}} \textbf{6} (2003),
  301--348.

\bibitem{Whittaker-1937}
Whittaker E.T., A treatise on the analytical dynamics of particles and rigid
  bodies. With an introduction to the problem of three bodies, \href{http://dx.doi.org/10.1017/CBO9780511608797}{\textit{Cambridge
  Mathematical Library}}, Cambridge University Press, Cambridge, 1988.

\bibitem{Woodhouse-1975}
Woodhouse N.M.J., Killing tensors and the separation of the
  {H}amilton--{J}acobi equation, \href{http://dx.doi.org/10.1007/BF01609055}{\textit{Comm. Math. Phys.}} \textbf{44} (1975),
  9--38.

\end{thebibliography}
\end{document}